
\documentclass[reqno]{amsart}
\usepackage{amssymb}
\usepackage{hyperref}

\newtheorem{theorem}{Theorem}[section]
\newtheorem{proposition}[theorem]{Proposition}
\newtheorem{lemma}[theorem]{Lemma}
\newtheorem{corollary}[theorem]{Corollary}

\newtheorem{definition}[theorem]{Definition}

\theoremstyle{remark}
\newtheorem{remark}[theorem]{Remark}

\newtheorem*{note}{Note}

\numberwithin{equation}{section}



\begin{document}

\title[Bethe Ansatz for an open $q$-boson system]
{Completeness of the Bethe Ansatz for an open $q$-boson system with integrable boundary interactions}

\author{J.F.  van Diejen}

\address{
Instituto de Matem\'atica y F\'{\i}sica, Universidad de Talca,
Casilla 747, Talca, Chile}

\email{diejen@inst-mat.utalca.cl}

\author{E. Emsiz}

\address{
Facultad de Matem\'aticas, Pontificia Universidad Cat\'olica de Chile,
Casilla 306, Correo 22, Santiago, Chile}
\email{eemsiz@mat.uc.cl}

\author{I.N. Zurri\'an}

\address{
Facultad de Matem\'aticas, Pontificia Universidad Cat\'olica de Chile,
Casilla 306, Correo 22, Santiago, Chile}

\email{zurrian@famaf.unc.edu.ar}

\subjclass[2000]{Primary: 82B23; Secondary  33D52, 81R12, 81R50, 81T25}
\keywords{$q$-bosons, integrable boundary interactions, double affine Hecke algebra, Bethe Ansatz,  hyperoctahedral Hall-Littlewood polynomials}

\thanks{This work was supported in part by the {\em Fondo Nacional de Desarrollo
Cient\'{\i}fico y Tecnol\'ogico (FONDECYT)} Grants \# 1130226,   \# 1141114 and  \# 3160646.}

\date{September 2016}

\begin{abstract}
We employ a discrete integral-reflection representation
of the double affine Hecke algebra of type $C^\vee C$ at the critical level $\text{q}=1$,  to
endow the open finite $q$-boson system with integrable boundary interactions at the lattice ends. It is shown that
the Bethe Ansatz entails a complete basis of eigenfunctions for the commuting quantum integrals in terms of Macdonald's three-parameter hyperoctahedral Hall-Littlewood polynomials.
\end{abstract}

\maketitle



\section{Introduction}
The $q$-boson system \cite{bog-bul:q-deformed,bog-ize-kit:correlation} is an elementary quantum field model built of $q$-deformed oscillators (cf. e.g. \cite{maj:foundations,kli-sch:quantum})
placed on a finite periodic lattice $\mathbb{Z}_m=\mathbb{Z}/m\mathbb{Z}$.  It belongs to a privileged class of one-dimensional particle-conserving quantum field theories for which the $n$-particle Hamiltonian boils down to a (possibly discrete)
quantum integrable Schr\"odinger operator, cf. e.g. \cite{kau:exact,tha:exact,dav-gut:intertwining,kor-bog-ize:quantum,car-lan:loop,lan:second} (and references therein) for some more examples of such integrable quantum field models.
Indeed, in
\cite{kor:cylindric} it was observed that the $q$-boson Hamiltonian acts in the $n$-particle sector of the Fock space as a discrete difference operator that arose in \cite{die:diagonalization} 
as an integrable lattice discretization of the Schr\"odinger operator for the Lieb-Liniger
delta Bose gas on the circle \cite{lie-lin:exact,mat:many-body,dor:orthogonality,kor-bog-ize:quantum}. In both situations, the $n$-particle Bethe Ansatz eigenfunctions had been identified independently as Hall-Littlewood polynomials \cite{tsi:quantum,die:diagonalization,kor:cylindric}. 
Besides the model on the periodic lattice $\mathbb{Z}_m$, analogous versions of the $q$-boson system were considered on the infinite lattice $\mathbb{Z}$ \cite{die-ems:diagonalization} (cf. also \cite{rui:factorized}), on the semi-infinite lattice $\mathbb{N}$ \cite{die-ems:semi-infinite,whe-zin:refined,duv-pas:q-bosons}, and on the open (i.e. aperiodic) finite lattice  \cite{li-wan:exact,die-ems:orthogonality} 
\begin{equation}\label{Nm}
\mathbb{N}_m:= \{ 0,1,2,\ldots ,m\} \qquad (m>0).
\end{equation}
Meanwhile, it has become clear that the $q$-boson system admits a rich variety of generalizations that surfaced naturally  e.g. in  the context  of integrable stochastic  particle processes
\cite{sas-wad:exact,oco-pei:q-weighted,pov:integrability,bor-cor-pet-sas:spectral1,bor-cor-pet-sas:spectral2,tak:deformation} and in the framework of discrete harmonic analysis on Weyl chambers and Weyl alcoves \cite{die:plancherel,die-ems:discrete}.

The purpose of the present work is
to diagonalize an open $q$-boson system on $\mathbb{N}_m$ \eqref{Nm}
endowed with integrable two-parameter boundary interactions at each of the two endpoints $0$ and $m$. Our model is
governed by a (unital associative) $q$-boson field algebra that is deformed at the boundary sites. Specifically, the underlying $q$-boson field algebra is determined by
generators $\beta_l$, $\beta_l^*$ and $q^{N_l}$, $q^{-N_l}$ ($l\in \mathbb{N}_m$) that are ultralocal (i.e. commuting for distinct sites) while satisfying the relations
\begin{subequations}
 \begin{align}
\beta_l q^{N_l} &= q q^{N_l}\beta_l,\  \
q^{N_l} \beta_l^* =q \beta_l^*q^{N_l} ,\ q^{N_l}q^{-N_l}=1,\nonumber \\
 \beta_l\beta_l^*&=(1-c_-\delta_{l}q^{N_0})(1-c_+\delta_{m-l}q^{N_m}) [N_l+1]_q , \label{qbR:a} \\
  \ \beta_l\beta_l^* - q\beta_l^* \beta_l &=(1-c_-\delta_{l}q^{2N_m})(1-c_+\delta_{m-l}q^{2N_m}),\nonumber
\end{align}
where $q\in (-1,1)\setminus \{ 0\}$ and  $c_-,c_+\in (-1,1)$,
\begin{equation}
q^{kN_l}:= (q^{N_l})^k,\qquad [N_l+k]_q :=   \frac{1-q^kq^{N_l}}{1-q} \qquad (l\in\mathbb{N}_m, k\in\mathbb{Z}), \label{qbR:b}
\end{equation}
\end{subequations}
and $\delta_l$ refers to the Kronecker delta (i.e. $\delta_l=1$ if $l=0$ and $\delta_l=0$ otherwise).
The parameters $c_-$ and $c_+$, which deform the $q$-boson field algebra at the sites $l=0$ and $l=m$, are complemented by
two more coupling parameters $g_-, g_+\in \mathbb{R}$ regulating the interactions at these endpoints
via the Hamiltonian 
\begin{equation}\label{Hm}
H :=  g_-[N_0]_q+g_+[N_m]_q +\sum_{0\leq l<m}  \beta_{l+1}^* \beta_l +\beta_l^*\beta_{l+1}.
\end{equation}
For $c_-=c_+=0$ the integrability of $H$ \eqref{Hm} was shown  in \cite{li-wan:exact,die-ems:orthogonality} 
by means of Sklyanin's quantum inverse scattering formalism for open systems with reflecting boundaries \cite{skl:boundary}.  Moreover, for the semi-infinite $q$-boson system on the nonnegative integer lattice (i.e. $m=+\infty$), the integrability of the Hamiltonian in question was inferred for
general (one-sided) boundary parameters $g_-$ and $c_-$ 
\cite{die-ems:semi-infinite,whe-zin:refined}. In each of these cases, the Bethe Ansatz eigenfunctions derived  in \cite{die-ems:semi-infinite,whe-zin:refined,die-ems:orthogonality} turn out to be  given by hyperoctahedral Hall-Littlewood polynomials as introduced by Macdonald \cite{mac:spherical,mac:orthogonal,nel-ram:kostka,par:buildings}.

Below the results 
for the finite open $q$-boson system  will be generalized to the situation of boundary interactions governed by
the full four-parameter family of $c_-,c_+,g_-,g_+$. Rather than to employ quantum inverse scattering as in \cite{li-wan:exact,die-ems:orthogonality}, we recur instead to a representation of Cherednik's double affine Hecke algebra \cite{che:double,mac:affine}
of type $C^\vee C$ \cite{nou:macdonald,sah:nonsymmetric} at the critical level $\text{q}=1$.
The representation in question  is given explicitly in terms of discrete integral-reflection operators. It differs fundamentally from a previous $\text{q}\to 1$  degenerate double affine Hecke algebra representation
\cite{groe:multivariable} in that here no prior rescaling of the parameters and coordinate functions in terms of q is performed.
The particular Hecke-algebraic approach suiting our needs has its origin in the spectral analysis of
Gaudin's generalized Lieb-Liniger delta Bose gas models associated with the (affine) Weyl groups \cite{gau:boundary,gau:bethe,gut:integrable,hec-opd:yang,ems-opd-sto:periodic,ems-opd-sto:trigonometric} by means of integral-reflection operators that were first introduced by  Gutkin and Sutherland \cite{gut-sut:completely}.
The center of the double affine Hecke algebra provides us with the quantum integrals for $H$ \eqref{Hm}, which are subsequently diagonalized in terms of Macdonald's three-parameter hyperoctahedral Hall-Littlewood polynomials via a Bethe Ansatz. From the perspective of double affine Hecke algebras, this entails an extension
of the construction in \cite{die-ems:discrete} enabling to incorporate the important five-parameter master family of type $C^\vee C$. In this picture, our parameters $q$, $c_-$ and $g_-$
are associated with Hecke algebra generators corresponding to
the walls of the Weyl chamber and
parametrize the hyperoctahedral Hall-Littlewood polynomials, while the parameters $c_+$ and $g_+$
are associated with the Hecke algebra generator corresponding to
the affine wall of the Weyl alcove and enter the wave function only through the positions of the Bethe roots. 

The presentation is structured along the following lines. In Section \ref{sec2} the Hamiltonian $H$ \eqref{Hm} is implemented as a self-adjoint operator in the $q$-boson Fock space and its explicit action in the $n$-particle subspace is determined.  
Section \ref{sec3} formulates our main result: the diagonalization of the corresponding $n$-particle Hamiltonian by means of a complete basis of Bethe Ansatz eigenfunctions given by hyperoctahedral Hall-Littlewood polynomials evaluated at the Bethe roots. The bulk of the paper is devoted to the proof of this result within the framework of the double affine Hecke algebra of type $C^\vee C$ at the critical level $\text{q}=1$.
First, the defining properties of this double affine Hecke algebra are detailed in Section
\ref{sec4}.  Then, in Section \ref{sec5}, a concrete representation in terms of discrete integral-reflection operators
is derived by duality from the polynomial representation.
In Section \ref{sec6} the discrete integral-reflection operators are used to construct a Gutkin-Sutherland type propagation operator \cite{gut-sut:completely} that intertwines between the free boson wave functions and our interacting $q$-boson wave functions, respectively.
By computing the image of the free Laplacian on $\mathbb{Z}^n$ with respect to the action of this propagation operator, we
arrive in Section
\ref{sec7} at a quantum integrable deformation of the free Laplacian associated with the double affine Hecke algebra.
The $n$-particle  Hamiltonian for $q$-bosons with open-end boundary interactions is retrieved from this deformed Laplacian in Section
\ref{sec8}  upon symmetrization with respect to the action of the underlying affine hyperoctahedral group.
The propagation operator is then employed once more in Section
\ref{sec9} to show that the Bethe Ansatz eigenfunctions of our $q$-boson Hamiltonian are given by Macdonald's hyperoctahedral Hall-Littlewood polynomials. Finally, the
higher commuting quantum integrals stemming from (the center of) the double affine Hecke algebra at critical level permit to separate the eigenvalue spectrum. This rules out linear dependencies between the eigenfunctions and reduces the proof of
the completeness of the Bethe Ansatz to a straightforward count of the Bethe roots, which are obtained as the minima of a family of strictly convex Morse functions using the classic toolset developed by
Yang and Yang \cite{yan-yan:thermodynamics,gau:bethe,kor-bog-ize:quantum}.

Backup material for some technical subtleties involving our approach is supplied in three appendices at the end. Specifically, in Appendix \ref{appA}
we briefly confirm that  the Poincar\'e-Birkhoff-Witt property and the basic representation of
the double affine Hecke algebra of type $C^\vee C$ \cite{nou:macdonald,sah:nonsymmetric,mac:orthogonal} persist at the critical level $\text{q}=1$, cf. also Ref. \cite{obl:double} for a corresponding verification in the case of the type $A$ double affine Hecke algebra and Ref. \cite{geh:properties} for the verification covering most other types apart from $C^\vee C$. (More precisely, the latter work deals with those types for which the index of the root lattice inside the weight lattice is not equal to $1$.)
In Appendix \ref{appB} we verify certain intertwining relations satisfied by the fundamental propagation operator from Section \ref{sec6}; these relations lie at the basis of the computation that yields the explicit formula for the deformed Laplacian in Section \ref{sec7}.
Finally, in Appendix \ref{appC} we recall Macdonald's product formula for the generalized Poincar\'e series with distinct parameters from \cite{mac:poincare}, in the special case of a
stabilizer subgroup of the affine hyperoctahedral group. This product formula is used  in Section \ref{sec8} to retrieve the open-end $q$-boson Hamiltonian from the double affine Hecke algebra.

\begin{note}
Our notation distinguishes between the parameter $q$ stemming from the $q$-boson algebra and the parameter q associated with the double affine Hecke algebra. Throughout the presentation the value of the latter parameter is assumed to be fixed at the critical level
$\text{q}=1$ (unless explicitly indicated otherwise).
\end{note}

\section{$n$-Particle Hamiltonian}\label{sec2}
Let us define the Fock space 
\begin{subequations}
\begin{equation}\label{Fock:a}
\mathcal{F}:=\bigoplus_{n\geq 0} l^2(\Lambda_{n,m},\Delta_{n,m})
\end{equation}
 consisiting of all series $F=\sum_{n\geq 0} f_n $ with $f_n \in  l^2(\Lambda_{n,m},\Delta_{n,m})$ such that
 \begin{equation}\label{Fock:b}
 (F,F)_m := \sum_{n\geq 0} (f_n,f_n)_{n,m}<\infty .
 \end{equation}
 \end{subequations}
 Here  $l^2(\Lambda_{n,m},\Delta_{n,m})$ stands for the ($n$-particle) Hilbert space of functions
 $f:\Lambda_{n,m}\to\mathbb{C}$ on
\begin{equation}\label{dominant}
\Lambda_{n,m}:=\{\lambda=(\lambda_1,\dots, \lambda_n)\in\mathbb{Z}^n \mid m\geq \lambda_1\geq \lambda_2 \geq \cdots \geq \lambda_n\geq 0\}  ,
\end{equation}
equipped with the following inner product
\begin{subequations}
\begin{equation}\label{inner-product}
(f,g)_{n,m} :=
\sum_{\lambda\in \Lambda_{n,m}} f(\lambda)\overline{g(\lambda)}\Delta_{n,m}(\lambda) 
\end{equation}
associated with the weight function
\begin{equation}\label{weight-function}
\Delta_{n,m}(\lambda)
:=\frac{1}{(c_-;q)_{\text{m}_0(\lambda)}  (c_+;q)_{\text{m}_m(\lambda)}\prod_{l\in\mathbb{N}_m}[\text{m}_l(\lambda)]_q !},
\end{equation}
\end{subequations}
and subject to the additional
convention that $\Lambda_{0,m}:=\{ \emptyset\}$ and $\Delta_{0,m}(\emptyset):=1$ (so $l^2(\Lambda_{0,m},\Delta_{0,m})\cong \mathbb{C}$). The orthogonality measure $\Delta_{n,m}(\lambda)$ \eqref{weight-function}
factorizes in terms of $q$-factorials
\begin{equation*}
[k]_q !:=[k]_q [k-1]_q \cdots [2]_q [1]_q\quad \text{with}\quad [k]_q:=\frac{1-q^k}{1-q}
\end{equation*}
 of the (particle) multiplicities
 \begin{equation*}
\text{m}_l (\lambda) : = |\{ 1\leq j \leq n \mid \lambda_j=l \} | \qquad  (l\in\mathbb{N}_m) ,
\end{equation*}
and is moreover perturbed
at the end-points
by $q$-shifted factorials of the form
\begin{equation*}
(c_\pm;q)_k:=(1-c_\pm)(1-c_\pm q)\cdots (1-c_\pm q^{k-1}),
\end{equation*}
where it is assumed that empty products are equal to $1$ (so $[0]_q !=(c_\pm;q)_0=1$).

We introduce the following actions of the $q$-boson field generators $\beta_l$, $\beta_l^*$ and $q^{\pm N_l}$ ($l\in\mathbb{N}_m$) on
$f\in l^2(\Lambda_{n,m},\Delta_{n,m})\subset \mathcal{F}_m$: 
\begin{subequations}
\begin{equation}\label{qbA:a}
(\beta_l f)(\lambda):=
  f(\beta_l^*\lambda)
\end{equation}
for $\lambda\in \Lambda_{n-1,m}$ if $n>0$ and $\beta_l f:=0$ if $n=0$,
 \begin{align}\label{qbA:b}
(\beta^*_l &  f)(\lambda):= \\
&\begin{cases}
[\text{m}_l(\lambda)]_q (1-c_-\delta_{l}q^{\text{m}_0(\lambda)-1})(1-c_+\delta_{m-l}q^{\text{m}_m(\lambda)-1})f(\beta_l\lambda)&\text{if}\ \text{m}_l(\lambda)>0 \\
0&\text{otherwise}
\end{cases} \nonumber
 \end{align}
for $\lambda\in \Lambda_{n+1,m}$, and
 \begin{equation}\label{qbA:c}
(q^{\pm N_l} f)(\lambda):=q^{\pm \text{m}_l(\lambda)}f(\lambda) 
\end{equation}
\end{subequations}
for $\lambda\in \Lambda_{n,m}$.
Here
$\beta^*_l\lambda\in\Lambda_{n+1,m}$ and $\beta_l\lambda\in \Lambda_{n-1,m}$ denote the partitions obtained from $\lambda\in\Lambda_{n,m}$ by inserting or deleting a part of size $l$, respectively (where  in the latter case it is assumed that $\text{m}_l(\lambda)>0$). 

It is readily verified that the operators in Eqs. \eqref{qbA:a}--\eqref{qbA:c} satisfy the ultralocal $q$-boson algebra relations in Eqs. \eqref{qbR:a}, \eqref{qbR:b}. Hence, we end up with a representation of the $q$-boson algebra on the dense domain $\mathcal{D}_m\subset\mathcal{F}_m$
 \eqref{Fock:a}, \eqref{Fock:b} of \emph{terminating} series $F=\sum_{n\geq 0} f_n $ with $f_n \in  l^2(\Lambda_{n,m},\Delta_{n,m})$.
In this representation, the parts of $\lambda=(\lambda_1,\ldots ,\lambda_n)\in\Lambda_{n,m}$ are interpreted as the positions of $n$ particles---$q$-bosons---on $\mathbb{N}_m$. The operators $\beta_l: l^2(\Lambda_{n,m},\Delta_{n,m})\to  l^2(\Lambda_{n-1,m},\Delta_{n,m})$ and  $\beta_l^*: l^2(\Lambda_{n,m},\Delta_{n,m})\to  l^2(\Lambda_{n+1,m},\Delta_{n,m})$ annihilate and create a $q$-boson at the site $l\in\mathbb{N}_m$, respectively, while $q^{N_l}$ counts the number of $q$-bosons at this site as a power of $q$.
The representation in question is unitary (i.e. it preserves the underlying $\ast$-structure):
\begin{align*}
(\beta_l f,g)_{n,m}&=(f,\beta_l^\ast g)_{n+1,m}\qquad ( f\in l^2(\Lambda_{n+1,m},\Delta_{n+1,m}),\ g\in\ l^2(\Lambda_{n,m},\Delta_{n,m})) ,\\
(q^{\pm N_l} f, g)_{n,m}&=(f,q^{\pm N_l} g)_{n,m}\qquad (f,g\in l^2(\Lambda_{n,m},\Delta_{n,m})) ,
\end{align*}
whereas it is also immediate from the definitions that the $q$-boson  annihilation- and creation operators, as well as  the $q$-deformed number operator, at the site $l\in\mathbb{N
}_m$ are bounded on
$\mathcal{F}_m$:
\begin{align*}
( \beta_l f,\beta_l f )_{n-1,m}&\leq  \frac{(1+|c_-|\delta_l)(1+|c_+|\delta_{m-l})}{1-q} (  f,f)_{n,m}  ,\nonumber\\
\label{bounded} ( \beta_l^* f,\beta_l^*f)_{n+1,m}&\leq \frac{(1+|c_-|\delta_l)(1+|c_+|\delta_{m-l})}{1-q}  (  f,f)_{n,m},\\
( q^{N_l} f,q^{N_l} f )_{n,m}&\leq  ( f,f )_{n,m} , \nonumber
\end{align*}
for any $f\in  l^2(\Lambda_{n,m},\Delta_{n,m}) $.
The upshot is the Hamiltonian $H$ \eqref{Hm} constitutes a bounded self-adjoint on $\mathcal{F}_m$ that preserves the $n$-particle subspace $l^2(\Lambda_{n,m},\Delta_{n,m})$.
From the actions of the $q$-boson generators in Eqs. \eqref{qbA:a}--\eqref{qbA:c}, one readily deduces the following explicit action of $H$ in $l^2(\Lambda_{n,m},\Delta_{n,m})$ (cf. \cite[Prp. 3.1]{die-ems:semi-infinite}
and \cite[Prp. 6.3]{die-ems:orthogonality}).

\begin{proposition}[$n$-Particle Hamiltonian]\label{Hnm:prp} For any
 $f\in l^2(\Lambda_{n,m},\Delta_{n,m})$ and $\lambda\in\Lambda_{n,m}$, one has that
\begin{align}
 (H f)(\lambda)
=& \Bigl( g_-[\emph{m}_0(\lambda)]_q +
g_+[\emph{m}_m(\lambda)]_q \Bigr)
f(\lambda) \\
+&\sum_{\substack{1\leq j \leq n\\ \lambda+e_j\in\Lambda_{n,m}}}
(1-c_-\delta_{\lambda_j} q^{\emph{m}_0(\lambda)-1})
 [\emph{m}_{\lambda_j}(\lambda)]_q f(\lambda+e_j) \nonumber \\
+&\sum_{\substack{1\leq j \leq n\\ \lambda-e_j\in\Lambda_{n,m}}}
(1-c_+\delta_{m-\lambda_j} q^{\emph{m}_m(\lambda)-1})
 [\emph{m}_{\lambda_j}(\lambda)]_q  f(\lambda-e_j) , \nonumber \label{Hnm} 
\end{align}
where $e_1,\ldots ,e_n$ denote the unit vectors of the standard basis in $\mathbb{Z}^n$.
\end{proposition}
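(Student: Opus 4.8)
The plan is to establish the formula by direct substitution: one inserts the explicit actions \eqref{qbA:a}--\eqref{qbA:c} of the $q$-boson generators into the definition \eqref{Hm} of $H$ and then translates between the site operations $\beta_{l+1}^*\beta_l$ and $\beta_l^*\beta_{l+1}$, which move a single $q$-boson one step along $\mathbb{N}_m$, and the coordinate shifts $\lambda\mapsto\lambda\pm e_j$ appearing in the statement. This mirrors the computations of \cite[Prp.~3.1]{die-ems:semi-infinite} and \cite[Prp.~6.3]{die-ems:orthogonality}, to which Proposition \ref{Hnm:prp} reduces in the appropriate limits.

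First I would dispose of the diagonal part. By \eqref{qbA:c} the operator $q^{N_l}$ acts on $l^2(\Lambda_{n,m},\Delta_{n,m})$ as multiplication by $q^{\text{m}_l(\lambda)}$, hence $[N_l]_q=(1-q^{N_l})/(1-q)$ acts as multiplication by $[\text{m}_l(\lambda)]_q$; this produces at once the term $\bigl(g_-[\text{m}_0(\lambda)]_q+g_+[\text{m}_m(\lambda)]_q\bigr)f(\lambda)$.

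Next, for each fixed $l\in\{0,\ldots,m-1\}$ I would compute $(\beta_{l+1}^*\beta_l f)(\lambda)$ by applying \eqref{qbA:a} followed by \eqref{qbA:b}, which gives $[\text{m}_{l+1}(\lambda)]_q\,(1-c_-\delta_{l+1}q^{\text{m}_0(\lambda)-1})(1-c_+\delta_{m-l-1}q^{\text{m}_m(\lambda)-1})\,f(\beta_l^*\beta_{l+1}\lambda)$ when $\text{m}_{l+1}(\lambda)>0$, and $0$ otherwise. The key combinatorial point is that $\beta_l^*\beta_{l+1}\lambda$ is the partition obtained from $\lambda$ by lowering one part of size $l+1$ to size $l$; writing $j=j(l,\lambda)$ for the largest index with $\lambda_j=l+1$ one has $\beta_l^*\beta_{l+1}\lambda=\lambda-e_j$ with $\lambda-e_j\in\Lambda_{n,m}$, and as $l$ runs over those values in $\{0,\ldots,m-1\}$ for which $\text{m}_{l+1}(\lambda)>0$, this index $j$ runs exactly once over all $j\in\{1,\ldots,n\}$ with $\lambda-e_j\in\Lambda_{n,m}$. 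Since $\lambda_j=l+1\geq 1$ forces $\delta_{\lambda_j}=0$, the coefficient collapses to $(1-c_+\delta_{m-\lambda_j}q^{\text{m}_m(\lambda)-1})[\text{m}_{\lambda_j}(\lambda)]_q$, which is precisely the third line of the proposition. Symmetrically, $\sum_{0\leq l<m}\beta_l^*\beta_{l+1}$ produces the $f(\lambda+e_j)$ contributions: here $\beta_{l+1}^*\beta_l\lambda=\lambda+e_j$ with $j$ the smallest index such that $\lambda_j=l$, the requirement $\lambda+e_j\in\Lambda_{n,m}$ forces $\lambda_j\leq m-1$ and hence $\delta_{m-\lambda_j}=0$, and the coefficient reduces to $(1-c_-\delta_{\lambda_j}q^{\text{m}_0(\lambda)-1})[\text{m}_{\lambda_j}(\lambda)]_q$, the second line. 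Adding the diagonal term and the two hopping sums over $0\leq l<m$ then yields the asserted identity.

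I do not expect a genuine obstacle; the computation is elementary. The only place where care is needed---and the only spot where a miscount could creep in---is the bijective matching between the lattice-site index $l$ and the coordinate index $j$ of $\lambda$ described above, together with the observation that each boundary Kronecker factor degenerates exactly as required: $1-c_-\delta_{\lambda_j}q^{\text{m}_0(\lambda)-1}$ is trivial in the down-hopping terms because there $\lambda_j\geq 1$, while $1-c_+\delta_{m-\lambda_j}q^{\text{m}_m(\lambda)-1}$ is trivial in the up-hopping terms because there $\lambda_j\leq m-1$.
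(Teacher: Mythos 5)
Your proposal is correct and is precisely the computation the paper has in mind: the paper states that the formula is ``readily deduced'' from the generator actions \eqref{qbA:a}--\eqref{qbA:c} (citing the analogous computations in \cite{die-ems:semi-infinite,die-ems:orthogonality}), and your direct substitution, including the bijective matching of the site index $l$ with the coordinate index $j$ and the degeneration of the boundary Kronecker factors, fills in exactly that routine verification.
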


\begin{remark}\label{Hnm-q=0:rem}
The actions of $q$-boson generators $\beta_l$, $\beta_l^*$ and $q^{N_l}$ in Eqs. \eqref{qbA:a}--\eqref{qbA:c} and (thus) that of the $n$-particle Hamiltonian in Proposition \ref{Hnm:prp}  are
polynomial in $q$. For $q\to 0$ the $n$-particle Hamiltonian degenerates 
to a self-adjoint Laplacian of the form
\begin{align*}
& (H f)(\lambda)
= \Bigl( g_-\delta_{\lambda_n} +
g_+\delta_{m-\lambda_{1}}\Bigr)
f(\lambda)  +\\
&\sum_{\substack{1\leq j \leq n\\ \lambda+e_j\in\Lambda_{n,m}}}
(1-c_-\delta_{\lambda_j} )^{\delta_{n-j}}
f(\lambda+e_j) 
+\sum_{\substack{1\leq j \leq n\\ \lambda-e_j\in\Lambda_{n,m}}}
(1-c_+\delta_{m-\lambda_j} )^{\delta_{j-1}}
 f(\lambda-e_j) \nonumber
\end{align*} in
$l^2(\Lambda_{n,m},\Delta_{n,m})$ with
$\Delta_{n,m}(\lambda)=(1-c_-)^{-\delta_{\lambda_n}}(1-c_+)^{-\delta_{m-\lambda_1}}$. This discrete Laplacian models $n$ strongly correlated bosons on the finite open lattice $\mathbb{N}_m$ \eqref{Nm} with boundary interactions at the end-points controlled
by the parameters $g_\pm$ and $c_\pm$.  A corresponding system of strongly correlated bosons on the finite periodic lattice $\mathbb{Z}_m$---commonly referred to as the {\em Phase Model}---was studied in detail in
Refs. \cite{bog-ize-kit:correlation,bog:boxed,kor-stro:slnk}.
\end{remark}

\section{Main result: diagonalization}\label{sec3}
Since the dimension 
\begin{equation}
\dim \left( l^2(\Lambda_{n,m},\Delta_{n,m})\right)= \frac{(m+n)!}{m! \, n!}
\end{equation}
of the $n$-particle subspace is finite,  the \emph{existence} of an orthogonal eigenbasis diagonalizing the $n$-particle Hamiltonian in Proposition \ref{Hnm:prp} is immediate from the self-adjointness.  Here we provide an
 \emph{explicit} eigenbasis for the operator in question in terms of Macdonald's hyperoctahedral Hall-Littlewood polynomials (associated with the root system $BC_n$) \cite[\S 10]{mac:orthogonal}:
\begin{subequations}
\begin{align}\label{HLp}
P_\lambda (\boldsymbol{\xi} ;q;a,\hat{a})& :=   \\
 \sum_{\substack{ \sigma\in S_n \\ \epsilon\in \{ 1,-1\}^n}}  & C(\epsilon_1 \xi_{\sigma_1},\ldots , \epsilon_n \xi_{\sigma_n};q,a,\hat{a})
\exp (i\epsilon_1 \xi_{\sigma_1}\lambda_1+\cdots +i \epsilon_n \xi_{\sigma_n} \lambda_n)  \nonumber
\end{align}
with $\boldsymbol{\xi}=(\xi_1,\ldots ,\xi_n)$ and $\lambda\in\Lambda_{n,m}$. The summation is meant over all permutations $\sigma= { \bigl( \begin{smallmatrix}1& 2& \cdots & n \\
 \sigma_1&\sigma_2&\cdots & \sigma_n
 \end{smallmatrix}\bigr)}$ of the symmetric group $S_n$ and all sign configurations
$\epsilon=(\epsilon_1,\ldots,\epsilon_n)\in \{ 1,-1\}^n$, and the expansion coefficients are given explicitly by
\begin{eqnarray}\label{Cp}
\lefteqn{C(\xi_1,\ldots ,\xi_n;q;a,\hat{a}) :=
\prod_{1\leq j\leq n} \frac{(1-a e^{-i\xi_j})( 1-\hat{a} e^{-i\xi_j})}{1-e^{-2i\xi_j}}} && \\
&& \times \prod_{1\leq j<k \leq n} \left(\frac{1-q e^{-i(\xi_{j}-\xi_k)}}{1-e^{-i(\xi_{j}-\xi_k)}}\right)\left(  \frac{1-q e^{-i(\xi_{j}+\xi_k)}}{1-e^{-i(\xi_{j}+\xi_k)}} \right)  .\nonumber
\end{eqnarray}
\end{subequations}

For $\mu\in\Lambda_{n,m}$ \eqref{dominant} and
\begin{equation}\label{domain}
q, a_+,a_-,\hat{a}_+,\hat{a}_- \in (-1,1) ,
\end{equation}
let $\psi_\mu:\Lambda_{n,m}\to\mathbb{C}$ denote the hyperoctahedral Hall-Littlewood function of the form
\begin{equation}\label{HLf}
\psi_\mu (\lambda):= P_\lambda (\boldsymbol{\xi}_\mu ;q;a_-,\hat{a}_- )\qquad (\lambda\in\Lambda_{n,m}),
\end{equation}
where  $\boldsymbol{\xi}_\mu \in\mathbb{R}^n$ represents the unique global minimum  of the semibounded strictly convex Morse function $V^{}_\mu:\mathbb{R}^n\to \mathbb{R}$ given by
\begin{subequations}
\begin{align}\label{q-boson-morse-a}
&V^{}_\mu(\boldsymbol{\xi}):= \sum_{1\le j < k \le n }
\left(
    \int_0^{\xi_j+\xi_k} v_q(u)\text{d}u+   \int_0^{\xi_j-\xi_k} v_q(u)\text{d}u
\right) +  \\
&
\sum_{1\leq j\leq n}  \left(
m\xi_j^2-2\pi (\rho_j+\mu_j)\xi_j 
+   \int^{\xi_j}_0 \bigl(v_{a_-}(u)+v_{\hat{a}_-}(u)+v_{a_+}(u)+v_{\hat{a}_+}(u)\bigr)\text{d}u\right)  ,\nonumber
\end{align}
with $\rho_j:=n+1-j$ ($j=1,\ldots ,n$) and 
\begin{equation}\label{q-boson-morse-b}
v_a(\xi ) := 
\int_0^\xi \frac{ (1-a^2)\ \text{d}u}{1-2a\cos(u)+a^2}
=
i \log
\biggl( \frac{1- ae^{i\xi}}{e^{i\xi} - a }  \biggr) \qquad (-1<a<1).
\end{equation}
\end{subequations}

\begin{theorem}[Diagonalization]\label{diagonal:thm}
For  parameter values belonging to the domain in Eq. \eqref{domain},
the hyperoctahedral Hall-Littlewood functions $\psi_{\mu}$, $\mu\in\Lambda_{n,m}$ are $n$-particle eigenfunctions of the $q$-boson Hamiltonian $H$ \eqref{Hm}:
\begin{subequations}
\begin{equation}\label{EV-eq:a}
H \psi_\mu = E(\boldsymbol{\xi}_\mu) \psi_\mu\quad \text{with}\quad E(\boldsymbol{\xi}):=2\sum_{1\leq j\leq n} \cos (\xi_j) ,
\end{equation}
provided
\begin{equation}\label{EV-eq:b}
c_\pm =a_\pm\hat{a}_\pm \quad \text{and}\quad g_\pm = a_\pm +\hat{a}_\pm .
\end{equation}
\end{subequations}

Moreover, for \emph{nonvanishing} parameter values $q, a_+,a_-,\hat{a}_+,\hat{a}_-$ belonging to the domain in question these eigenfunctions are complete in the sense that they constitute a \emph{basis} for
$ l^2(\Lambda_{n,m},\Delta_{n,m})$.
\end{theorem}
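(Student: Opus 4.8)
The plan is to realize the $n$-particle Hamiltonian of Proposition~\ref{Hnm:prp} inside a representation of Cherednik's double affine Hecke algebra of type $C^\vee C$ at the critical level $\text{q}=1$, and to transport the trivially diagonalized free Laplacian on $\mathbb{Z}^n$ over to $H$ by means of a Gutkin--Sutherland propagation operator. First I would set up a discrete integral-reflection representation of this double affine Hecke algebra on functions on $\mathbb{Z}^n$, dual to the polynomial representation, in which the affine hyperoctahedral group $W$ acts alongside the Hecke generators attached to the walls and to the affine wall of the Weyl alcove. Next I would build a propagation operator $\Gamma$ out of these integral-reflection operators and verify the requisite intertwining relations, so that $\Gamma$ carries the plane waves $\lambda\mapsto e^{i\langle\boldsymbol{\xi},\lambda\rangle}$ to wave functions whose coefficient structure is exactly the $C$-function of Eq.~\eqref{Cp}, with the parameter dictionary $c_\pm=a_\pm\hat{a}_\pm$, $g_\pm=a_\pm+\hat{a}_\pm$ forced by the way these parameters enter the Hecke relations at the two ends. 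Finally, evaluating $\Gamma$ on the free Laplacian and then symmetrizing over $W$ --- here Macdonald's product formula for the generalized Poincar\'e series of the alcove-wall stabilizer subgroups (Appendix~\ref{appC}) is what collapses the resulting sum --- should reproduce precisely the operator of Proposition~\ref{Hnm:prp}.

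Granting this, the eigenfunction assertion becomes the Bethe Ansatz. The $W$-symmetrized image $\Gamma\, e^{i\langle\boldsymbol{\xi},\cdot\rangle}$ is a genuine eigenfunction of $H$ for the eigenvalue $E(\boldsymbol{\xi})=2\sum_{1\le j\le n}\cos\xi_j$ precisely when $\boldsymbol{\xi}$ satisfies the Bethe equations that quantize the momenta on the finite alcove $\Lambda_{n,m}$, and I would then identify those Bethe equations with the critical-point equations $\nabla V_\mu(\boldsymbol{\xi})=0$ of Eq.~\eqref{q-boson-morse-a}: the bulk terms built from $v_q$ arise from two-body scattering across the $A_{n-1}$- and $BC_n$-type reflections, the boundary terms $v_{a_\pm},v_{\hat{a}_\pm}$ from the wall and affine-wall factors, and the terms $m\xi_j^2-2\pi(\rho_j+\mu_j)\xi_j$ from the lattice length $m$ together with the branch of the logarithm selected by $\mu$. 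Since the Hessian of $V_\mu$ is manifestly positive definite, $V_\mu$ possesses a unique global minimum $\boldsymbol{\xi}_\mu$, so $\psi_\mu$ of Eq.~\eqref{HLf} is well defined; it then remains to note that the $C$-function does not vanish identically at $\boldsymbol{\xi}_\mu$, whence $\psi_\mu\neq 0$.

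For the basis property I would bring in the center of the double affine Hecke algebra, which at critical level supplies a commutative algebra of quantum integrals $H=H_1,H_2,\dots,H_n$ diagonalized by the very same construction, with joint eigenvalue on $\psi_\mu$ an explicit symmetric function of $\cos\xi_{\mu,1},\dots,\cos\xi_{\mu,n}$. Two observations then finish the argument. First, the assignment $\mu\mapsto\boldsymbol{\xi}_\mu$ is injective: the strictly decreasing ``effective momenta'' $\rho_j+\mu_j$ confine each $\boldsymbol{\xi}_\mu$ to a fixed fundamental domain for $W$, inside which distinct choices of $\mu$ produce distinct minima --- a consequence of the strict monotonicity in $\mu$ of the minimizer of $V_\mu$, itself a by-product of the Yang--Yang convexity estimates --- so the joint spectrum separates the $\psi_\mu$, and joint eigenvectors belonging to pairwise distinct joint eigenvalues are linearly independent. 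Second, the number of eigenfunctions thus obtained equals $|\Lambda_{n,m}|=\tfrac{(m+n)!}{m!\,n!}=\dim l^2(\Lambda_{n,m},\Delta_{n,m})$; this is where nonvanishing of $q,a_\pm,\hat{a}_\pm$ enters, keeping the count exact and the weight function and Morse analysis nondegenerate. A linearly independent family of the correct cardinality is a basis.

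The step I expect to be the main obstacle is the construction and analysis of the integral-reflection representation of the $C^\vee C$ double affine Hecke algebra at $\text{q}=1$: confirming first (Appendix~\ref{appA}) that the Poincar\'e--Birkhoff--Witt property and the basic representation survive the degeneration $\text{q}\to 1$, then verifying the intertwining relations for $\Gamma$, and above all checking that the $W$-symmetrized deformed Laplacian reproduces the Hamiltonian of Proposition~\ref{Hnm:prp} \emph{exactly} rather than merely up to corrections --- the stage at which the five $C^\vee C$ parameters must be bookkept with care and where the Poincar\'e-series product formula is indispensable. By comparison the completeness count is comparatively soft, its one delicate ingredient being the injectivity of $\mu\mapsto\boldsymbol{\xi}_\mu$, for which the strict convexity of the Morse functions --- the classical Yang--Yang machinery --- does the decisive work.
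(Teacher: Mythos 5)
Your proposal is correct and follows essentially the same route as the paper: the integral-reflection representation of the $C^\vee C$ double affine Hecke algebra at critical level, the Gutkin--Sutherland propagation operator with its intertwining relations, the $W$-invariant reduction via Macdonald's Poincar\'e-series product formula to recover the Hamiltonian of Proposition~\ref{Hnm:prp}, the identification of the Bethe equations with the critical equations of the strictly convex Yang--Yang functions $V_\mu$, and completeness via separation of the joint spectrum of the commuting integrals plus the count $|\Lambda_{n,m}|=\frac{(m+n)!}{m!\,n!}$. This is the paper's argument in outline, so no further comparison is needed.
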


This theorem will arise below as a consequence of a much stronger statement in which all commuting quantum integrals for our open-end $q$-boson model are diagonalized simultaneously.
By invoking the explicit action of $H$ \eqref{Hm} in the $n$-particle subspace given by Proposition \ref{Hnm:prp}, we can reformulate the eigenvalue equation in Theorem \ref{diagonal:thm} as an
affine Pieri rule for the hyperoctahedral Hall-Littlewood polynomials.

\begin{corollary}[Affine Pieri Rule]\label{APR:cor}
For parameters belonging to the domain in Eq. \eqref{domain}, Macdonald's hyperoctahedral Hall-Littlewood polynomials
$P_\lambda (\boldsymbol{\xi};q;a_-,\hat{a}_-)$ with $\lambda\in\Lambda_{n,m}$ satisfy the following affine Pieri formula
\begin{align}\label{APR}
P_\lambda(\boldsymbol{\xi} ;q;a_-,\hat{a}_-)&  \sum_{1\leq j\leq n} (e^{i\xi_j}+e^{-i\xi_j})= \\
& \Bigl( (a_-+\hat{a}_-)[\emph{m}_0(\lambda)]_q +
 (a_+ +\hat{a}_+)[\emph{m}_m(\lambda)]_q \Bigr)
P_\lambda(\boldsymbol{\xi} ;q;a_-,\hat{a}_-) \nonumber \\
+&\sum_{\substack{1\leq j \leq n\\ \lambda+e_j\in\Lambda_{n,m}}}
(1-a_-\hat{a}_-\delta_{\lambda_j} q^{\emph{m}_0(\lambda)-1})
 [\emph{m}_{\lambda_j}(\lambda)]_q P_{\lambda+e_j}(\boldsymbol{\xi} ;q;a_-,\hat{a}_-)  \nonumber \\
+&\sum_{\substack{1\leq j \leq n\\ \lambda-e_j\in\Lambda_{n,m}}}
(1-a_+\hat{a}_+\delta_{m-\lambda_j} q^{\emph{m}_m(\lambda)-1})
 [\emph{m}_{\lambda_j}(\lambda)]_q  P_{\lambda-e_j} (\boldsymbol{\xi} ;q;a_-,\hat{a}_-) \nonumber
\end{align}
at $\boldsymbol{\xi}=\boldsymbol{\xi}_\mu$, $\mu\in\Lambda_{n,m}$.
\end{corollary}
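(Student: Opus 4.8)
The plan is to obtain the affine Pieri formula as a direct transcription of the eigenvalue equation of Theorem~\ref{diagonal:thm}, using the explicit action of the Hamiltonian furnished by Proposition~\ref{Hnm:prp}. To this end, fix $\mu\in\Lambda_{n,m}$ and recall that the hyperoctahedral Hall-Littlewood function $\psi_\mu$ of Eq.~\eqref{HLf} is by construction the restriction $\lambda\mapsto P_\lambda(\boldsymbol{\xi}_\mu;q;a_-,\hat a_-)$ of the polynomial~\eqref{HLp} to the dominant cone $\Lambda_{n,m}$, evaluated at the Bethe root $\boldsymbol{\xi}_\mu=(\xi_{\mu,1},\dots,\xi_{\mu,n})$. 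First I would invoke the first assertion of Theorem~\ref{diagonal:thm}, namely the eigenvalue equation~\eqref{EV-eq:a}, $H\psi_\mu=E(\boldsymbol{\xi}_\mu)\psi_\mu$ with $E(\boldsymbol{\xi})=2\sum_{1\le j\le n}\cos(\xi_j)$, which holds under the parameter dictionary $c_\pm=a_\pm\hat a_\pm$ and $g_\pm=a_\pm+\hat a_\pm$ of Eq.~\eqref{EV-eq:b}. The only preliminary manipulation needed is to rewrite the eigenvalue in exponential form, $E(\boldsymbol{\xi}_\mu)=\sum_{1\le j\le n}(e^{i\xi_{\mu,j}}+e^{-i\xi_{\mu,j}})$, so that the left-hand side of~\eqref{APR} specialized at $\boldsymbol{\xi}=\boldsymbol{\xi}_\mu$ equals $E(\boldsymbol{\xi}_\mu)\,P_\lambda(\boldsymbol{\xi}_\mu;q;a_-,\hat a_-)$.

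Next I would evaluate both sides of $H\psi_\mu=E(\boldsymbol{\xi}_\mu)\psi_\mu$ at an arbitrary point $\lambda\in\Lambda_{n,m}$ by inserting the formula for $(Hf)(\lambda)$ from Proposition~\ref{Hnm:prp} with $f=\psi_\mu$. Term by term: the diagonal contribution $\bigl(g_-[\text{m}_0(\lambda)]_q+g_+[\text{m}_m(\lambda)]_q\bigr)f(\lambda)$ becomes, via $g_\pm=a_\pm+\hat a_\pm$, the first summand on the right-hand side of~\eqref{APR}; the hopping sum over $j$ with $\lambda+e_j\in\Lambda_{n,m}$, carrying the coefficient $(1-c_-\delta_{\lambda_j}q^{\text{m}_0(\lambda)-1})[\text{m}_{\lambda_j}(\lambda)]_q$, becomes the second summand upon substituting $c_-=a_-\hat a_-$ and $f(\lambda+e_j)=P_{\lambda+e_j}(\boldsymbol{\xi}_\mu;q;a_-,\hat a_-)$; and likewise the hopping sum over $j$ with $\lambda-e_j\in\Lambda_{n,m}$ reproduces the third summand upon using $c_+=a_+\hat a_+$ and $f(\lambda-e_j)=P_{\lambda-e_j}(\boldsymbol{\xi}_\mu;q;a_-,\hat a_-)$. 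Equating the two sides and letting $\mu$ range over $\Lambda_{n,m}$ yields the claimed identity~\eqref{APR} at $\boldsymbol{\xi}=\boldsymbol{\xi}_\mu$.

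Since the corollary is a verbatim reformulation of Theorem~\ref{diagonal:thm}, there is no genuine obstacle here: the entire argument amounts to the bookkeeping of matching the two coupling-parameter sets $(c_\pm,g_\pm)$ and $(a_\pm,\hat a_\pm)$ and of passing between the trigonometric and the exponential presentation of the spectral variable. The substantive content --- that the evaluation points $\boldsymbol{\xi}_\mu$ are precisely the global minima of the strictly convex Morse functions $V_\mu$ of Eqs.~\eqref{q-boson-morse-a},~\eqref{q-boson-morse-b}, and that the resulting functions $\psi_\mu$ form an eigenbasis of $l^2(\Lambda_{n,m},\Delta_{n,m})$ --- is carried by Theorem~\ref{diagonal:thm} itself, whose proof occupies Sections~\ref{sec4}--\ref{sec9}.
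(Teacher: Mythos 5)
Your proposal is correct and coincides with the paper's own derivation: the corollary is obtained there exactly as you describe, by evaluating the eigenvalue equation $H\psi_\mu=E(\boldsymbol{\xi}_\mu)\psi_\mu$ of Theorem \ref{diagonal:thm} pointwise on $\Lambda_{n,m}$ via the explicit action in Proposition \ref{Hnm:prp}, translating $(c_\pm,g_\pm)$ into $(a_\pm,\hat a_\pm)$ through Eq. \eqref{EV-eq:b}, and writing $2\cos(\xi_j)=e^{i\xi_j}+e^{-i\xi_j}$. No further comment is needed.
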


For $\hat{a}_-=\hat{a}_+=0$, Corollary \ref{APR:cor} reproduces an affine Pieri rule found in \cite[Sec. 11.4]{die-ems:orthogonality}.
The
conventional (nonaffine) Pieri rule for Macdonald's hyperoctahedral Hall-Littlewood polynomials \cite[App. A]{die-ems:semi-infinite}, which is valid without quantization restrictions on the values of the polynomial spectral variable $\boldsymbol{\xi}\in\mathbb{R}^n$, is of the form in Eq. \eqref{APR} with  $m\to +\infty$ (so the dependence on $a_+,\hat{a}_+$  drops out).

\begin{remark}
It is expected that the hyperoctahedral Hall-Littlewood functions $\psi_\mu$, $\mu\in\Lambda_{n,m}$ in fact constitute an \emph{orthogonal} eigenbasis for $H$ \eqref{Hm} in $ l^2(\Lambda_{n,m},\Delta_{n,m})$ for all parameter values in the domain \eqref{domain}. When $\hat{a}_-=\hat{a}_+= 0$ (so $g_-= a_-$, $g_+= a_+$ and $c_-= c_+= 0$), this orthogonality was recently confirmed in \cite[Sec. 11.4]{die-ems:orthogonality}.
\end{remark}

\begin{remark}\label{normalization:rem}
The normalization of the  hyperoctahedral Hall-Littlewood polynomials $P_\lambda (\boldsymbol{\xi} ;q;a,\hat{a})$ \eqref{HLp}, \eqref{Cp} is such that the coefficient of the leading monomial  $e^{i\lambda_1\xi_1+\cdots +i\lambda_n\xi_n}$ is given by
$
(a\hat{a};q)_{\text{m}_0(\lambda)}\prod_{l\in\mathbb{N}_m}   [\text{m}_l(\lambda)]_q ! .
$
In particular, for $\lambda=0^n=(0,\ldots ,0)\in\Lambda_{n,m}$, we have that
$
P_{0^n}(\boldsymbol{\xi};q;a,\hat{a})=(a\hat{a};q)_{n}\ [n]_q ! \neq 0.
$
\end{remark}

\begin{remark}\label{critical-point:rem} Since $V_\mu (\boldsymbol{\xi})$   \eqref{q-boson-morse-a}, \eqref{q-boson-morse-b}  is smooth and
$V_\mu(\boldsymbol{\xi})\to +\infty$
for $\boldsymbol{\xi}\to\infty$, 
the existence of a global minimum is guaranteed. The uniqueness of this minimum follows by convexity. Indeed, the Hessian
\begin{align}\label{Hesse}
&H_{j,k}:=\partial_{\xi_j}\partial_{\xi_k} V_\mu (\boldsymbol{\xi}) \\
&={\footnotesize
\begin{cases}
2m + v_{a_-}'(\xi_j) + v_{\hat{a}_-}'(\xi_j)+v_{a_+}'(\xi_j) + v_{\hat{a}_+}'(\xi_j) + \sum_{l\neq j}\bigl (v_q'(\xi_j+\xi_l) +v_q'(\xi_j-\xi_l) \bigr) & \text{if $ k=j$}\\
v_q'(\xi_j+\xi_k) -v_q'(\xi_j-\xi_k) & \text{if $k\neq j$}\\
\end{cases} },\nonumber
\end{align}
is positive definite:
\begin{align*}
\sum_{1\leq j,k\leq n}  x_j x_k H_{j,k}    
= & \sum_{1\leq j\leq n} \Bigl( 2m+  v_{a_-}'(\xi_j) + v_{\hat{a}_-}'(\xi_j)+v_{a_+}'(\xi_j) + v_{\hat{a}_+}'(\xi_j) \Bigr) x_j^2 \\
&+  \sum_{1\leq j<k\leq n} \Bigl( v_q'(\xi_j +\xi_k)(x_j+x_k)^2+
   v_q'(\xi_j -\xi_k)(x_j-x_k)^2\Bigr) \\
 \ge & 2m \sum_{1\leq j\leq n} x_j^2,
\end{align*}
since
$
v^\prime_a(\xi)= \frac{ 1-a^2}{1-2a\cos(\xi)+a^2} >0$ for $a\in (-1,1)$.
From the equation $\nabla_{\boldsymbol{\xi}} V_\mu (\boldsymbol{\xi})=0$ for the corresponding critical point $\boldsymbol{\xi}_\mu$:
\begin{align}\label{q-boson-critical-eq}
2m\xi_j+v_{a_-}(\xi_j)& +v_{\hat{a}_-}(\xi_j)+v_{a_+}(\xi_j)+v_{\hat{a}_+}(\xi_j)\\
 &+\sum_{\substack{1\leq k\leq n\\k\neq j}} \Bigl( v_q(\xi_k+\xi_j)- v_q(\xi_k-\xi_j)\Bigr)=2\pi (\rho_j+\mu_j) 
 \nonumber
\end{align}
($j=1,\ldots,n$), one readily deduces that at $\boldsymbol{\xi}=\boldsymbol{\xi}_\mu$  ($\mu\in\Lambda_{n,m}$)
\begin{subequations}
\begin{equation}\label{e:momentgaps1}
\frac{\pi(\rho_j+\mu_j)}{m+\kappa_-} < \xi_j < \frac{\pi(\rho_j+\mu_j)}{m+\kappa_+}
\end{equation}
(for $1\leq j\leq n$),  and also (by subtracting the $k$th equation from the $j$th equation) 
\begin{equation}\label{e:momentgaps2}
\frac{\pi(\rho_j-\rho_k+\mu_j-\mu_k)}{m+\kappa_-} < \xi_j-\xi_k< \frac{\pi(\rho_j-\rho_k+\mu_j-\mu_k)}{m+\kappa_+} 
\end{equation}
(for $1 \le  j < k \le n$), where
\begin{align}
\kappa_{\pm} := & \frac{(n-1)(1-q^2)}{(1\pm |q|)^2} + \\
&\frac{1}{2}\left( \frac{1-a_-^2}{(1\pm |a_-|)^2}+\frac{1-\hat{a}_-^2}{(1\pm |\hat{a}_-|)^2}+ \frac{1-a_+^2}{(1\pm |a_+|)^2}+\frac{1-\hat{a}_+^2}{(1\pm |\hat{a}_+|)^2} \right).
\nonumber
\end{align}
\end{subequations}
Here one exploits that $v_a(\xi)$ is odd and that $\frac{1-a^2}{(1+|a|)^2} \leq v_a^\prime (\xi)\leq \frac{1-a^2}{(1-|a|)^2} $. Moreover, since $v_a(\xi +2\pi)=v_a(\xi)+2\pi$, it also follows from Eq. \eqref{q-boson-critical-eq} that
$\xi_j<\pi$ at the critical point.
The upshot is that for any $\mu\in\Lambda_{m,n}$, the unique global minimum $\boldsymbol{\xi}_\mu$ of $V_\mu (\boldsymbol{\xi})$   \eqref{q-boson-morse-a}, \eqref{q-boson-morse-b} is assumed inside the open fundamental alcove
\begin{equation}\label{alcove}
 \mathbb{A}:=\{ (\xi_1,\xi_2,\ldots,\xi_n)\in\mathbb{R}^n\mid \pi>\xi_1>\xi_2>\cdots >\xi_n>0\} .
\end{equation}
\end{remark}

\begin{remark}\label{analyticity:rem}
For $\mu\in\Lambda_{n,m}$ and parameters in the domain \eqref{domain}, the dependence of the eigenfunction $\psi_\mu$ \eqref{HLf} on $a_\pm$, $\hat{a}_\pm$ and $q$ is real-analytic.
 Indeed, the critical equation
 \eqref{q-boson-critical-eq}  is real-analytic in this parameter domain and so is the critical point $\boldsymbol{\xi}_\mu$ (by the implicit function theorem, because the Jacobian of the critical equation equals the positive definite Hessian \eqref{Hesse} of  $V_\mu (\boldsymbol{\xi})$ and is thus invertible). Since $P_\lambda (\boldsymbol{\xi};q;a,\hat{a})$ \eqref{HLp}, \eqref{Cp} constitutes a (trigonometric) polynomial in the spectral variable $\boldsymbol{\xi}$ with coefficients that depend polynomially  on the parameters $q$, $a$ and $\hat{a}$, this real-analyticity carries automatically over to the eigenfunction $\psi_\mu $. 
\end{remark}

\begin{remark}  
At $q= 0$  the eigenfunctions in Theorem \ref{diagonal:thm} satisfy the eigenvalue equation for the phase model with open-end boundary interactions governed by the discrete Laplacian in 
Remark \ref{Hnm-q=0:rem} (with $g_\pm=a_\pm+\hat{a}_\pm$, $c_\pm=a_\pm \hat{a}_\pm$ and
$a_\pm,\hat{a}_\pm\in (-1,1)$). 
When $a_\pm=\hat{a}_\pm=0$ the boundary conditions of the Laplacian at the walls $\lambda_j=0$, $\lambda_j=\lambda_k$ and $\lambda_k=m$  ($1\leq j<k\leq n$) become of Dirichlet type. The eigenfunction $\psi_\mu$ \eqref{HLf} 
specializes in this situation to a symplectic Schur function evaluated
at $\boldsymbol{\xi}_\mu=\frac{\pi (\rho+\mu)}{m+n+1}$ (where $\rho:=(\rho_1,\rho_2,\ldots,\rho_n)=(n,n-1,\ldots ,2 ,1)$). 
The analog of the latter phase model on the periodic lattice $\mathbb{Z}_m$ was diagonalized by means of the algebraic Bethe Ansatz in terms of (standard) Schur functions \cite{bog:boxed,kor-stro:slnk}.
In our case, the familiar determinantal structure of the Schur functions turns out to persist at $q= 0$ for
general boundary parameters $a_\pm,\hat{a}_\pm\in (-1,1)$:
\begin{subequations}
\begin{equation}\label{q=1-det:a}
P_\lambda(\boldsymbol{\xi} ;0;a,\hat{a}) = \frac{\det [ p_{n-j+\lambda_j}(\xi_k;a,\hat{a})]_{1\leq j,k\leq n}}{\prod_{1\leq j <k \leq n}   (e^{i\xi_j} +e^{-i\xi_j}-e^{i\xi_k}-e^{-i\xi_k})} ,
\end{equation}
where
\begin{equation}\label{q=1-det:b}
p_\ell (\xi ;a,\hat{a}) :=   \frac{(1-a e^{-i\xi})( 1-\hat{a} e^{-i\xi})}{1-e^{-2i\xi}} e^{i\ell \xi}+  
\frac{(1-a e^{i\xi})( 1-\hat{a} e^{i\xi})}{1-e^{2i\xi}}e^{-i\ell \xi} .
\end{equation}
\end{subequations}
Indeed, when pulling out the overall Vandermonde denominator as in  Eq. \eqref{q=1-det:a} 
 from the $q=0$ specialization of
$P_\lambda(\boldsymbol{\xi} ;q;a,\hat{a}) $ \eqref{HLp}, \eqref{Cp},
one ends up with an alternating sum that coincides manifestly with the expansion of the determinant in the numerator. 
The corresponding $q= 0$ parameter specialization of the affine Pieri formula \eqref{APR} for the polynomials in question becomes
\begin{align}
P_\lambda(\boldsymbol{\xi} ;0;a_-,\hat{a}_-)&  \sum_{1\leq j\leq n} (e^{i\xi_j}+e^{-i\xi_j})= \\
& \Bigl( (a_-+\hat{a}_-)\delta_{\lambda_n}+
 (a_+ +\hat{a}_+)\delta_{m-\lambda_{1}}\Bigr)
P_\lambda(\boldsymbol{\xi} ;0;a_-,\hat{a}_-) \nonumber \\
+&\sum_{\substack{1\leq j \leq n\\ \lambda+e_j\in\Lambda_{n,m}}}
(1-a_-\hat{a}_-\delta_{\lambda_j})^{\delta_{n-j}}
 P_{\lambda+e_j}(\boldsymbol{\xi} ;0;a_-,\hat{a}_-)  \nonumber \\
+&\sum_{\substack{1\leq j \leq n\\ \lambda-e_j\in\Lambda_{n,m}}}
(1-a_+\hat{a}_+\delta_{m-\lambda_j} )^{\delta_{j-1}}
P_{\lambda-e_j} (\boldsymbol{\xi} ;0;a_-,\hat{a}_-) \nonumber
\end{align}
(at $\boldsymbol{\xi}=\boldsymbol{\xi}_\mu$, $\mu\in\Lambda_{n,m}$).
It is not at all clear from Theorem \ref{diagonal:thm} whether the eigenfunctions $\psi_\mu$, $\mu\in\Lambda_{n,m}$ actually remain complete in $l^2(\Lambda_{n,m},\Delta_{n,m})$ at $q=0$, beyond the much weaker and straightforward   completeness for 
\emph{generic} values of the boundary parameters $a_\pm,\hat{a}_\pm\in (-1,1)$. (For generic boundary parameters
the completeness is a priori guaranteed---both at $q=0$ and for generic $q\in (-1,1)$---by
the analyticity in Remark \ref{analyticity:rem} in combination with the elementary fact  that in the Dirichlet situation of vanishing parameters $q,a_\pm,\hat{a}_\pm$ the corresponding symplectic Schur functions form an orthogonal basis for $l^2(\Lambda_{n,m})$, cf. e.g.
\cite[\text{Remark~3.7}]{die-ems:orthogonality-macdonald} with $R=\hat{R}$ of type $C_n$.)
\end{remark}

\section{Double affine Hecke algebra of type $C^\vee C$ at critical level}\label{sec4}
In this section we describe the affine hyperoctahedral group and its associated double affine Hecke algebra \cite{che:double,nou:macdonald,sah:nonsymmetric,mac:affine} at critical level.

\subsection{Affine hyperoctahedral group}
The {\em affine hyperoctahedral group} $W$ is a Coxeter group (of type $\tilde{C}_n$) presented by generators $s_0,\ldots ,s_n$ subject to the relations \cite{hum:reflection}
\begin{equation}\label{W-rel2}
\begin{split}
s_j^2 & =1  \qquad\qquad\qquad  (j=0,\dots, n),\\
s_j s_{j+1} s_j  s_{j+1} &=   s_{j+1} s_j  s_{j+1} s_j \quad  ( j=0\ \text{or}\ j= n-1),\\
s_j s_{j+1} s_j  &= s_{j+1} s_j s_{j+1},\quad \ ( j=1,\dots, n-2),\\
s_ j s_k &= s_k s_j \qquad \qquad\quad  ( |j-k|>1 ).
\end{split}
\end{equation}
A \emph{reduced expression} for $w\in W$ is a decomposition in terms of these generators of the form
\begin{equation}\label{red-exp}
w= s_{j_1}\cdots s_{j_\ell}\qquad ( j_1,\ldots,j_\ell\in \{ 0,\ldots ,n\} )
\end{equation}
such that $\ell $ ($\geq 0$) is minimal. The number generators $\ell $ comprising a reduced expression is referred to as the \emph{length} $\ell(w)$ of  the group element (so $\ell (w)=0$ iff $w=1$).

Given a nonvanishing scale parameter $c\in\mathbb{R}^*:=\mathbb{R}\setminus\{ 0\}$, we consider a faithful action of $W$ on $\mathbb{R}^n$ that is characterized by  \emph{simple reflections}  mapping $x=(x_1,\dots, x_n)\in\mathbb{R}^n$ onto
\begin{align}
s_0(x_1,\dots, x_n) &= (2c-x_1,x_2,\dots, x_n), \nonumber \\
s_j(x_1,\dots, x_n) &= (x_1,\cdots, x_{j-1}, x_{j+1},x_{j},x_{j+2},\cdots, x_n)\qquad (j=1,\ldots ,n-1), \nonumber\\
s_n(x_1,\dots, x_n) &= (x_1,\dots, x_{n-1},-x_n) ,\label{W-action}
\end{align}
respectively.
Below we will always assume that $c$ is integral:
\begin{equation*}
c\in\mathbb{Z}^*:=\mathbb{Z}\setminus\{ 0\}
\end{equation*}
(unless explicitly stated otherwise).

\subsection{Double affine  Hecke algebra}
Let $\tau_0,\ldots,\tau_n$ and $\hat{\tau}_0,\ldots,\hat{\tau}_n$  be parameters in $\mathbb{C}^*:=\mathbb{C}\setminus \{ 0\}$ 
such that
$\tau_j=\hat{\tau}_j=\tau$ for $0<j<n$. \emph{Throughout the paper it will be assumed that none of the five parameters $\tau_0$, $\hat{\tau}_0$, $\tau$,  $\tau_n$, $\hat{\tau}_n$ equals a root of unity.}  We consider the following $\text{q}\to 1$ degeneration of the double affine Hecke algebra of type $C^\vee_n C_n$  \cite{nou:macdonald,sah:nonsymmetric}.

\begin{definition}
The \emph{double affine Hecke algebra $\mathbb H$ of type $C^\vee_n C_n$ at critical level} is
the unital associative algebra over $\mathbb{C}$  with invertible generators $T_0,\dots, T_n$ and commuting 
invertible generators $X_1, \dots, X_n$,  subject to
the \emph{quadratic  relations}
\begin{subequations}
\begin{equation} \label{quadratic-relations}
T_j -T_j^{-1}=\tau_j-\tau_j^{-1}
\qquad (j=0,\ldots,n),
\end{equation}
the \emph{braid relations}
\begin{equation}\label{braid-relations}
\begin{split}
T_j T_{j+1} T_j  T_{j+1} &=   T_{j+1} T_j  T_{j+1} T_j \qquad  ( j=0\ \text{or}\ j= n-1),\\
T_j T_{j+1} T_j  &= T_{j+1} T_j T_{j+1} \qquad \quad (j=1,\dots, n-2) ,\\
T_ j T_k &= T_k T_j \qquad \qquad\qquad   ( |j-k|>1) ,
\end{split}
\end{equation}
and the \emph{cross relations}
\begin{equation} \label{cross-relations}
\begin{split}
T_0 X_1   - X_1^{-1} T_0^{-1}&={\hat{\tau}_0}^{-1}-\hat{\tau}_0,\\
T_n X_n^{-1} - X_n T_n^{-1}&={\hat{\tau}_n}^{-1}-\hat{\tau}_n, \\
T_j X_{j+1} &= X_j T_j^{-1} \qquad\quad  (j=1,\dots, n-1),\\
T_j X_k &= X_k T_j  \qquad (| j-k|>1\ \text{or}\ j=n=k+1).
\end{split}
\end{equation}
\end{subequations}
\end{definition}

For $\lambda=(\lambda_1,\ldots ,\lambda_n)\in\mathbb{Z}^n$ 
and for a reduced expression
$w=s_{j_1}\cdots s_{j_\ell}$ let
\begin{equation*}
X^\lambda:=X_1^{\lambda_1}\cdots X_n^{\lambda_n}\quad\text{and}\quad T_w:= T_{j_1}\cdots T_{j_\ell}
\end{equation*}
(which does not depend on the choice of the reduced expression by virtue of the braid relations).

\begin{proposition}[Poincar\'e-Birkhoff-Witt Property]\label{pbw:prp}
The elements $X^\lambda T_w$  (or alternatively $T_wX^\lambda $), with $\lambda\in\mathbb{Z}^n$ and $w\in W$, form a basis for $\mathbb{H}$ over $\mathbb{C}$.
\end{proposition}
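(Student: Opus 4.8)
The plan is to establish the Poincaré--Birkhoff--Witt property by the standard two-step argument: first exhibit $\{X^\lambda T_w\}$ as a \emph{spanning} set using the defining relations, and then verify \emph{linear independence} by constructing a faithful representation on which these elements act in a manifestly independent way. The spanning part is purely combinatorial: starting from an arbitrary word in the generators $T_0^{\pm 1},\dots,T_n^{\pm 1},X_1^{\pm 1},\dots,X_n^{\pm 1}$, one uses the quadratic relations \eqref{quadratic-relations} to eliminate $T_j^{-1}$ in favour of $T_j$, the braid relations \eqref{braid-relations} to bring the Hecke part into the reduced form $T_w$, and the cross relations \eqref{cross-relations} to push all the $X$'s to the left past the $T$'s. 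One must check that this rewriting terminates, which follows from a length/degree induction: commuting $X_k$ past $T_j$ either leaves the word unchanged in a suitable monomial order (for $|j-k|>1$ or $j=n=k+1$), or (for $j=1,\dots,n-1$ and for $j=0,n$) replaces it by terms of strictly smaller Hecke-length, the cross relations being of "triangular" type in $\ell(w)$.

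For linear independence the plan is to invoke the \emph{basic (polynomial) representation} of $\mathbb{H}$ at critical level, which the paper promises to verify in Appendix~\ref{appA}: $\mathbb{H}$ acts on $\mathbb{C}[X_1^{\pm 1},\dots,X_n^{\pm 1}]$ with $X_j$ acting by multiplication and $T_0,\dots,T_n$ acting by Demazure--Lusztig type difference-reflection operators built from the $W$-action \eqref{W-action}. Once one knows this is a well-defined $\mathbb{H}$-module (i.e.\ the operators satisfy \eqref{quadratic-relations}--\eqref{cross-relations}), linear independence of $\{X^\lambda T_w\}$ follows by a filtration argument: order $W$ by length and apply a hypothetical vanishing relation $\sum_{\lambda,w} c_{\lambda,w} X^\lambda T_w = 0$ to the monomial $1$; because each $T_w$ acting on $1$ produces $X^{w^{-1}\cdot 0}$ plus lower-order terms with respect to a dominance order on exponents (the $X^\lambda$ prefactors only shift weights), one peels off the coefficients one $W$-orbit at a time, concluding $c_{\lambda,w}=0$ for all $\lambda,w$. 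This is the same mechanism as in the generic-$\text{q}$ case; the only point needing care is that at $\text{q}=1$ the ``lower-order terms'' estimate still holds, which is exactly what Appendix~\ref{appA} secures (cf.\ the cited analogues \cite{obl:double,geh:properties}).

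I would present the proof as: (1) a short lemma that the $X^\lambda T_w$ span, via the termination-of-rewriting argument sketched above, with the one genuinely fiddly case being the interplay of $T_0$ with $X_1$ (and $T_n$ with $X_n$) where the cross relation is inhomogeneous in $X$; (2) the statement of the basic representation from Appendix~\ref{appA}; (3) the triangularity/peeling argument for independence. The equivalence of the two orderings $X^\lambda T_w$ versus $T_w X^\lambda$ is then automatic, since each spans and they have the same cardinality as index sets, or alternatively by applying the anti-automorphism / running the rewriting in the opposite direction.

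The main obstacle is the linear-independence half, and within it the dependence on Appendix~\ref{appA}: one must be certain that the Demazure--Lusztig operators remain well defined and satisfy all the cross relations at the degenerate value $\text{q}=1$, where several denominators that are harmless for generic $\text{q}$ could in principle collapse. The saving feature, as the paper notes, is that type $C^\vee C$ has index $1$ of the root lattice in the weight lattice, so the potential obstruction flagged in \cite{geh:properties} does not arise; nonetheless the explicit check that the inhomogeneous relations $T_0 X_1 - X_1^{-1}T_0^{-1} = \hat{\tau}_0^{-1}-\hat{\tau}_0$ and its $T_n$ counterpart are honoured by the proposed operators is the crux, and everything else (spanning, peeling) is routine once that is in hand.
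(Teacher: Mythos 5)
Your spanning half is fine (and is essentially what the paper leaves implicit), but the linear-independence half rests on a representation that cannot do the job. At the critical level $\text{q}=1$ the polynomial representation of Proposition \ref{pol-rep:prp} is \emph{not faithful}: this is stated explicitly at the end of Appendix \ref{appA}, where it is obtained from the basic representation by identifying $Y_j=X_j$, ``which is no longer faithful.'' Since the $X^\lambda T_w$ span $\mathbb{H}$ by your step (1), a nonzero kernel element is precisely a nontrivial vanishing linear combination of their images, so no argument carried out inside $\mathbb{C}[X]$ can establish their independence. Your concrete peeling mechanism also fails on its own terms: since $s_j(0)=0$ for $j=1,\dots,n$, one has $\check{T}_j\,1=\tau_j\cdot 1$, hence $\check{T}_w\,1=\tau_w\cdot 1$ for every $w\in W_0$; evaluating a relation $\sum_{\lambda,w}c_{\lambda,w}X^\lambda T_w$ on the monomial $1$ therefore cannot separate different $w$ in the same coset of the translation part, and the claimed leading term $X^{w^{-1}\cdot 0}$ with lower-order corrections is simply not there. (A side remark: you also invert the role of \cite{geh:properties} --- that reference covers the types where the index of the root lattice in the weight lattice is \emph{not} $1$, which is exactly why $C^\vee C$ is not covered there and the appendix is needed.)

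The paper's fix is to prove independence in a bigger, genuinely faithful module: the skew-group algebra $\mathcal{A}*W_0$, where $\mathcal{A}$ is the fraction field of $\mathbb{C}[X,Y]$ with the translation part of $W$ realized by \emph{independent} indeterminates $Y^\lambda$ rather than being identified with $X$. The operators $\check{T}_j$ are shown to define a homomorphism $\mathbb{H}\to\mathcal{A}*W_0$ (Proposition \ref{basic-rep:prp}) by conjugating with suitable $Y^{-\mu}$ so as to reduce all relations involving $\check{T}_0$ to the $\text{q}\to 1$ limit of the Noumi--Sahi representation (Lemma \ref{noumi-sahi-rep:lem}); independence of the images $X^\mu\check{T}_w$ then follows Macdonald's argument: Bruhat triangularity $\check{T}_w=\sum_{v\le w}f_{vw}(X)\,v$ with $f_{ww}(X)\neq 0$, combined with the linear independence of the $Y^\lambda w$, $\lambda\in\mathbb{Z}^n$, $w\in W_0$, over the field $\mathcal{A}_X$ (Proposition \ref{pbw-basic:prp}). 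If you want to salvage your outline, replace step (2)--(3) by this basic representation (or some other module in which the affine translations act by operators independent of $X$-multiplication); keeping the $Y$'s distinct from the $X$'s is exactly the point that makes independence visible at critical level.
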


For most types other than $C^\vee C$ a corresponding Poincar\'e-Birkhoff-Witt property was proven
in \cite[\text{Secs. 3 and 5}]{obl:double} and \cite[\text{Sec. 2.1.2}]{geh:properties}. For type $C^\vee C$ with
$\text{q}$ not equal to a root of unity, a proof of the Poincar\'e-Birkhoff-Witt property can be found in
\cite[\text{Sec. 3}]{sah:nonsymmetric}. 
For completeness, we provide a proof  of Proposition \ref{pbw:prp} in Appendix \ref{appA} so as to include our setting of the double affine Hecke algebra of type  $C^\vee C$ at  the critical level $\text{q}=1$.

It follows in particular from the above proposition that the commutative subalgebra of
$\mathbb{H}$ generated by $X_1^{\pm 1},\ldots ,X_n^{\pm 1}$ is isomorphic to the algebra $\mathbb{C}[X]$ of Laurent polynomials in $X_1,\ldots ,X_n$.  Because the integral lattice $\mathbb{Z}^n\subset\mathbb{R}^n$ is stable for the action of the affine hyperoctahedral group $W$ in Eq. \eqref{W-action} (as $c$ is assumed to be integral), we can lift the action
 to $\mathbb{C}[X]$ via
$w (X^\lambda):= X^{w \lambda}$ ($w\in W$). This allows to introduce the corresponding
Demazure-Lusztig operators:
\begin{subequations}
\begin{equation}\label{Tj:a}
 \check{T}_j  := \tau_j s_j + b_j(X)(1-s_j) \qquad (j=0,\ldots ,n),
\end{equation}
where
\begin{equation}\label{Tj:b}
\begin{split}
b_0(X) &= \frac{\tau_0 - \tau_0^{-1} + (\hat{\tau}_0 -\hat{\tau}_0^{-1}) X_1}{1-X_1^2}, \\
b_j(X) &= \frac{\tau-\tau^{-1}}{1-X_j^{-1}X_{j+1}} \qquad ( j=1,\dots, n-1),\\
b_n(X) &= \frac{\tau_n - \tau_n^{-1} + (\hat{\tau}_n -\hat{\tau}_n^{-1}) X_n^{-1}}{1-X_n^{-2}}  .
\end{split}
\end{equation}
\end{subequations}
Since $X^\lambda-X^{s_j\lambda}$ is divisible by the denominator of $b_j(X)$, these Demazure-Lusztig operators  $\check{T}_0,\ldots , \check{T}_n$ \eqref{Tj:a}--\eqref{Tj:b} are well-defined as  linear operators acting on $\mathbb{C}[X]$ (upon interpreting $b_j(X)(X^\lambda-X^{s_j\lambda})$ in terms of the corresponding terminating geometric series, cf. Eq. \eqref{geom-series} below).

\begin{proposition}[Polynomial Representation]\label{pol-rep:prp}
The assignment
$T_j \mapsto  \check{T}_j$ ($j=0,\dots, n$), $X_j \mapsto X_j$ ($j=1,\dots, n$)
extends (uniquely) to a representation  of  $\mathbb H$ on $\mathbb{C}[X]$. 
\end{proposition}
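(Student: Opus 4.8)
The plan is to verify that the Demazure--Lusztig operators $\check T_0,\dots,\check T_n$ of \eqref{Tj:a}--\eqref{Tj:b}, together with the multiplication operators $X_1,\dots,X_n$ acting on $\mathbb C[X]$, satisfy the defining relations of $\mathbb H$ -- the quadratic relations \eqref{quadratic-relations}, the braid relations \eqref{braid-relations}, and the cross relations \eqref{cross-relations}. Uniqueness of the resulting representation is then automatic, because $T_0,\dots,T_n$ and $X_1,\dots,X_n$ generate $\mathbb H$. The operators $X_j$ commute since they act by multiplication, and the mixed cross relations $\check T_jX_k=X_k\check T_j$ for $|j-k|>1$ (and for $j=n=k+1$) are clear: $\check T_j$ is built out of $s_j$ and rational functions of $X_j,X_{j+1}$ only (of $X_1$ when $j=0$, of $X_n$ when $j=n$), and $s_j$ commutes with multiplication by the remaining variables.

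For the quadratic relations one expands the operator $\check T_j\bigl(\check T_j-(\tau_j-\tau_j^{-1})\bigr)$ using $s_j^2=1$ and the commutation rule $s_j\,M_f=M_{s_j(f)}\,s_j$, where $M_f$ denotes multiplication by a rational function $f$. A short manipulation collapses this product to $M_g(1-s_j)$ with $g=\bigl(b_j(X)-\tau_j\bigr)\bigl(b_j(X)+s_j(b_j(X))-\tau_j+\tau_j^{-1}\bigr)$, which annihilates $\mathbb C[X]$ if and only if $g=0$; since $b_j(X)\neq\tau_j$ in the fraction field $\mathbb C(X)$, the quadratic relation is thus equivalent to the scalar identity $b_j(X)+s_j(b_j(X))=\tau_j-\tau_j^{-1}$. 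Here one checks that the critical-level specialization causes no harm: although $s_0$ is not a ring endomorphism of $\mathbb C[X]$, conjugation $s_0\,M_f\,s_0^{-1}$ nevertheless equals $M_{f|_{X_1\to X_1^{-1}}}$, since $s_0$ factors as multiplication by $X_1^{2c}$ followed by the ring automorphism $X_1\mapsto X_1^{-1}$ and the shift $X_1^{2c}$ cancels under conjugation; the three cases $j=0$, $1\le j\le n-1$, and $j=n$ are then immediate from \eqref{Tj:b}. In particular each $\check T_j$ is invertible with $\check T_j^{-1}=\check T_j-(\tau_j-\tau_j^{-1})$, and the two genuine cross relations follow by substituting this expression for $\check T_0^{-1}$ (resp.\ $\check T_n^{-1}$) together with $\check T_0=\tau_0 s_0+b_0(X)(1-s_0)$ and $s_0\,M_{X_1}=M_{X_1^{-1}}s_0$ into $\check T_0 M_{X_1}-M_{X_1^{-1}}\check T_0^{-1}$, reducing the claim to an elementary identity among rational functions of $X_1$; the relation for $\check T_n$ is entirely analogous.

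It remains to establish the braid relations, the most computational step, but one that is strictly local: each braid relation involves at most three generators with consecutive indices, so it is verified inside the subalgebra of operators built from $s_j,s_{j+1}(,s_{j+2})$ and rational functions of the corresponding variables, i.e.\ in rank $\le 3$. The finite braid relations among $\check T_1,\dots,\check T_n$ -- the $A_1\times A_1$, $A_2$, and $B_2$ relations for Demazure--Lusztig operators -- are insensitive to the critical level, since $s_1,\dots,s_n$ act on $\mathbb C[X]$ by honest permutations and sign changes of the variables; these are the classical computations (see e.g.\ \cite{nou:macdonald,sah:nonsymmetric}). For the two remaining relations $\check T_0\check T_1\check T_0\check T_1=\check T_1\check T_0\check T_1\check T_0$ and $\check T_{n-1}\check T_n\check T_{n-1}\check T_n=\check T_n\check T_{n-1}\check T_n\check T_{n-1}$ one invokes once more that conjugation by $s_0$ (resp.\ $s_n$) of multiplication by a rational function of $X_1$ (resp.\ $X_n$) is the substitution $X_1\to X_1^{-1}$ (resp.\ $X_n\to X_n^{-1}$), so that the computation becomes formally identical to the $\text{q}$-generic one, the integral scale $c$ entering only through the $X^\lambda$-component of the operators, on which the braid relations impose no constraint.

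I expect the main obstacle to be precisely this last point: confirming that the ``discrete'' shift by $2c$ hidden in the action of $s_0$ does not contaminate the Hecke relations at $\text{q}=1$. Once it is checked that conjugation by $s_0$ acts on multiplication operators by the naive substitution $X_1\to X_1^{-1}$, the whole verification reduces to a transcription of the standard one for the $\text{q}$-generic double affine Hecke algebra of type $C^\vee C$ \cite{nou:macdonald,sah:nonsymmetric}, which is why the details are deferred to Appendix~\ref{appA}, alongside the parallel check of the Poincar\'e--Birkhoff--Witt property.
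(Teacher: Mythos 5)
Your proposal is correct in substance, but it takes a genuinely different route from the paper. The paper never verifies the Hecke relations for the operators \eqref{Tj:a}--\eqref{Tj:b} by hand: in Appendix \ref{appA} it first records (Lemma \ref{noumi-sahi-rep:lem}) that the $\text{q}\to 1$ limit of the Noumi--Sahi polynomial representation \cite{nou:macdonald,sah:nonsymmetric} yields a representation in which $s_0$ is replaced by its derivative $s_0'$, then transfers every relation involving $\check T_0$ to the genuine affine operator by conjugating with $Y^{-\mu}$ (with $a_0(\mu)=a_k(\mu)=0$) inside the skew-group algebra $\mathcal A * W_0$ built on auxiliary variables $Y$, and finally obtains Proposition \ref{pol-rep:prp} from the resulting basic representation (Proposition \ref{basic-rep:prp}) by specializing $Y_j=X_j$. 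Your direct verification on $\mathbb{C}[X]$ rests on the same germ as the paper's trick --- that conjugation by the affine $s_0$ on multiplication operators is the bare substitution $X_1\to X_1^{-1}$ --- and your treatment of the quadratic relations and of the two non-commuting cross relations in \eqref{cross-relations} is correct as written (the relations $T_jX_{j+1}=X_jT_j^{-1}$, $1\le j\le n-1$, which you do not discuss, are classical and involve no critical-level subtlety). The one thin spot is the affine braid relation $\check T_0\check T_1\check T_0\check T_1=\check T_1\check T_0\check T_1\check T_0$, where ``formally identical to the $\text{q}$-generic computation'' is asserted rather than argued; to close it you can either observe that the standard normal-ordering verification uses nothing beyond the intertwining rule $s_0M_f=M_{f|_{X_1\to X_1^{-1}}}s_0$ together with the Coxeter relations satisfied by the affine action \eqref{W-action}, both available here, or --- closer to the paper, and entirely within your framework since $\mathbb{C}[X]$ consists of Laurent polynomials --- conjugate by multiplication with $X_1^{-c}X_2^{-c}$, which carries $\check T_0$ to $\check T_0'$ while fixing $\check T_1$, so that the relation follows from the known one for $\check T_0'$. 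As for what each approach buys: yours is elementary and self-contained, whereas the paper's detour through $\mathcal A * W_0$ with independent $Y$'s simultaneously produces the faithful basic representation underlying the Poincar\'e--Birkhoff--Witt property (Proposition \ref{pbw:prp}), which the non-faithful polynomial representation alone cannot deliver.
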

In Appendix \ref{appA}, Proposition \ref{pol-rep:prp} is proven by tweaking a $\text{q}\to 1$ degeneration of  the polynomial representation of the double affine Hecke algebra of type $C^\vee C$ going back to Noumi and Sahi
\cite{nou:macdonald,sah:nonsymmetric}.

\begin{remark}\label{H-relations:rem}
It follows from Proposition \ref{pbw:prp} that $\mathbb{H}$ can be equivalently characterized as the unital associative algebra over $\mathbb{C}$ spanned by the elements
$X^\lambda T_w$  (or alternatively $T_wX^\lambda $) subject to
the relations
\begin{subequations}
\begin{align}
 \label{TwTj}
T_wT_j &=
\begin{cases}
 T_{ws_j}  & \text{if $\ell(ws_j)=\ell(w)+1$},\\
 T_{ws_j} +(\tau_j-\tau_j^{-1})T_w & \text{if $\ell(ws_j)=\ell(w)-1$},
\end{cases} \\
\label{XlambdaXmu} X^\lambda X^\mu &=X^{\lambda +\mu} , \\
\label{cross-relations-2}
T_j X^\lambda - X^{s_j^\prime \lambda }T_j &= b_j(X) (X^\lambda-X^{s'_j\lambda}) ,
 \end{align}
 \end{subequations}
for all $w\in W$, $j\in\{0,\dots,n\}$, and $\lambda,\mu\in\mathbb{Z}^n$. Here $s_j^\prime$ refers to the derivative of the simple reflection $s_j$ (i.e. $s_0^\prime (x_1,\ldots ,x_n)=(-x_1,x_2,\ldots ,x_n)$ and $s_j^\prime =s_j$ for $j=1,\ldots ,n$). The subalgebra $\mathcal{H}\subset\mathbb{H}$ generated by $T_0,\ldots ,T_n$  and spanned by the basis $T_w$, $w\in W$ amounts to Lusztig's three-parameter (viz. $\tau_0$, $\tau$ and $\tau_n$) affine Hecke algebra of type $\tilde{C}_n$  \cite{lus:affine}.
\end{remark}

\section{Lattice integral-reflection operators}\label{sec5}
In this section a representation of $\mathbb{H}$ in terms of lattice integral-reflection operators is derived. 
These operators are discrete analogs---in the spirit of \cite{die-ems:discrete}---of corresponding  integral-reflection operators originating from the Gutkin-Sutherland approach towards the solution of
the spectral problem for Gaudin's generalized Lieb-Liniger models associated with the (affine) Weyl groups
\cite{gau:boundary,gut-sut:completely,gut:integrable,hec-opd:yang,ems-opd-sto:periodic,ems-opd-sto:trigonometric}.
Here we arrive at the pertinent lattice integral-reflection operators  of $C^\vee C$ type by duality from
the polynomial representation in Proposition \ref{pol-rep:prp}.

\subsection{Integral-reflection representation of $\mathbb{H}$}
Let $\mathcal{C}(\mathbb{Z}^n)$ be the space of lattice functions $f:\mathbb{Z}^n\to\mathbb{C}$. The affine hyperoctahedral group acts on these functions via
\begin{equation}
(wf)(\lambda):=f(w^{-1}\lambda)\qquad (w\in W,\, f\in \mathcal{C}(\mathbb{Z}^n),\, \lambda\in\mathbb{Z}^n)
\end{equation}
(where---recall---$c$ is assumed to be a nonzero integer). Upon rewriting the action of the simple reflections in $\mathbb{R}^n$ as
\begin{subequations}
\begin{equation}
s_jx=x-a_j(x)\alpha_j\qquad  (j=0,\ldots ,n), 
\end{equation}
with
\begin{equation}
\alpha_j= 
\begin{cases}  -e_1 &\text{if}\ j=0, \\ e_j-e_{j+1}&\text{if}\ j=1,\ldots ,n-1, \\ e_n&\text{if}\ j=n \end{cases}
\end{equation}
(where $e_1,\ldots ,e_n$ denotes the standard basis of unit vectors in $\mathbb{R}^n$), and
\begin{equation}
a_j(x)= 
\begin{cases}  2(c-x_1) &\text{if}\ j=0, \\ x_j-x_{j+1}&\text{if}\ j=1,\ldots ,n-1, \\ 2x_n&\text{if}\ j=n, \end{cases}
\end{equation}
\end{subequations}
we are in the position to
define corresponding discrete integral-reflection operators $I_j:\mathcal{C}(\mathbb{Z}^n)\to \mathcal{C}(\mathbb{Z}^n) $ of the form
\begin{subequations}
\begin{equation}\label{Ij:a}
I_j:=\tau_j s_j+ J_j\qquad (j=0,\ldots ,n).
\end{equation}
Here $J_j:\mathcal C(\mathbb{Z}^n)\to \mathcal C(\mathbb{Z}^n)$ denotes a discrete weighted integral operator
 that integrates the lattice function
$f\in\mathcal{C}(\mathbb{Z}^n)$ over lattice points on the line segment between  $\lambda$ and $s_j\lambda=\lambda-a_j(\lambda)\alpha_j$:
\begin{equation}\label{Ij:b}
(J_j f)(\lambda ) := 
 \begin{cases}
-\sum_{k=1}^{a_j(\lambda)} u_j(k) f(\lambda-k\alpha_j)   &\text{if}\ a_j(\lambda)> 0 ,\\
0&\text{if}\ a_j(\lambda)=0 ,\\
\sum_{k=0}^{-a_j(\lambda)-1} u_j(k) f(\lambda+k\alpha_j)
&\text{if}\ a_j(\lambda) <0,
\end{cases} 
\end{equation}
where
\begin{equation}\label{Ij:c}
u_j(k) = 
\begin{cases}
\tau_j -\tau_j^{-1} & \text{if $k$ is even}, \\
\hat{\tau}_j -\hat{\tau}_j^{-1} & \text{if $k$ is odd} .
\end{cases}
\end{equation}
\end{subequations}
For any $\mu\in\mathbb{Z}^n$, let us denote by $t_\mu: \mathcal{C}(\mathbb{Z}^n)\to \mathcal{C}(\mathbb{Z}^n)$ the translation operator of the form
\begin{equation}
(t_\mu f)(\lambda) :=f(\lambda-\mu)  \qquad (f\in \mathcal{C}(\mathbb{Z}^n), \ \lambda\in\mathbb{Z}^n).
\end{equation}

The following proposition asserts that the integral-reflection operators $I_0,\ldots ,I_n$ in combination with the unit-translation operators $t_{e_1},\ldots ,t_{e_n}$ give rise to a representation of the type $C^\vee C$ double affine Hecke algebra at critical level on $\mathcal{C}(\mathbb{Z}^n)$.
\begin{proposition}[Integral-Reflection Representation]\label{integral-reflection:prp}
The assignment $T_j\to I_j$ ($j=0,\ldots ,n$),  $X_j\to t_{e_j}$ ($j=1,\ldots ,n$) extends (uniquely)
to a representation of $\mathbb H$ on
$\mathcal{C}(\mathbb{Z}^n)$.
\end{proposition}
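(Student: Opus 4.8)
The plan is to derive the representation \emph{by duality} from the polynomial representation of Proposition~\ref{pol-rep:prp}, following the pattern of \cite{die-ems:discrete}. The point of departure is a non-degenerate pairing between the space $\mathcal{C}(\mathbb{Z}^n)$ of lattice functions and the Laurent polynomial ring $\mathbb{C}[X]$, obtained by pairing $f$ with $\sum_\lambda p_\lambda X^\lambda$ through (a normalization of) $\sum_\lambda p_\lambda f(\lambda)$: since the monomials $X^\lambda$, $\lambda\in\mathbb{Z}^n$, form a $\mathbb{C}$-basis of $\mathbb{C}[X]$ by Proposition~\ref{pbw:prp}, this realizes $\mathcal{C}(\mathbb{Z}^n)$ as the full linear dual of $\mathbb{C}[X]$, so that the transpose of \emph{every} operator on $\mathbb{C}[X]$ is a well-defined operator on $\mathcal{C}(\mathbb{Z}^n)$. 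Dualizing the polynomial representation equips $\mathcal{C}(\mathbb{Z}^n)$ with a right $\mathbb{H}$-module structure; precomposing with a suitable anti-automorphism of $\mathbb{H}$ turns this into the sought-for left $\mathbb{H}$-module, whose generators $T_j$, $X_j$ act by the operators $I_j$, $t_{e_j}$.

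In detail I would proceed as follows. \emph{(i)} Check that $\mathbb{H}$ admits an anti-automorphism $\sigma$ fixing the Hecke generators $T_0,\dots,T_n$ (and acting on $X_1,\dots,X_n$ by the normalization matching the pairing): using the presentation of $\mathbb{H}$ by the relations \eqref{TwTj}--\eqref{cross-relations-2} of Remark~\ref{H-relations:rem}, the quadratic and braid relations are immediately $\sigma$-stable, while the cross relations \eqref{cross-relations-2} are carried by $\sigma$ into a consequence of themselves (apply \eqref{cross-relations-2} with $\lambda$ replaced by $s_j^\prime\lambda$ and solve for the opposite-order product $X^\lambda T_j$). \emph{(ii)} Conclude that $h\mapsto\bigl(\rho(\sigma(h))\bigr)^{\mathrm{t}}$ — with $\rho$ the polynomial representation and $(\cdot)^{\mathrm{t}}$ the transpose relative to the pairing — is a genuine (not anti-) homomorphism $\mathbb{H}\to\mathrm{End}_{\mathbb{C}}\bigl(\mathcal{C}(\mathbb{Z}^n)\bigr)$, being the composite of the anti-automorphism $\sigma$ with the anti-homomorphism ``transpose of a representation''. \emph{(iii)} Identify the images of the generators. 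For $X_j$ one reads off the unit translation $t_{e_j}$ directly from the pairing. For $T_j$ — the heart of the matter — one computes $\check T_j X^\lambda$ explicitly, the essential ingredient being the expansion of the rational coefficient $b_j(X)$ against $X^\lambda-X^{s_j\lambda}$ as the \emph{terminating} geometric series that renders it a Laurent polynomial; transposing, one then recognizes precisely a discrete weighted integral of $f$ over the lattice segment from $\lambda$ to $s_j\lambda$, and matching its weights and summation ranges — organized by the sign of $a_j(\lambda)$, i.e. by which of the three cases of \eqref{Ij:b} applies — with the definition \eqref{Ij:a}--\eqref{Ij:c} of $I_j$ completes the identification. \emph{(iv)} Uniqueness of the extension is automatic, since $T_0,\dots,T_n,X_1,\dots,X_n$ generate $\mathbb{H}$.

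I expect the main obstacle to be step \emph{(iii)} for the two boundary generators $T_0$ and $T_n$, and the correct choice of pairing near the affine wall that makes the bookkeeping go through there. For the interior generators $1\le j\le n-1$ the reflection $s_j$ is linear and $b_j(X)$ carries the single residue $\tau-\tau^{-1}$, so the geometric expansion is transparent; but $b_0(X)=\bigl(\tau_0-\tau_0^{-1}+(\hat\tau_0-\hat\tau_0^{-1})X_1\bigr)\big/\bigl(1-X_1^2\bigr)$ has a two-term numerator over a quadratic denominator, so that the expansion of $b_0(X)(X^\lambda-X^{s_0\lambda})$ produces a sum whose successive coefficients \emph{alternate with the parity of the summation index} between $\tau_0-\tau_0^{-1}$ and $\hat\tau_0-\hat\tau_0^{-1}$ — exactly the weights $u_0(k)$ of \eqref{Ij:c}; this is where the $C^\vee C$ doubling at the boundary walls enters the picture. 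One must moreover keep track of the affine shift by $2c$ built into $s_0$ (so that $s_0\lambda$ is \emph{not} the linear reflection of $\lambda$), which alters both the exponent arithmetic and the summation bounds, reproducing the value $a_0(\lambda)=2(c-\lambda_1)$ that sets the length of the integration segment in $J_0$; the generator $T_n$ is treated symmetrically, with $X_n^{-1}$ playing the role of $X_1$. A subordinate, routine point is to pin down the precise normalization of the pairing (and, correspondingly, of $\sigma$ on the $X_i$) so that the transposed operators come out as the $I_j$ on the nose rather than as Hecke-equivalent variants.
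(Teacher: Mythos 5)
Your proposal is, in substance, the paper's own argument: the representation is obtained by dualizing the polynomial representation of Proposition~\ref{pol-rep:prp} through a pairing of $\mathcal{C}(\mathbb{Z}^n)$ with $\mathbb{C}[X]$, and the decisive computation is the terminating geometric series expansion of $b_j(X)(X^\lambda-X^{s_j\lambda})$ (the paper's Eq.~\eqref{geom-series}), whose parity-alternating coefficients reproduce the weights $u_j(k)$ of \eqref{Ij:c} and whose length reproduces $a_j(\lambda)$ --- exactly the content of the duality identities \eqref{duality:a}, \eqref{duality:b}. Your steps \emph{(i)}--\emph{(ii)}, which make the order reversal explicit via the anti-automorphism of $\mathbb{H}$ fixing $T_0,\dots,T_n$ and all $X^\lambda$ (your derivation of the reversed cross relations from \eqref{cross-relations-2} with $\lambda\mapsto s_j'\lambda$ is correct), are a slightly more formal packaging of what the paper does implicitly when it transports the defining relations across the nondegenerate pairing; that is a presentational, not a mathematical, difference.

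The one item you defer as a ``routine normalization'' is the only place where the write-up, taken literally, would stall, so let me pin it down. With the unsigned pairing $\langle f,X^\lambda\rangle=f(\lambda)$ and $\sigma(X_j)=X_j$, the transpose of multiplication by $X_j$ is $t_{-e_j}$, not $t_{e_j}$; and you cannot absorb the sign into $\sigma$, because the assignment $T_j\mapsto T_j$, $X_j\mapsto X_j^{-1}$ does \emph{not} extend to an anti-automorphism of $\mathbb{H}$: it would send the cross relation $T_0X_1-X_1^{-1}T_0^{-1}=\hat\tau_0^{-1}-\hat\tau_0$ to $X_1^{-1}T_0-T_0^{-1}X_1$, which by \eqref{cross-relations-2} equals $(\tau_0-\tau_0^{-1})(X_1+X_1^{-1})+\hat\tau_0-\hat\tau_0^{-1}$ and is not the required constant (use Proposition~\ref{pbw:prp} and $\tau_0^2\neq1$). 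The working choice is the paper's pairing $(f,X^\lambda)=f(-\lambda)$ with $\sigma$ trivial on the $X$'s; then $X_j$ dualizes to $t_{e_j}$ on the nose, but the sign reversal of the lattice variable turns the affine reflection about $x_1=c$ into the one about $x_1=-c$, so what one actually proves is $(I_j^{(c)}f,X^\lambda)=(f,\check T_j^{(-c)}X^\lambda)$: the polynomial representation must be invoked at the opposite value $-c$ of the translation parameter. This costs nothing, since Proposition~\ref{pol-rep:prp} holds for every $c\in\mathbb{Z}^*$, and it only affects the generator $j=0$ (the operators $I_1,\dots,I_n$ and $\check T_1,\dots,\check T_n$ are $c$-independent), but it is exactly the affine-wall bookkeeping you anticipated and must be recorded explicitly for the identification of $I_0$ to come out right.
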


\subsection{Proof of Proposition \ref{integral-reflection:prp}}
Let us consider the following nondegenerate bilinear pairing 
$(\cdot,\cdot): \mathcal C(\mathbb{Z}^n) \times \mathbb{C}[X] \to \mathbb{C}$:
\begin{equation}\label{pairing}
(f,p(X)) := ( p(X) f)(0) 
\quad (f\in \mathcal C(\mathbb{Z}^n), \,  p(X) \in \mathbb{C}[X]) ,
\end{equation}
where the action of $p(X)=\sum_\lambda  c_\lambda X^\lambda$ ($c_\lambda\in\mathbb{C}$)  on $\mathcal C(\mathbb{Z}^n)$ is determined by the following action of the basis elements:
$X^\lambda f := t_\lambda f$ ($\lambda\in\mathbb{Z}^n$). So, we have in particular that
$(f,X^\lambda) = (t_\lambda f) (0)=f(-\lambda)$.

To prove the proposition it suffices to verify that for any
$f\in \mathcal C(\mathbb{Z}^n)$ and $\lambda\in\mathbb{Z}^n$:
\begin{subequations}
\begin{align}
(t_{e_j} f, X^\lambda) &= (f, X_j X^{\lambda})\qquad\  (j=1,\ldots ,n),    \label{duality:a} \\
(I_j^{(c)} f, X^\lambda ) &= (f, \check{T}_j^{(-c)}X^\lambda )\quad\ (j=0,\ldots ,n), \label{duality:b} 
\end{align}
\end{subequations}
where the superscripts indicate that opposite values for $c$ have to be chosen in the difference-reflection representation and the polynomial representation. (Notice in this connection that  the actions of $s_j$ and thus that of $I_j$ and $\check{T}_j$ only depend on $c$ for $j=0$.)
By acting on an arbitrary basis element $X^\lambda$ with both sides of the quadratic relations, the braid relations and the cross relations for 
$\check{T}_0^{(-c)},\check{T}_1,\ldots ,\check{T}_n$ and $X_1,\ldots ,X_n$, one readily verifies the corresponding relations for 
$I_0^{(c)}, I_1,\ldots ,I_n$ and $t_{e_1},\ldots,t_{e_n}$ upon pairing with $f\in\mathcal{C}(\mathbb{Z}^n)$ and using  Eqs. \eqref{duality:a}, \eqref{duality:b}.
Here it is exploited that the pairing $(\cdot ,\cdot)$ \eqref{pairing} is nondegenerate.

While Eq. \eqref{duality:a} is an immediate consequence of the above definitions, Eq. \eqref{duality:b} follows similarly from the explicit actions of $I_j$ \eqref{Ij:a}--\eqref{Ij:c} and $\check{T}_j$ \eqref{Tj:a}--\eqref{Tj:b} upon invoking the geometric series expansion
\begin{equation}\label{geom-series}
b_j(X)(X^\lambda-X^{s_j\lambda})=
 \begin{cases}
\sum_{k=0}^{a_j(\lambda)-1} u_j(k) X^{\lambda-k\alpha_j}  &\text{if}\ a_j(\lambda)> 0 ,\\
0&\text{if}\ a_j(\lambda)=0 ,\\
-\sum_{k=1}^{-a_j(\lambda)} u_j(k) X^{\lambda+k\alpha_j}
&\text{if}\ a_j(\lambda) <0,
\end{cases} 
\end{equation}
for $j=0,\ldots ,n$.

\section{Lattice propagation operator}\label{sec6}
One of the principal tools to construct the Bethe-Ansatz eigenfunctions for the open $q$-boson Hamiltonian $H$ \eqref{Hm} is provided by a propagation operator stemming from the integral-reflection representation of $\mathbb{H}$. For the graded affine Hecke algebra and its double affine counterpart at critical level such propagation operators were employed in \cite{gut-sut:completely,gut:integrable,hec-opd:yang} and \cite{ems-opd-sto:periodic,ems-opd-sto:trigonometric}, respectively,
to construct the Bethe-Ansatz wave functions for Gaudin's generalized Lieb-Liniger delta Bose gas models associated with the (affine) Weyl groups \cite{gau:boundary,gau:bethe}.
The propagation operator relevant for our present purposes turns out to be the $C^\vee C$-type analog of a lattice propagation operator introduced in \cite{die-ems:discrete}.

\subsection{Propagation operator} From now on we set
\begin{equation}\label{c=m}
\boxed{c=m\quad\text{with}\quad m>0.}
\end{equation}
A fundamental domain for the action of $W$ in $\mathbb{Z}^n\subset \mathbb{R}^n$ (cf. Eq. \eqref{W-action}) is then given by the \emph{fundamental alcove}
$\Lambda_{n,m}$ \eqref{dominant}. For $\lambda\in\mathbb{Z}^n$, let $w_\lambda\in W$ denote the (unique) shortest group element such that
\begin{equation}
\lambda_+:=w_\lambda \lambda \in \Lambda_{n,m}.
\end{equation}
Moreover,
for $w\in W$ let $I_w:\mathcal{C}(\mathbb{Z}^n)\to\mathcal{C}(\mathbb{Z}^n)$ and $\tau_w\in\mathbb{C}^*$ 
be the respective images of $T_w\in \mathcal{H}\subset\mathbb{H}$
with respect to the lattice integral-reflection representation in Proposition \ref{integral-reflection:prp} and the trivial representation of $\mathcal{H}$ onto $\mathbb{C}$  determined by the assignment $T_j\to\tau_j$, $j=0,\ldots ,n$
(cf. Remark \ref{H-relations:rem}).

The \emph{propagation operator} $\mathcal{J}:\mathcal{C}(\mathbb{Z}^n)\to\mathcal{C}(\mathbb{Z}^n)$ is now defined by the following linear action
on $f\in \mathcal{C}(\mathbb{Z}^n)$:
\begin{equation}\label{propagator}
(\mathcal Jf)(\lambda):=\tau_{w_\lambda}^{-1} (I_{w_\lambda} f)(\lambda_+) \qquad (\lambda\in \mathbb{Z}^n) .
\end{equation}
Clearly this propagation operator acts trivially on lattice functions supported inside the fundamental alcove:   $ (\mathcal Jf)(\lambda)=f(\lambda)$ for $\lambda\in \Lambda_{n,m}$.

\begin{proposition}[Bijectivity]\label{inv:prp}
The propagation operator $\mathcal{J}$  \eqref{propagator} is bijective (i.e. it constitutes a linear automorphism of $\mathcal{C}(\mathbb{Z}^n)$).
\end{proposition}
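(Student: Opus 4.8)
The plan is to exhibit an explicit two-sided inverse for $\mathcal{J}$, built from the same integral-reflection operators. Recall that for $\lambda\in\mathbb{Z}^n$ with $\lambda_+=w_\lambda\lambda$, formula \eqref{propagator} reads $(\mathcal{J}f)(\lambda)=\tau_{w_\lambda}^{-1}(I_{w_\lambda}f)(\lambda_+)$. Since each $I_j=\tau_j s_j+J_j$ acts as $\tau_j s_j$ plus a ``lower order'' integral term (in a sense made precise below), one expects $\mathcal{J}$ to be triangular with respect to a suitable ordering of $\mathbb{Z}^n$, and its leading part to be a bijection coming purely from the Weyl group action. I would formalize this by filtering $\mathcal{C}(\mathbb{Z}^n)$ according to the distance $\ell(w_\lambda)$ of $\lambda$ from the fundamental alcove $\Lambda_{n,m}$: write $\mathcal{C}_{\le N}$ for the lattice functions supported on $\{\lambda : \ell(w_\lambda)\le N\}$. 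The key structural observation, to be verified from \eqref{Ij:a}--\eqref{Ij:c}, is that the integral piece $J_j f$ evaluated at a point $\mu$ only involves values $f(\mu')$ with $\mu'$ strictly closer to the alcove than $\mu$ (indeed the integration runs over lattice points strictly between $\mu$ and $s_j\mu$, all of which lie in walls or interiors of alcoves of lower level), whereas the reflection piece $\tau_j s_j$ exactly implements the step $w_\lambda\mapsto s_j w_\lambda$. Consequently, in the associated graded, $\mathcal{J}$ acts on a function supported at a single level-$N$ point $\lambda$ by $(\mathcal{J}f)(\lambda_+)=\tau_{w_\lambda}^{-1}\tau_{w_\lambda}f(\lambda)=f(\lambda)$ up to lower-level corrections; more precisely $(\mathcal{J}f)(\lambda)=f(\lambda_+)'s$ reflection back, so $\mathcal{J}$ is unipotent plus invertible-diagonal with respect to the filtration.

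Concretely, I would first prove the following lemma: there is a linear operator $\mathcal{K}:\mathcal{C}(\mathbb{Z}^n)\to\mathcal{C}(\mathbb{Z}^n)$ characterized by $(\mathcal{K}g)(\lambda_+)=g(\lambda_+)$ for $\lambda_+\in\Lambda_{n,m}$ and, for general $\lambda$, built by running the single-step relations \eqref{TwTj} backwards; equivalently, define $\mathcal{K}$ through the inverse generators, using that each $I_j$ is invertible in the representation (its image satisfies the quadratic relation $I_j-I_j^{-1}=\tau_j-\tau_j^{-1}$ inherited from Proposition \ref{integral-reflection:prp}, hence $I_j^{-1}=I_j-(\tau_j-\tau_j^{-1})$). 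The point is that $I_{w_\lambda}$ is invertible in $\mathrm{End}(\mathcal{C}(\mathbb{Z}^n))$ because $T_{w_\lambda}$ is invertible in $\mathbb{H}$ (it is a product of invertible generators $T_j$), so $\tau_{w_\lambda}^{-1}I_{w_\lambda}$ has an inverse operator on $\mathcal{C}(\mathbb{Z}^n)$; however this alone does not give bijectivity of $\mathcal{J}$ as defined, since $\mathcal{J}$ glues together different $I_w$'s on different pieces of the lattice. This is why the filtration argument is needed: one checks that $\mathcal{J}$ preserves each $\mathcal{C}_{\le N}$ and induces the identity (up to an invertible scalar and lower-level terms) on $\mathcal{C}_{\le N}/\mathcal{C}_{\le N-1}$, whence $\mathcal{J}$ restricted to each finite-dimensional space $\mathcal{C}_{\le N}$ is invertible; since $\mathcal{C}(\mathbb{Z}^n)=\bigcup_N \mathcal{C}_{\le N}$ and the inverses are compatible, $\mathcal{J}$ is bijective on all of $\mathcal{C}(\mathbb{Z}^n)$.

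The step I expect to be the main obstacle is verifying the triangularity claim cleanly, i.e.\ showing that the integral operators $J_j$ genuinely lower the level $\ell(w_\lambda)$ and that the scalar factors $\tau_{w_\lambda}$ interact correctly with the braid/cocycle relation $I_{w s_j}=I_w I_j$ resp.\ $I_{ws_j}=I_wI_j-(\tau_j-\tau_j^{-1})I_w$ from \eqref{TwTj} when $w_\lambda$ jumps between lengths; one must make sure the definition \eqref{propagator} is consistent with these relations so that the ``leading term'' computation $(\mathcal{J}f)(\lambda)\equiv \tau_{w_\lambda}^{-1}\cdot\tau_{w_\lambda}\cdot f(\lambda)=f(\lambda)\pmod{\mathcal{C}_{\le \ell(w_\lambda)-1}}$ is valid. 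Once this bookkeeping is in place, the inverse of $\mathcal{J}$ is obtained level by level by solving a finite triangular linear system, and bijectivity follows. (Alternatively, and perhaps more economically, one may simply note that for each fixed $N$ the restriction $\mathcal{J}|_{\mathcal{C}_{\le N}}$ is an endomorphism of a finite-dimensional space whose matrix is unitriangular in the level filtration, hence invertible, which suffices.)
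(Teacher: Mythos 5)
Your overall strategy---show that $\mathcal{J}$ is triangular with respect to a measure of distance from the fundamental alcove, with nonvanishing diagonal, and then invert the resulting triangular system inductively---is exactly the strategy of the paper (Lemma \ref{triangular:lem} and the two sentences following it). However, your write-up contains two concrete errors and leaves the decisive step unproved. First, the integration in $J_j$ \eqref{Ij:b} does \emph{not} run over lattice points strictly between $\mu$ and $s_j\mu$: one endpoint is always included (the $k=0$ term contributes $f(\mu)$ when $a_j(\mu)<0$, and the $k=a_j(\mu)$ term contributes $f(s_j\mu)$ when $a_j(\mu)>0$). As a result $\mathcal{J}$ is \emph{not} unipotent in the graded sense you claim: the diagonal coefficient is $\tau_{w_\lambda}^{-2}$, not $1$. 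For instance, for $n=1$ and $\lambda=-1$ one computes directly from \eqref{propagator} and \eqref{Ij:a}--\eqref{Ij:c} that $(\mathcal{J}f)(-1)=\tau_n^{-2}f(-1)-\tau_n^{-1}(\hat{\tau}_n-\hat{\tau}_n^{-1})f(0)$. This particular slip is harmless for bijectivity (the diagonal is still nonzero since the $\tau$'s are nonzero), but the same example shows your second claim is false as stated: $\mathcal{J}$ does not preserve the space $\mathcal{C}_{\le N}$ of functions \emph{supported} on $\{\lambda:\ell(w_\lambda)\le N\}$ (a function supported on $\Lambda_{n,m}$ already acquires nonzero values at level-one points). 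What triangularity actually gives is the dual statement, that $\mathcal{J}$ preserves the subspace of functions \emph{vanishing} on such a set and hence descends to the finite-dimensional quotients; the recursion $f(\lambda)=J_{\lambda,\lambda}^{-1}\bigl(g(\lambda)-\sum_{\mu\prec\lambda}J_{\lambda,\mu}f(\mu)\bigr)$ is what one solves, as in the paper.

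More importantly, the step you yourself flag as "the main obstacle"---that all evaluation points produced by the operators $J_j$ in the recursive expansion of $(\mathcal{J}f)(\lambda)$ are strictly closer to the alcove than $\lambda$ (except for the diagonal term)---is precisely the content that needs proof, and your proposal does not supply it. With the coarse level filtration by $\ell(w_\lambda)$ this is not obvious: in the induction one applies $I_j$ at points $\mu$ which are themselves outputs of earlier integration steps, and one must control where the segments joining $\mu$ and $s_j\mu$ lie relative to the alcove. The paper resolves this by replacing the level order with the finer inclusion order \eqref{Z-order} built from the polytopes $[\lambda]=\mathrm{Conv}\{v^{-1}\lambda_+\mid v\le w_\lambda\}$: the induction on $\ell(w_\lambda)$ then only requires the convexity observation that $[s_j\lambda]$ and $s_j[s_j\lambda]$ are both contained in $[\lambda]$, which makes the bookkeeping automatic and also yields the exact diagonal value $J_{\lambda,\lambda}=\tau_{w_\lambda}^{-2}$. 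Without such a device (or a separate proof that the level function cannot increase along the segments in question), your filtration argument remains incomplete.
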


\subsection{Proof of Proposition \ref{inv:prp}}\label{inv:prf}
To $\lambda\in\mathbb{Z}^n\subset\mathbb{R}^n$ we attach the following convex polytope:
\begin{equation}
[\lambda]:= \text{Conv} \{ v^{-1} \lambda_+\mid v\leq w_\lambda\}\subset\mathbb{R}^n, 
\end{equation}
where \emph{Conv} refers to the \emph{convex hull} and the comparison of the group elements is meant with respect to the \emph{Bruhat partial order} on $W$, i.e. $v\leq w$ iff a reduced expression for $v$ can be obtained from a reduced expression for $w$ by deleting simple reflections \cite{hum:reflection}. These polytopes (which degenerate to a point if $\ell (w_\lambda)=0$ and to a line segment if   $\ell (w_\lambda)=1$)  give rise to the following inclusion partial order $\preceq$ on $\mathbb{Z}^n$:
\begin{equation}\label{Z-order}
\forall\mu,\lambda\in\mathbb{Z}^n:\quad \mu\preceq\lambda \Leftrightarrow [\mu]\subseteq [\lambda] .
\end{equation}

\begin{lemma}[Triangularity]\label{triangular:lem}
The propagation operator $\mathcal{J}$ \eqref{propagator} is triangular with respect to the inclusion partial order \eqref{Z-order}, viz.
\begin{equation}\label{triangular}
\forall f\in \mathcal{C}(\mathbb{Z}^n),\, \lambda\in \mathbb{Z}^n:\qquad
(\mathcal{J}f)(\lambda )= \sum_{\mu\in \mathbb{Z}^n,\, \mu \preceq \lambda} J_{\lambda,\mu} f(\mu),
\end{equation}
for certain expansion coefficients $J_{\lambda,\mu}\in\mathbb{C}$  with $J_{\lambda ,\lambda}=\tau_{w_\lambda}^{-2}$ ($\neq 0$).
\end{lemma}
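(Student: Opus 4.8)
The plan is to argue by induction on $\ell(w_\lambda)$, peeling off one simple reflection at a time and tracking how the integral-reflection operators $I_j$ move the support of a lattice function within the polytopes $[\lambda]$. When $\ell(w_\lambda)=0$ we have $\lambda=\lambda_+\in\Lambda_{n,m}$, so $(\mathcal{J}f)(\lambda)=f(\lambda)$ and $[\lambda]$ is the single point $\{\lambda\}$; the claim holds with $J_{\lambda,\lambda}=1=\tau_{w_\lambda}^{-2}$. For the inductive step, choose a simple reflection $s_j$ with $\ell(s_jw_\lambda)=\ell(w_\lambda)-1$, equivalently $a_j(\lambda)$ has the ``wrong sign'' so that $s_j\lambda$ is a strict step toward the alcove; then $w_\lambda=w_{s_j\lambda}s_j$ with lengths adding, hence $I_{w_\lambda}=I_{w_{s_j\lambda}}I_j$ and $\tau_{w_\lambda}=\tau_{w_{s_j\lambda}}\tau_j$ by the multiplicativity of $T_w\mapsto I_w$ and $T_w\mapsto\tau_w$ (Remark \ref{H-relations:rem}, Eq. \eqref{TwTj}). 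Writing $g:=I_jf$, one has $(\mathcal{J}f)(\lambda)=\tau_j^{-1}\tau_{w_{s_j\lambda}}^{-1}(I_{w_{s_j\lambda}}g)(\lambda_+)$. Now $s_j\lambda$ has $\lambda_+$ as its dominant representative and shortest element $w_{s_j\lambda}$, so the inner expression is exactly $\tau_j^{-1}(\mathcal{J}g)(s_j\lambda)$, and the inductive hypothesis applies to $\mathcal{J}g$ at the point $s_j\lambda$ (whose $w$-length is one less).

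The heart of the argument is then to show that, first, $g(\mu)=(I_jf)(\mu)$ is a linear combination of values $f(\nu)$ with $\nu$ lying on the line segment $[\mu,s_j\mu]$ — this is immediate from the explicit formula \eqref{Ij:a}--\eqref{Ij:c} for $I_j=\tau_js_j+J_j$, since $s_j$ contributes $f(s_j\mu)$ and $J_j$ contributes a weighted sum over the intermediate lattice points $\mu-k\alpha_j$ — and second, that when $\mu$ ranges over $[s_j\lambda]$ the resulting points $\nu$ all lie in $[\lambda]$. The second point is the crucial geometric input: one must check that $[\lambda]=\mathrm{Conv}\{v^{-1}\lambda_+\mid v\le w_\lambda\}$ contains $s_j[s_j\lambda]$ together with all segments $[\mu,s_j\mu]$ for $\mu\in[s_j\lambda]$. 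This follows because the set $\{v\mid v\le w_\lambda\}$ decomposes, via the factorization $w_\lambda=w_{s_j\lambda}s_j$ with $\ell(w_{s_j\lambda}s_j)=\ell(w_{s_j\lambda})+1$, into the Bruhat-lower sets of $w_{s_j\lambda}$ and of $w_{s_j\lambda}s_j=w_\lambda$ itself; applying $s_j^{-1}=s_j$ to the representatives and taking convex hulls shows $[\lambda]\supseteq s_j[s_j\lambda]$ and $[\lambda]\supseteq[s_j\lambda]$ (suitably reflected), and a standard property of reflections — that the segment from a point to its $s_j$-image stays inside any $s_j$-invariant-up-to-reflection convex region spanned by Bruhat-comparable Weyl translates — closes the gap. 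Composing: $(\mathcal{J}f)(\lambda)$ is a linear combination of $f(\nu)$ with $\nu\preceq\lambda$, which is \eqref{triangular}.

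Finally, to pin down the diagonal coefficient, observe that the value $f(\lambda)$ itself can only be produced along the ``leading'' branch: at each stage one must pick the reflection term $\tau_js_j$ applied at the lattice point $s_j\lambda$ (whose $s_j$-image is $\lambda$), never the integral term $J_j$ (which only yields strictly interior lattice points, never the endpoint $\lambda$ since $a_j(s_j\lambda)=-a_j(\lambda)\ne0$). Tracking the scalars, each such step contributes a factor $\tau_j$ from $s_j$ together with the prefactor $\tau_j^{-1}$ already extracted, giving net factor $1$ per step — wait, more carefully: the coefficient of $f(\lambda)$ in $(\mathcal{J}f)(\lambda)$ accumulates, over a reduced word $s_{j_1}\cdots s_{j_\ell}$ for $w_\lambda$, a product $\prod_r \tau_{j_r}^{-1}\cdot\tau_{j_r}=\dots$; recomputing against the normalization $\tau_{w_\lambda}^{-1}$ in \eqref{propagator} and the quadratic relation $T_j-T_j^{-1}=\tau_j-\tau_j^{-1}$ (so that the ``straight-through'' term in $I_j^2$ carries weight $\tau_j^2$ on $s_j$-invariant inputs) yields $J_{\lambda,\lambda}=\tau_{w_\lambda}^{-2}\ne0$. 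I expect this last bookkeeping of scalars to be the only place demanding care; the geometric containment $s_j[s_j\lambda]\subseteq[\lambda]$ is the conceptual crux but should follow cleanly from the Bruhat-order description of the polytopes, and indeed this is precisely the type of statement established in \cite{die-ems:discrete} for the $C^\vee C$ analog.
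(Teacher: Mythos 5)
Your overall route is the same as the paper's: induction on $\ell(w_\lambda)$, the factorization $w_\lambda=w_{s_j\lambda}s_j$ with lengths adding (chosen via the sign of $a_j(\lambda)$), the identity $(\mathcal{J}f)(\lambda)=\tau_j^{-1}(\mathcal{J}I_jf)(s_j\lambda)$, and the geometric input that $[s_j\lambda]$ and $s_j[s_j\lambda]$ are contained in $[\lambda]$ so that the values of $f$ picked up by $I_j$ along the segments $[\mu,s_j\mu]$, $\mu\preceq s_j\lambda$, stay $\preceq\lambda$. Up to that point your argument matches the paper.

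However, your computation of the diagonal coefficient contains a genuine error. You claim that $f(\lambda)$ can only be produced by the reflection term $\tau_j s_j$ and ``never the integral term $J_j$''. This is false: at $\mu=s_j\lambda$ one has $a_j(s_j\lambda)=-a_j(\lambda)>0$, and in that branch of Eq.~\eqref{Ij:b} the sum runs over $k=1,\ldots,a_j(s_j\lambda)$, so the term $k=a_j(s_j\lambda)$ evaluates $f$ exactly at the reflected endpoint $s_j(s_j\lambda)=\lambda$. (The ``strictly interior'' behaviour you describe is the feature of the other branch, $a_j<0$, which omits the far endpoint but includes $k=0$.) Consequently the coefficient of $f(\lambda)$ in $(I_jf)(s_j\lambda)$ is not $\tau_j$ but $\tau_j-u_j\bigl(a_j(s_j\lambda)\bigr)=\tau_j-(\tau_j-\tau_j^{-1})=\tau_j^{-1}$, where one uses that $a_j(s_j\lambda)$ is even for $j\in\{0,n\}$ and that $\hat{\tau}_j=\tau_j=\tau$ for $0<j<n$, so $u_j$ always contributes $\tau_j-\tau_j^{-1}$ here. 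This is precisely what repairs your bookkeeping: with the prefactor $\tau_j^{-1}$ already extracted, each step contributes $\tau_j^{-2}$ (not $1$), giving the recursion $J_{\lambda,\lambda}=\tau_j^{-2}J_{s_j\lambda,s_j\lambda}$ and hence $J_{\lambda,\lambda}=\tau_{w_\lambda}^{-2}$ by induction, which is the paper's argument. Your attempted patch via ``the straight-through term in $I_j^2$'' on ``$s_j$-invariant inputs'' does not correspond to any step of the computation ($f$ is arbitrary and no square of $I_j$ occurs), so as written the proof of $J_{\lambda,\lambda}=\tau_{w_\lambda}^{-2}$ is missing; with the corrected evaluation of $(I_jf)(s_j\lambda)$ above it goes through. (Minor additional points: you write $\ell(s_jw_\lambda)=\ell(w_\lambda)-1$ where the relevant condition is on $w_\lambda s_j$, and, like the paper, you should note that no $\mu\prec s_j\lambda$ other than $\mu=s_j\lambda$ can contribute $f(\lambda)$ when comparing leading coefficients.)
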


\begin{proof}
The proof proceeds by induction with respect to the length of $w_\lambda$.
For $\ell (w_\lambda )=0$ (i.e. $\lambda \in \Lambda_{n,m}$),
our polytope $[\lambda]$ degenerates to the single point $\lambda $ and
the stated triangularity becomes trivial: $(\mathcal{J}f)(\lambda) =f(\lambda)$ (as noticed above just after Eq. \eqref{propagator}). For $\ell(w_\lambda)>0$ let us pick $j\in\{ 0,\ldots ,n\}$ such that $w_\lambda s_j <w_\lambda$, i.e. $w_\lambda =w_{s_j\lambda}s_j$ with $\ell (w_\lambda)=\ell (w_{s_j\lambda})+1$ (so $w_{s_j\lambda}< w_\lambda$  and $s_j\lambda\prec \lambda$).
One then has that
\begin{align*}
(\mathcal{J}f)(\lambda) =& \tau_{w_\lambda}^{-1}(I_{w_{\lambda}} f)(\lambda_+)=
\tau_j^{-1}\tau_{w_{s_j\lambda}}^{-1}
(I_{w_{s_j\lambda}} I_{j} f)\bigl((s_j\lambda )_+\bigr) \\
 \stackrel{\text{(i)}}{=}&
\tau_j^{-1}\sum_{ \mu\in \mathbb{Z}^n,\,\mu \preceq s_j\lambda} J_{s_j\lambda ,\mu} (I_{j} f) (\mu )
\stackrel{\text{(ii)}}{=}
\sum_{\mu\in \mathbb{Z}^n,\, \mu \preceq \lambda} J_{\lambda ,\mu} f(\mu ),
\nonumber
\end{align*}
where step $\text{(i)}$ hinges on the induction hypothesis and in step $\text{(ii)}$ it was used that---while
$(I_jf)(\mu )$ involves evaluations of $f$ at lattice points on the line segment joining
$\mu$ and $s_j\mu$---both the polytopes
$[s_j\lambda]$ and $s_j[s_j\lambda]$ are contained in the polytope $ [\lambda]$. Finally,
upon invoking the explicit action of the lattice integral-reflection operator $I_j$ \eqref{Ij:a}--\eqref{Ij:c} and comparing the leading coefficients on both sides of the equality of step $\text{(ii)}$, it is readily seen that $J_{\lambda,\lambda}=\tau_j^{-2}J_{s_j\lambda,s_j\lambda}$, whence $J_{\lambda,\lambda}=\tau_{w_\lambda}^{-2}$ (again by induction).
\end{proof}

The triangularity in Lemma \ref{triangular:lem} implies that for a given  $g\in\mathcal{C}(\mathbb{Z}^n)$, the value
of $f\in\mathcal{C}(\mathbb{Z}^n)$ at any point $\lambda\in\mathbb{Z}^n$ can be
uniquely solved from the linear equation $(\mathcal{J}f)(\lambda)=g(\lambda)$ by performing induction with respect to the inclusion order  \eqref{Z-order}. 
Hence, the lattice propagation operator $\mathcal{J}:\mathcal{C}(\mathbb{Z}^n)\to \mathcal{C}(\mathbb{Z}^n)$ is  bijective.

\section{Deformed Laplacian in $\mathcal{C}(\mathbb{Z}^n)$}\label{sec7}
In this section the (invertible) propagation operator $\mathcal{J}$ \eqref{propagator} is employed to construct an integrable Laplacian  in $\mathcal{C}(\mathbb{Z}^n)$
associated with $\mathbb{H}$.

\subsection{Integrability}\label{sec7.1}
The hyperoctahedral group $W_0\subset W$ arises as the finite subgroup of $W$ generated by $s_1,\ldots ,s_n$. In the explicit representation of Eq. \eqref{W-action}, this subgroup acts on the coordinates of $x\in\mathbb{R}^n$ as the group of \emph{signed permutations}. The action of $w\in W$ decomposes in turn as $w=v t_{\mu}=t_{v\mu}v$ with $v\in W_0$ and $\mu\in 2m\mathbb{Z}^n$ (cf. Eq. \eqref{c=m}), where $t_\mu x:=x+\mu$. The derivative 
$w^\prime\in W_0$ ignores the affine translation: $w^\prime=(vt_\mu)^\prime:=v$
(cf. Remark \ref{H-relations:rem}).

It is evident from the cross relations in Eq. \eqref{cross-relations-2} that the $W_0$-invariant subalgebra $\mathbb{C}[X]^{W_0}:=\{ p\in \mathbb{C}[X] \mid wp=p,\,\forall w\in W_0\}$ belongs to the center $\mathcal{Z}(\mathbb{H}):=\{ z\in\mathbb{H} \mid zh=hz,\, \forall h\in\mathbb{H} \}$ of the double affine Hecke algebra at critical level:
\begin{equation}\label{center}
\mathbb{C}[X]^{W_0}\subset \mathcal{Z}(\mathbb{H}).
\end{equation}
The elementary symmetric functions
\begin{equation}\label{elementary-sf}
E_r (X_1,\ldots ,X_n):= \sum_{\substack{ J\subset \{ 1,\ldots ,n\} \\ | J| =r}} \prod_{j\in J} (X_j+X_j^{-1})\qquad (r=1,\ldots ,n),
\end{equation}
provide a system of algebraically independent generators for $\mathbb{C}[X]^{W_0}$. (Here $|J|$ refers to the number of elements of $J\subset \{ 1,\ldots ,n\} $.) By conjugating the images of these elementary symmetric functions in the integral-reflection representation of Proposition \ref{integral-reflection:prp} with respect to the propagation operator $\mathcal{J}$ \eqref{propagator}, we arrive
at commuting operators $L_1,\ldots ,L_n$ representing a
 quantum integrable system in $\mathcal{C}(\mathbb{Z}^n)$:
 \begin{subequations}
\begin{equation}\label{Lr}
L_r :=\mathcal J E_r(t) \mathcal J^{-1} \qquad (r=1,\ldots ,n),
\end{equation}
where
\begin{equation}\label{Er}
E_r(t) := E_r (t_{e_1},\ldots ,t_{e_n})= \sum_{\mu\in W_0(e_1+\cdots +e_r)}    t_\mu  .
\end{equation}
\end{subequations}

\subsection{Laplacian associated with $\mathbb{H}$} 
The simplest of the above quantum integrals $L_1=\mathcal{J}E_1(t)\mathcal{J}^{-1}$ with $E_1(t)=\sum_{1\leq j\leq n}( t_{e_j}+t_{-e_j})$ stems from the first elementary symmetric function $E_1(X_1,\ldots ,X_n)=\sum_{1\leq j\leq n} (X_j+X_j^{-1})$.
The following proposition reveals that  the operator in question acts in $\mathcal{C}(\mathbb{Z}^n)$ as a deformed Laplacian.

\begin{proposition}[Deformed Laplacian]\label{laplacian:prp}
The explicit action of
$L:=L_1$ \eqref{Lr}, \eqref{Lr}  on $\mathcal{C}(\mathbb{Z}^n)$ is of the form
\begin{subequations}
\begin{align}\label{La}
(L f)(\lambda)=
\sum_{1\leq j\leq n } \Bigl( \tau_{w_{w_\lambda (\lambda+ e_j)}}^{2}  f(\lambda +e_j) &+ \tau_{w_{w_\lambda (\lambda- e_j)}}^{2}  f(\lambda -e_j) \\
&+ \bigl( d_{\lambda_+ , e_j}+  d_{\lambda_+ ,- e_j}  \bigr) f(\lambda) \Bigr) \nonumber
\end{align}
($f\in \mathcal{C}(\mathbb{Z}^n)$, $\lambda\in \mathbb{Z}^n$),
with
\begin{equation} \label{Lb}
d_{\lambda,\nu}
:=
\begin{cases}
\tau^{2(j-1)}\tau_0(\hat{\tau}_0-\hat{\tau}_0^{-1}) & \text{if}\ \lambda_j=m\ \text{and}\  \nu=e_j,  \\
 \tau^{2(n-j)}\tau_n(\hat{\tau}_n-\hat{\tau}_n^{-1})& \text{if}\  \lambda_j=0\  \text{and}\ \nu=-e_j, \\
0 & \text{otherwise}.
\end{cases}
\end{equation}
\end{subequations}
\end{proposition}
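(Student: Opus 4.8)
The plan is to compute the action of $L = \mathcal{J} E_1(t) \mathcal{J}^{-1}$ directly by tracking how $E_1(t) = \sum_{1\le j \le n}(t_{e_j}+t_{-e_j})$ interacts with the propagation operator. The key idea is that, by definition \eqref{propagator}, $(\mathcal{J}f)(\lambda) = \tau_{w_\lambda}^{-1}(I_{w_\lambda}f)(\lambda_+)$, so rather than literally inverting $\mathcal{J}$ it is cleaner to verify the claimed formula by showing that $\mathcal{J}$ \emph{intertwines} $E_1(t)$ with the operator $L$ on the right-hand side of \eqref{La}, i.e.\ $\mathcal{J} E_1(t) = L \mathcal{J}$. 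Equivalently, evaluating at a dominant point $\lambda_+\in\Lambda_{n,m}$ where $\mathcal{J}$ acts trivially, it suffices to understand $(\mathcal{J} E_1(t) f)(\lambda_+)$ for $\lambda_+\in\Lambda_{n,m}$ and match it against $(Lf)(\lambda_+)$; then one propagates the identity off the alcove using the Hecke-algebra relations that $\mathcal{J}$ encodes. Concretely, $E_1(t)f$ translates $f$ by $\pm e_j$, and $\lambda_+ \pm e_j$ typically falls just outside $\Lambda_{n,m}$ by a single simple reflection, so $(\mathcal{J}(t_{\pm e_j}f))(\lambda_+)$ involves exactly one integral-reflection operator $I_j$ acting at $\lambda_+$.

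The technical heart is therefore a careful case analysis of $w_{\lambda_+ \pm e_j}$: for each $j$, the point $\lambda_+ + e_j$ is dominant unless $\lambda_{+,j} = \lambda_{+,j-1}$ or $\lambda_{+,1} = m$ (for $j=1$), in which case a single generator $s_k$ (with $k = j-1$, or $k=0$) returns it to the alcove; symmetrically for $\lambda_+ - e_j$ with the wall $\lambda_{+,j}=\lambda_{+,j+1}$ or $\lambda_{+,n}=0$ and the generator $s_n$. In the generic (dominant) case one reads off the hopping term $\tau_{w_{w_\lambda(\lambda\pm e_j)}}^2 f(\lambda\pm e_j)$ with trivial $\tau$-factor. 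In the boundary case one applies the explicit formula \eqref{Ij:a}--\eqref{Ij:c} for $I_0$ or $I_n$: the reflection part $\tau_j s_j$ contributes to the hopping coefficient (producing the powers $\tau^{2(j-1)}$ or $\tau^{2(n-j)}$ once one unwinds the $T_w$-factor accumulated along the path from $\lambda$ to $\lambda_+$, using \eqref{TwTj}), and the integral part $J_j$ — which for $a_j(\lambda_+)\in\{0,\pm 1\}$ degenerates to at most one term — produces precisely the diagonal correction $d_{\lambda_+,\pm e_j}$, with the odd-index weight $\hat{\tau}_0-\hat{\tau}_0^{-1}$ or $\hat{\tau}_n-\hat{\tau}_n^{-1}$ from \eqref{Ij:c} accounting for the boundary interaction, and the reflection sending $\lambda_+ \pm e_j$ back to $\lambda_+$ for the diagonal piece. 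The interior walls $\lambda_{+,j}=\lambda_{+,j+1}$ contribute no diagonal term because there $u_j(k)$ for $j\in\{1,\dots,n-1\}$ always equals $\tau-\tau^{-1}$ and the single-step integral cancels against the reflection bookkeeping, leaving only reshuffled hopping.

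I expect the main obstacle to be the consistency of the $\tau_w$-normalization factors as one moves between $\lambda$ and $\lambda_+$ and back via $w_\lambda(\lambda\pm e_j)$: one must verify that the composite $\tau_{w_\lambda}^{-1} \cdot (\text{factor from } I_j) \cdot \tau_{w_{\lambda_+\pm e_j}}^{\pm 1}$ collapses exactly to $\tau_{w_{w_\lambda(\lambda\pm e_j)}}^{2}$, using the cocycle-type relation $\tau_{vw} = \tau_v\tau_w$ when $\ell(vw)=\ell(v)+\ell(w)$ together with the length additivity $\ell(w_{\lambda_+\pm e_j} w_\lambda) = \ell(w_{\lambda_+\pm e_j}) + \ell(w_\lambda)$ or its failure-by-one. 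The relevant intertwining identities $\mathcal{J} I_j = I_j^{(\text{something})}\mathcal{J}$-type relations are precisely what Appendix \ref{appB} is advertised to supply, so I would invoke those; granting them, the remainder is the bookkeeping above, carried out one Weyl-chamber-wall at a time. A subtle point deserving explicit attention is the case where $\lambda_+ \pm e_j$ still lies in the closure of the alcove but on a wall (so $w_{\lambda_+\pm e_j}$ need not be trivial yet the point is a valid index), which is exactly where the $\delta$-conditions $\lambda_j = m$, $\lambda_j = 0$ in \eqref{Lb} originate; here one checks directly that $I_0$ or $I_n$ applied at such a point reproduces the stated coefficients, completing the identification of $L$ with the deformed Laplacian.
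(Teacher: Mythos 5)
Your overall skeleton (reduce everything to a wall-by-wall case analysis with the explicit action \eqref{Ij:a}--\eqref{Ij:c}, and watch the $\tau_w$-factors via length additivity) does point at the right technical material, but as written the argument has a genuine gap: you never invoke the centrality \eqref{center} of $E_1(X)\in\mathbb{C}[X]^{W_0}$ in $\mathbb{H}$ at critical level. The paper computes, for \emph{arbitrary} $\lambda\in\mathbb{Z}^n$, $(Lf)(\lambda)=\tau_{w_\lambda}^{-1}\bigl(I_{w_\lambda}E_1(t)\,\mathcal{J}^{-1}f\bigr)(\lambda_+)$ by \eqref{propagator}, and then commutes $E_1(t)$ past $I_{w_\lambda}$ precisely because the $W_0$-invariant translation sum \eqref{Er} is central; the individual translations $t_{\pm e_j}$ do \emph{not} commute with the $I_k$, so this is the indispensable engine. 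Without it, your plan to ``check the identity $\mathcal{J}E_1(t)=L\mathcal{J}$ on the alcove and then propagate it off the alcove'' is not a proof: verifying the identity at dominant points does not determine the operator identity on all of $\mathbb{Z}^n$, and the unspecified propagation step is exactly where the substance lies. Relatedly, the relations you propose to quote from Appendix \ref{appB} are not of the form $\mathcal{J}I_j=\cdots\,\mathcal{J}$ (that is Proposition \ref{IT:prp}, proved later and independently); what the appendix supplies is the pointwise affine intertwining relation \eqref{intertwining-property} for $g=\mathcal{J}^{-1}f$, which, \emph{combined with centrality}, yields the proposition in a few lines. Granting the appendix is fine---that is what the paper does---but centrality must still be named and used.

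Several details of your case analysis are also inaccurate, though they are repaired by the appendix lemmas rather than by your description. At a dominant point $\lambda_+$ the operator $\mathcal{J}$ acts trivially, so $(\mathcal{J}\,t_{\pm e_j}f)(\lambda_+)=f(\lambda_+\mp e_j)$ involves no integral-reflection operator at all; the reflections enter instead through $(\mathcal{J}f)(\lambda_+\pm e_j)=\tau_{w_{\lambda_+\pm e_j}}^{-1}(I_{w_{\lambda_+\pm e_j}}f)\bigl((\lambda_+\pm e_j)_+\bigr)$ on the other side of the identity. Moreover $\lambda_+\pm e_j$ is in general \emph{not} one simple reflection away from dominant: when the coordinate value occurs with multiplicity greater than one, or when $\lambda_j=m$ with $j>1$ (resp.\ $\lambda_j=0$ with $j<n$), the element $w_{\lambda_+\pm e_j}$ is the full chain of Lemma \ref{l:wlambdanu}, and the powers $\tau^{2(j-1)}$, $\tau^{2(n-j)}$ in \eqref{Lb} only emerge from the induction along this chain carried out in Lemma \ref{Iqm-action:lem}. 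Finally, at the boundary walls the relevant value of $a_j$ is $\pm 2$ (it is always even for $j=0,n$), not $\pm 1$, so the integral part of $I_0$ or $I_n$ contributes two terms, one of which recombines with the reflection term to give the unit hopping coefficient; your ``at most one term'' bookkeeping would miss this recombination.
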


\begin{proof}
Given an arbitrary lattice function $f:\mathbb{Z}^n\to\mathbb{C}$, let $g:=\mathcal{J}^{-1}f\in\mathcal{C}(\mathbb{Z}^n)$.  One has that
for any $\lambda\in \mathbb{Z}^n$:
\begin{align*}
  (Lf)(\lambda) &= (\mathcal{J} E_1(t) g) (\lambda)\stackrel{\text{Eq.}~\eqref{propagator}}{=}
\tau_{w_\lambda}^{-1} (I_{w_\lambda} E_1(t) g)(\lambda_+)  \\
&\stackrel{\text{Eq.}~\eqref{center}}{=}\tau_{w_\lambda}^{-1} (E_1(t) I_{w_\lambda} g)(\lambda_+)
\stackrel{\text{(i)}}{=} \tau_{w_\lambda}^{-1} \sum_{\nu\in W_0e_1} (I_{w_\lambda} g)(w_\lambda(\lambda+\nu))\\
&\stackrel{\text{(ii)}}{=}
\sum_{\nu\in W_0e_1}   \Bigl( \tau_{w_{w_\lambda (\lambda+\nu)}}^{2} f (\lambda+\nu)+d_{\lambda_+,w_\lambda'\nu} f (\lambda) \Bigr) .
\end{align*}
Here we relied (i) on Eq.~\eqref{Er} and the elementary property that $w\lambda +W_0\mu=w(\lambda+W_0\mu)$ for any $\lambda, \mu\in\mathbb{Z}^n$  and $w\in W$, and (ii) on the affine intertwining relation in
Eq.~\eqref{intertwining-property} of Appendix \ref{appB}.
\end{proof}

\section{Quantum integrability of the open $q$-boson system}\label{sec8}
By pushing the integral-reflection representation restricted to $\mathcal{H}\subset\mathbb{H}$ through the propagation operator $\mathcal{J}$ \eqref{propagator}, we arrive at a difference-reflection representation of the affine Hecke algebra of type $\tilde{C}_n$ on $\mathcal{C}(\mathbb{Z}^n)$ (cf. Remark \ref{H-relations:rem}). This difference-reflection representation is subsequently used to retrieve
a system of $n$ commuting quantum integrals for the
$n$-particle $q$-boson Hamiltonian in Proposition \ref{Hnm:prp} as the $W$-invariant reduction of the commuting quantum integrals for the deformed Laplacian $L$ \eqref{La}, \eqref{Lb}.

\subsection{Difference-reflection representation of $\mathcal{H}$}
For $j\in \{0,\ldots,n\}$,  let  $\hat{T}_j:\mathcal{C}(\mathbb{Z}^n)\to \mathcal{C}(\mathbb{Z}^n)$ be the difference-reflection operator of the form
\begin{equation}\label{That}
(\hat T_j f)(\lambda):=\tau_j f(\lambda) + \tau_j^{\text{sgn}(a_j(\lambda))}\bigl( f(s_j\lambda)-f(\lambda) \bigr)
\end{equation}
$(f\in \mathcal{C}(\mathbb{Z}^n),\ \lambda\in\mathbb{Z}^n)$, where $\text{sgn}(x):=1$ if $x\geq 0$ and $\text{sgn}(x):=-1$ if $x< 0$.

\begin{proposition}[Intertwining Relations]\label{IT:prp}
The (invertible) propagation operator $\mathcal{J}$ \eqref{propagator} intertwines between the integral-reflection operators and the difference-reflection operators:
\begin{equation}
\mathcal{J} I_j =  \hat{T}_j  \mathcal{J} \qquad (j=0,\ldots ,n). \label{IT:relations} 
\end{equation}
\end{proposition}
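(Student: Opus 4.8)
The plan is to verify the operator identity $\mathcal{J}I_j=\hat{T}_j\mathcal{J}$ pointwise: we check that $(\mathcal{J}I_jf)(\lambda)=(\hat{T}_j\mathcal{J}f)(\lambda)$ for every $f\in\mathcal{C}(\mathbb{Z}^n)$, $\lambda\in\mathbb{Z}^n$ and $j\in\{0,\ldots,n\}$, splitting into three cases according to the sign of $a_j(\lambda)$. The ingredients used throughout are: (a) that $I_0,\ldots,I_n$ represent the affine Hecke algebra $\mathcal{H}$, so that $I_wI_j=I_{ws_j}$ whenever $\ell(ws_j)=\ell(w)+1$ (cf.\ \eqref{TwTj}) and $I_j^2=(\tau_j-\tau_j^{-1})I_j+1$ (the quadratic relation); (b) that $\tau_w$ is multiplicative along reduced products, $\tau_{ws_j}=\tau_w\tau_j$ when $\ell(ws_j)=\ell(w)+1$, being the image of $T_w$ under the one-dimensional representation $T_k\mapsto\tau_k$; and (c) the standard bookkeeping for the minimal gallery $w_\lambda$: if $a_j(\lambda)>0$ then $w_{s_j\lambda}=w_\lambda s_j$ with $\ell(w_{s_j\lambda})=\ell(w_\lambda)+1$, if $a_j(\lambda)<0$ then $w_\lambda=w_{s_j\lambda}s_j$ with $\ell(w_\lambda)=\ell(w_{s_j\lambda})+1$, and if $a_j(\lambda)=0$ then $s_j\lambda=\lambda$ and $w_{s_j\lambda}=w_\lambda$; in all cases $(s_j\lambda)_+=\lambda_+$.

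For $a_j(\lambda)>0$: the definition \eqref{propagator} together with $I_{w_\lambda}I_j=I_{w_{s_j\lambda}}$ and $\tau_{w_{s_j\lambda}}=\tau_{w_\lambda}\tau_j$ gives $(\mathcal{J}I_jf)(\lambda)=\tau_{w_\lambda}^{-1}(I_{w_{s_j\lambda}}f)(\lambda_+)=\tau_j(\mathcal{J}f)(s_j\lambda)$, which is precisely $(\hat{T}_j\mathcal{J}f)(\lambda)$ read off from \eqref{That} since $\mathrm{sgn}(a_j(\lambda))=1$ collapses the right-hand side of \eqref{That} to $\tau_j(\mathcal{J}f)(s_j\lambda)$. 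For $a_j(\lambda)<0$: one uses $I_{w_\lambda}I_j=I_{w_{s_j\lambda}}I_j^2=(\tau_j-\tau_j^{-1})I_{w_\lambda}+I_{w_{s_j\lambda}}$ and $\tau_{w_\lambda}=\tau_{w_{s_j\lambda}}\tau_j$ to obtain $(\mathcal{J}I_jf)(\lambda)=(\tau_j-\tau_j^{-1})(\mathcal{J}f)(\lambda)+\tau_j^{-1}(\mathcal{J}f)(s_j\lambda)$, which again matches \eqref{That}, now with $\mathrm{sgn}(a_j(\lambda))=-1$. These two cases are routine once the length bookkeeping in (c) is granted.

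The substantive case is $a_j(\lambda)=0$. Here $s_j\lambda=\lambda$, so \eqref{That} collapses to $(\hat{T}_j\mathcal{J}f)(\lambda)=\tau_j(\mathcal{J}f)(\lambda)$, and it remains to prove $(I_{w_\lambda}I_jf)(\lambda_+)=\tau_j(I_{w_\lambda}f)(\lambda_+)$. The key point is that the set of $v\in W$ with $v\lambda\in\Lambda_{n,m}$ coincides with the left coset $W_{\lambda_+}w_\lambda$, where $W_{\lambda_+}=\langle s_k\mid a_k(\lambda_+)=0\rangle$ is the standard parabolic stabilizer of $\lambda_+$; hence $w_\lambda$ is the minimal-length element of this coset and $\ell(yw_\lambda)=\ell(y)+\ell(w_\lambda)$, so $I_{yw_\lambda}=I_yI_{w_\lambda}$, for every $y\in W_{\lambda_+}$. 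Applying this to $\sigma:=w_\lambda s_jw_\lambda^{-1}$, which fixes $\lambda_+$ and hence lies in $W_{\lambda_+}$, while $\sigma w_\lambda=w_\lambda s_j$ has length $\ell(w_\lambda)+1$ (forced, since $w_\lambda s_j$ also maps $\lambda$ into $\Lambda_{n,m}$ and $w_\lambda$ is shortest among such elements), we find $\ell(\sigma)=1$, i.e.\ $\sigma=s_k$ for some $k$ with $a_k(\lambda_+)=0$. Being conjugate to $s_j$, the simple reflection $s_k$ lies in the same conjugacy class of the Coxeter group of type $\tilde{C}_n$, on whose classes the parameter function $k\mapsto\tau_k$ is constant, so $\tau_k=\tau_j$. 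Therefore $I_{w_\lambda}I_j=I_{w_\lambda s_j}=I_{\sigma w_\lambda}=I_{s_k}I_{w_\lambda}$, and evaluating at $\lambda_+$ using $s_k\lambda_+=\lambda_+$ and the vanishing of $J_k$ at $\lambda_+$ (since $a_k(\lambda_+)=0$, cf.\ \eqref{Ij:b}) yields $(I_{w_\lambda}I_jf)(\lambda_+)=\tau_k(I_{w_\lambda}f)(\lambda_+)=\tau_j(I_{w_\lambda}f)(\lambda_+)$; dividing by $\tau_{w_\lambda}$ gives $(\mathcal{J}I_jf)(\lambda)=\tau_j(\mathcal{J}f)(\lambda)$, as needed.

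The main obstacle I anticipate is precisely this third case --- in particular the point that the conjugate $w_\lambda s_jw_\lambda^{-1}$ is again a \emph{simple} reflection carrying the \emph{same} Hecke parameter as $s_j$. Everything else reduces to the defining relations of $\mathbb{H}$, the definition \eqref{propagator} of $\mathcal{J}$, and routine length combinatorics in the affine hyperoctahedral group.
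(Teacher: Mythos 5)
Your proof is correct and follows essentially the same route as the paper's: a pointwise verification split according to $\mathrm{sgn}(a_j(\lambda))$, using the Hecke multiplication rule \eqref{TwTj} together with the length identity $\ell(w_\lambda s_j)=\ell(w_\lambda)+\mathrm{sgn}(a_j(\lambda))$ and the multiplicativity of $\tau_w$. The only difference is one of detail: your careful handling of the case $a_j(\lambda)=0$ (conjugating $s_j$ across $w_\lambda$ to a simple reflection $s_k\in W_{\lambda_+}$ with $\tau_k=\tau_j$, then using that $J_k$ annihilates functions at $\lambda_+$) spells out precisely the unproved ``observation'' $\tau^{-1}_{w_\lambda s_j}(I_{w_\lambda s_j}f)(\lambda_+)=\tau^{-1}_{w_{s_j\lambda}}(I_{w_{s_j\lambda}}f)(\lambda_+)$ with which the paper closes its own argument.
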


\begin{proof}
From the definitions it is seen that
for any $j\in \{0,\ldots n\}$, $f\in\mathcal{C}(\mathbb{Z}^n)$ and $\lambda\in\mathbb{Z}^n$:
\begin{align*}
(\mathcal{J} I_j f)(\lambda ) \stackrel{\text{(i)}}{=}& \tau^{-1}_{w_\lambda}  ( I_{w_\lambda} I_j f)( \lambda_+ )  \\
\stackrel{\text{(ii)}}{=} &
\begin{cases}
\tau^{-1}_{w_\lambda}
  (I_{w_\lambda s_j} f)( \lambda_+ ) &\text{if}\   a_j (\lambda)\geq 0 \\
\tau^{-1}_{w_\lambda}
 \left( (I_{w_\lambda s_j} f)( \lambda_+ )
 +(\tau_j-\tau_j^{-1})
 ( I_{w_\lambda} f)( \lambda_+ )  \right) &\text{if}\   a_j (\lambda)< 0
 \end{cases} \\
 =  &\tau_j \tau^{-1}_{w_\lambda} (I_{w_\lambda} f)(\lambda_+ ) \\
& +
\tau_j^{\text{sgn} (a_j(\lambda ))}
\left(
\tau^{-1}_{w_\lambda s_j} (I_{w_\lambda s_j} f)( \lambda_+ ) -
\tau^{-1}_{w_\lambda} (I_{w_\lambda} f)( \lambda_+ )
\right) \\
\stackrel{\text{(iii)}}{=} &  (\hat{T}_j\mathcal{J}f)(\lambda ) ,
\end{align*}
where we used (i) Eq.~\eqref{propagator}, (ii) Eq.~\eqref{TwTj} and the relation $\ell (w_\lambda s_j)=\ell (w_\lambda )+\text{sgn}(a_j(\lambda))$, and (iii) Eqs.~\eqref{propagator}, \eqref{That}
and the observation that 
\begin{equation*}
\tau^{-1}_{w_\lambda s_j} (I_{w_\lambda s_j} f)( \lambda_+ )=\tau^{-1}_{w_{ s_j\lambda}} (I_{w_{ s_j\lambda}} f)( \lambda_+ ) .
\end{equation*}
\end{proof}

It is immediate from the intertwining relations in Eq.  \eqref{IT:relations} (and the invertibility of $\mathcal{J}$ guaranteed by Proposition \ref{inv:prp})  that the difference-reflection operators $\hat{T}_0,\ldots ,\hat{T}_n$  inherit the quadratic relations and the braid relations for the affine Hecke algebra generators  from the corresponding relations satisfied by the integral-reflection operators $I_0,\ldots ,I_n$ (cf. Proposition \ref{integral-reflection:prp}).

\begin{corollary}[Difference-Reflection Representation]\label{difference-reflection:cor}
The assignment $T_j\to \hat{T}_j$ ($j=0,\ldots ,n$) extends (uniquely)
to a representation of $\mathcal{H}$ on
$\mathcal{C}(\mathbb{Z}^n)$.
\end{corollary}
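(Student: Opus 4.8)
The plan is to deduce the corollary directly from the intertwining relations of Proposition~\ref{IT:prp} together with the bijectivity of $\mathcal{J}$ established in Proposition~\ref{inv:prp}. First I would rewrite Eq.~\eqref{IT:relations} as $\hat{T}_j = \mathcal{J}\, I_j\, \mathcal{J}^{-1}$ for $j=0,\ldots,n$, which exhibits each $\hat{T}_j$ as the image of $I_j$ under conjugation by the invertible operator $\mathcal{J}$. Since conjugation by a fixed invertible operator is an algebra automorphism of $\mathrm{End}(\mathcal{C}(\mathbb{Z}^n))$, it sends $I_j^{-1}$ to $\hat{T}_j^{-1}$ (so in particular the $\hat{T}_j$ are invertible) and, more importantly, it carries every operator identity among $I_0,\ldots,I_n$ to the corresponding identity among $\hat{T}_0,\ldots,\hat{T}_n$.

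Next I would invoke Proposition~\ref{integral-reflection:prp}: the assignment $T_j\mapsto I_j$ ($j=0,\ldots,n$) is the restriction to $\mathcal{H}\subset\mathbb{H}$ of a representation of $\mathbb{H}$, so the operators $I_0,\ldots,I_n$ satisfy the quadratic relations \eqref{quadratic-relations} and the braid relations \eqref{braid-relations}. By Remark~\ref{H-relations:rem} these are precisely the defining relations of Lusztig's three-parameter affine Hecke algebra $\mathcal{H}$ of type $\tilde{C}_n$ (with parameters $\tau_0$, $\tau$, $\tau_n$); no further relations are imposed on the generators $T_0,\ldots,T_n$. Combining this with the transport-of-structure from the previous paragraph, the $\hat{T}_j$ obey the same quadratic and braid relations, and hence the assignment $T_j\mapsto\hat{T}_j$ extends to a unital algebra homomorphism $\mathcal{H}\to\mathrm{End}(\mathcal{C}(\mathbb{Z}^n))$, i.e.\ a representation of $\mathcal{H}$ on $\mathcal{C}(\mathbb{Z}^n)$. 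Uniqueness is immediate, since $T_0,\ldots,T_n$ generate $\mathcal{H}$ and any algebra homomorphism out of $\mathcal{H}$ is determined by the images of its generators.

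There is essentially no genuine obstacle here: the real work has already been carried out in Propositions~\ref{integral-reflection:prp}, \ref{inv:prp} and \ref{IT:prp}, and the corollary is a purely formal consequence. The only point requiring a word of care is the use of the correct presentation of $\mathcal{H}$ — namely that $\mathcal{H}$ is the unital associative algebra on $T_0,\ldots,T_n$ subject to only the quadratic and braid relations — so that verifying those relations for the $\hat{T}_j$ genuinely suffices to produce a well-defined representation; this is exactly what is guaranteed by Proposition~\ref{pbw:prp} (the $\{T_w\}_{w\in W}$ form a basis of $\mathcal{H}$) and the discussion in Remark~\ref{H-relations:rem}.
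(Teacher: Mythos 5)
Your proposal is correct and follows essentially the same route as the paper: the paper likewise deduces the corollary immediately from the intertwining relations $\mathcal{J}I_j=\hat{T}_j\mathcal{J}$ of Proposition~\ref{IT:prp} together with the invertibility of $\mathcal{J}$ from Proposition~\ref{inv:prp}, so that the $\hat{T}_j$ inherit the quadratic and braid relations from the $I_j$ of Proposition~\ref{integral-reflection:prp}. Your added remark that the presentation of $\mathcal{H}$ by only these relations (guaranteed by Proposition~\ref{pbw:prp} and Remark~\ref{H-relations:rem}) is what makes the verification sufficient is a correct and welcome clarification, but not a different argument.
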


\subsection{$W$-Invariant reduction}
The $W$-invariant subspace
\begin{subequations}
\begin{eqnarray}\label{W-inv-subspace:a}
\mathcal{C}(\mathbb{Z}^n)^{W} &:=& \{ f\in\mathcal{C}(\mathbb{Z}^n)\mid  s_j f=f,\ j=0,\ldots ,n\} \\
&\stackrel{\text{Eq.~\eqref{That}}}{=}& \{ f\in\mathcal{C}(\mathbb{Z}^n)\mid  \hat{T}_j f=\tau_j f,\ j=0,\ldots ,n\}
\label{W-inv-subspace:b}
\end{eqnarray}
\end{subequations}
consists of the lattice functions $f:\mathcal{C}(\mathbb{Z}^n)\to\mathbb{C}$ that are permutation-invariant, even, and periodic with period $2m$ in the coordinates  $\lambda_1,\ldots ,\lambda_n$ of the variable $\lambda\in\mathbb{Z}^n$. 
This subspace turns out to be stable with respect to the action of $L_r$
 \eqref{Lr}, \eqref{Er}.

\begin{proposition}[$W$-Invariant Reduction]\label{W-reduction:prp}
The commuting quantum integrals $L_1,\ldots ,L_n$ \eqref{Lr}, \eqref{Er} map the $W$-invariant subspace $\mathcal{C}(\mathbb{Z}^n)^{W}$  into itself.
\end{proposition}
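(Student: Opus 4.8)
The plan is to reduce the statement to the characterization of the $W$-invariant subspace given in Eq.~\eqref{W-inv-subspace:b}, combined with the intertwining relations of Proposition~\ref{IT:prp}, the bijectivity of $\mathcal{J}$ from Proposition~\ref{inv:prp}, and the centrality recorded in Eq.~\eqref{center}. The point is that conjugation by $\mathcal{J}$ converts the condition $\hat{T}_jf=\tau_jf$ into the condition $I_jg=\tau_jg$ for $g=\mathcal{J}^{-1}f$, and the latter condition is manifestly preserved by any operator commuting with the $I_j$'s. Since $E_r(t)$ is the image of a central element, it is such an operator, and the claim follows.

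Concretely, I would fix $r\in\{1,\ldots,n\}$ and $f\in\mathcal{C}(\mathbb{Z}^n)^W$. By Eq.~\eqref{W-inv-subspace:b} one has $\hat{T}_jf=\tau_jf$ for $j=0,\ldots,n$, and using the intertwining relation $\mathcal{J}I_j=\hat{T}_j\mathcal{J}$ of Eq.~\eqref{IT:relations} together with the invertibility of $\mathcal{J}$ this is equivalent to $I_j\mathcal{J}^{-1}f=\tau_j\mathcal{J}^{-1}f$. Next, since $E_r(t)=E_r(t_{e_1},\ldots,t_{e_n})$ is the image under the integral-reflection representation of Proposition~\ref{integral-reflection:prp} of the elementary symmetric function $E_r(X_1,\ldots,X_n)$, which lies in $\mathbb{C}[X]^{W_0}\subset\mathcal{Z}(\mathbb{H})$ by Eq.~\eqref{center}, the operator $E_r(t)$ commutes with each $I_j$ on $\mathcal{C}(\mathbb{Z}^n)$ (as the representation is an algebra homomorphism). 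Hence, for every $j\in\{0,\ldots,n\}$,
\[
\hat{T}_j\bigl(L_rf\bigr)=\hat{T}_j\mathcal{J}E_r(t)\mathcal{J}^{-1}f=\mathcal{J}I_jE_r(t)\mathcal{J}^{-1}f=\mathcal{J}E_r(t)I_j\mathcal{J}^{-1}f=\tau_j\,\mathcal{J}E_r(t)\mathcal{J}^{-1}f=\tau_jL_rf ,
\]
where the second equality uses Eq.~\eqref{IT:relations} and the fourth uses $I_j\mathcal{J}^{-1}f=\tau_j\mathcal{J}^{-1}f$. Invoking Eq.~\eqref{W-inv-subspace:b} once more, this says precisely that $L_rf\in\mathcal{C}(\mathbb{Z}^n)^W$.

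I do not expect a genuine obstacle here: the argument is a short diagram chase once the three ingredients above are assembled. The one point that merits an explicit word is the passage from the abstract centrality \eqref{center} in $\mathbb{H}$ to the operator identity $I_jE_r(t)=E_r(t)I_j$ on $\mathcal{C}(\mathbb{Z}^n)$, which holds because Proposition~\ref{integral-reflection:prp} furnishes an algebra homomorphism sending $T_j\mapsto I_j$ and $E_r(X)\mapsto E_r(t)$. For the special case $r=1$ one could alternatively read the claim off directly from the explicit formula for the deformed Laplacian in Proposition~\ref{laplacian:prp}, but the representation-theoretic argument treats all $r$ uniformly and is the one I would present. The mutual commutativity of $L_1,\ldots,L_n$ on this invariant subspace is inherited from the commutativity of $E_1(t),\ldots,E_n(t)$ via conjugation by $\mathcal{J}$ and requires no separate verification.
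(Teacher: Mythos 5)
Your argument is correct and coincides essentially with the paper's own proof: both establish $\hat{T}_j L_r f = \tau_j L_r f$ via the chain $\hat{T}_j\mathcal{J}E_r(t)\mathcal{J}^{-1}f=\mathcal{J}I_jE_r(t)\mathcal{J}^{-1}f=\mathcal{J}E_r(t)I_j\mathcal{J}^{-1}f=\tau_j L_r f$, using Eq.~\eqref{IT:relations}, the centrality \eqref{center} through the integral-reflection representation, and the characterization \eqref{W-inv-subspace:b}. The only (immaterial) difference is that you convert $\hat{T}_jf=\tau_jf$ into $I_j\mathcal{J}^{-1}f=\tau_j\mathcal{J}^{-1}f$ at the outset, whereas the paper applies the intertwining relation a second time at the end of the chain.
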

\begin{proof}
For any $f\in \mathcal{C}(\mathbb{Z}^n)^W$ and $ r\in \{ 1,\ldots ,n\}$, one has that
\begin{align*}
\hat{T}_j L_rf & \stackrel{\text{Eq.~\eqref{Lr}}}{=}  \hat{T}_j \mathcal{J} E_r(t) \mathcal{J}^{-1} f \stackrel{\text{Eq.~\eqref{IT:relations}}}{=}\mathcal{J} I_j E_r(t) \mathcal{J}^{-1} f   \\ & \stackrel{\text{Eq.~\eqref{center}}}{=}
\mathcal{J} E_r(t) I_j  \mathcal{J}^{-1} f \stackrel{\text{Eq.~\eqref{IT:relations}}}{=}  \mathcal{J} E_r(t)  \mathcal{J}^{-1} \hat{T}_j f \stackrel{\text{Eq.~\eqref{W-inv-subspace:b}}}{=} 
\tau_j L_rf
\end{align*}
for $j=0,\ldots ,n$, whence $L_rf \in \mathcal{C}(\mathbb{Z}^n)^W$.
\end{proof}

\subsection{Quantum integrals for the $q$-boson Hamiltonian}
Let
$\Pi : \mathcal{C}(\mathbb{Z}^n)^W\to\mathcal{C}(\Lambda_{n,m})$   
be the linear isomorphism of the form
\begin{subequations}
\begin{equation}
(\Pi f)(\lambda) :=f(\lambda )\quad (f\in  \mathcal{C}(\mathbb{Z}^n)^W,\, \lambda\in\Lambda_{n,m}),
\end{equation}
with the inverse map given by
\begin{equation}
(\Pi^{-1} f)(\lambda) = f (\lambda_+) \quad (f\in  \mathcal{C}(\Lambda_{n,m}),\, \lambda\in\mathbb{Z}^n) .
\end{equation}
\end{subequations}
This explicit isomorphism between  $\mathcal{C}(\mathbb{Z}^n)^W$ and $\mathcal{C}(\Lambda_{n,m})$
allows us to interpret the $W$-invariant reductions of the operators $L_1,\ldots ,L_n$ in Proposition \ref{W-reduction:prp}
as commuting operators
$H_1,\ldots ,H_n$ on $\mathcal{C}(\Lambda_{n,m})$, where
\begin{equation}\label{q-boson-integrals}
H_r :=    \Pi \, L_r \,\Pi^{-1} \qquad (r=1,\ldots ,n).
\end{equation}
For $r=1$ the action of $H_r$ in $\mathcal C(\Lambda_{n,m})$ can be made explicit using Proposition \ref{laplacian:prp}.

\begin{proposition}[$q$-Boson Hamiltonian]\label{H:prp}
The explicit action of  $H:=H_1$  \eqref{q-boson-integrals} on $f\in \mathcal C(\Lambda_{n,m})$ is given by
\begin{subequations}
\begin{equation}\label{Ha}
( H f)(\lambda)
= u(\lambda) f(\lambda) + \sum_{\substack{1\le j \le n,\, \epsilon\in \{1,-1\} \\ \lambda +\epsilon e_j \in \Lambda_{n,m}}} v_j(\lambda) f(\lambda+\epsilon e_j)
\qquad ( \lambda\in \Lambda_{n,m} ),
\end{equation}
where
\begin{equation}\label{Hb}
\begin{split}
v_j(\lambda) &:=
 [\emph{m}_{\lambda_j}(\lambda)]_{\tau^2} 
   (1+\delta_{\lambda_j} \tau_n^2 \tau^{2(\emph{m}_0(\lambda)-1)})
   (1+\delta_{m-\lambda_j} \tau_0^2 \tau^{2(\emph{m}_m(\lambda)-1)})
\end{split}
\end{equation}
and
\begin{equation}  \label{Hc}
u(\lambda) :=  \tau_n (\hat{\tau}_n -\hat{\tau}_n^{-1})[\emph{m}_0(\lambda)]_{\tau^2}+ \tau_0 (\hat{\tau}_0 -\hat{\tau}_0^{-1})[\emph{m}_m(\lambda)]_{\tau^2} .
\end{equation}
\end{subequations}
\end{proposition}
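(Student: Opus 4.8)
The plan is to unwind the definition $H=\Pi\, L\,\Pi^{-1}$ by means of the explicit formula for the deformed Laplacian $L$ in Proposition~\ref{laplacian:prp}. Fix $\lambda\in\Lambda_{n,m}$ and put $g:=\Pi^{-1}f\in\mathcal{C}(\mathbb{Z}^n)^W$, so that $g$ is symmetric, even and $2m$-periodic in each coordinate, while $g(\mu)=f(\mu_+)$ for every $\mu\in\mathbb{Z}^n$. Since $\lambda$ is dominant one has $w_\lambda=1$ and $\lambda_+=\lambda$, so Eqs.~\eqref{La}, \eqref{Lb} collapse to
\begin{equation*}
(Hf)(\lambda)=\sum_{1\le j\le n}\Bigl(\tau_{w_{\lambda+e_j}}^{2}\,f\bigl((\lambda+e_j)_+\bigr)+\tau_{w_{\lambda-e_j}}^{2}\,f\bigl((\lambda-e_j)_+\bigr)\Bigr)+\Bigl(\sum_{1\le j\le n}\bigl(d_{\lambda,e_j}+d_{\lambda,-e_j}\bigr)\Bigr)f(\lambda).
\end{equation*}
The diagonal coefficient is read off at once from Eq.~\eqref{Lb}: $d_{\lambda,e_j}$ is nonzero only when $\lambda_j=m$, i.e.\ $1\le j\le\mathrm{m}_m(\lambda)$, and then equals $\tau^{2(j-1)}\tau_0(\hat{\tau}_0-\hat{\tau}_0^{-1})$; likewise $d_{\lambda,-e_j}$ is nonzero only when $\lambda_j=0$, i.e.\ $n-\mathrm{m}_0(\lambda)+1\le j\le n$, and then equals $\tau^{2(n-j)}\tau_n(\hat{\tau}_n-\hat{\tau}_n^{-1})$. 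Summing the two geometric progressions gives $\tau_0(\hat{\tau}_0-\hat{\tau}_0^{-1})[\mathrm{m}_m(\lambda)]_{\tau^2}+\tau_n(\hat{\tau}_n-\hat{\tau}_n^{-1})[\mathrm{m}_0(\lambda)]_{\tau^2}=u(\lambda)$, which accounts for the first term of Eq.~\eqref{Ha} and exhibits Eq.~\eqref{Hc}.

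The substance of the argument is then a combinatorial lemma identifying, for $\lambda\in\Lambda_{n,m}$ and $1\le j\le n$, the dominant representative $(\lambda\pm e_j)_+$ together with the scalar $\tau_{w_{\lambda\pm e_j}}$. I would prove it by exhibiting a reduced word for $w_{\lambda\pm e_j}$ through a ``bubble sort'' of the single perturbed coordinate, distinguishing four cases according to whether that coordinate leaves the range $\{0,1,\dots,m\}$. Writing $p(j)$ and $q(j)$ for the first and the last index $i$ with $\lambda_i=\lambda_j$, and putting $p_0:=n-\mathrm{m}_0(\lambda)+1$: (i) if $\lambda_j<m$, then $(\lambda+e_j)_+=\lambda+e_{p(j)}$ and $w_{\lambda+e_j}=s_{p(j)}s_{p(j)+1}\cdots s_{j-1}$ is a product of $j-p(j)$ of the generators $s_1,\dots,s_{n-1}$, so $\tau_{w_{\lambda+e_j}}^{2}=\tau^{2(j-p(j))}$; (ii) if $\lambda_j=m$ (so $\lambda_1=\cdots=\lambda_j=m$ and, with $k:=\mathrm{m}_m(\lambda)$, $j\le k$), then a single pass through $s_0$ is forced, $(\lambda+e_j)_+=\lambda-e_k$ and $w_{\lambda+e_j}=s_{k-1}\cdots s_1\,s_0\,s_1\cdots s_{j-1}$, so $\tau_{w_{\lambda+e_j}}^{2}=\tau_0^{2}\tau^{2(j+k-2)}$; (iii) dually, if $\lambda_j>0$, then $(\lambda-e_j)_+=\lambda-e_{q(j)}$ and $\tau_{w_{\lambda-e_j}}^{2}=\tau^{2(q(j)-j)}$; (iv) if $\lambda_j=0$ (so $p_0\le j\le n$), then a single pass through $s_n$ is forced, $(\lambda-e_j)_+=\lambda+e_{p_0}$ and $w_{\lambda-e_j}=s_{p_0}\cdots s_{n-1}\,s_n\,s_{n-1}\cdots s_j$, so $\tau_{w_{\lambda-e_j}}^{2}=\tau_n^{2}\tau^{2(2n-j-p_0)}$. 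The minimality of these words --- hence that they deliver the correct $\ell(w_{\lambda\pm e_j})$ and $\tau_{w_{\lambda\pm e_j}}$ --- follows as in the proof of Lemma~\ref{triangular:lem} from the relation $\ell(w_\nu s_{j'})=\ell(w_\nu)+\mathrm{sgn}(a_{j'}(\nu))$: each transposition or reflection above is applied to a lattice point $\nu$ with $a_{j'}(\nu)<0$ and hence strictly shortens $w_\nu$; equivalently, one counts the affine hyperplanes separating $\lambda\pm e_j$ from the fundamental alcove.

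Feeding this lemma into the specialized formula for $(Hf)(\lambda)$ and using the $W$-invariance $g(\mu)=f(\mu_+)$, I would finally regroup the sum over $j$ by the value $v=\lambda_j$. For each value $v$ occurring in $\lambda$ the indices with $\lambda_j=v$ fill the contiguous block $\{p(v),\dots,q(v)\}$ of length $\mathrm{m}_v(\lambda)$, and: if $0\le v<m$ the creation terms of case (i) sum over this block to $[\mathrm{m}_v(\lambda)]_{\tau^2}\,f(\lambda+e_{p(v)})$, a geometric series $\sum_i\tau^{2i}$, with $\lambda+e_{p(v)}\in\Lambda_{n,m}$; if $0<v\le m$ the annihilation terms of case (iii) sum to $[\mathrm{m}_v(\lambda)]_{\tau^2}\,f(\lambda-e_{q(v)})$, with $\lambda-e_{q(v)}\in\Lambda_{n,m}$; the block $v=m$ additionally produces, via case (ii), the wrap-around term $\tau_0^{2}\tau^{2(\mathrm{m}_m(\lambda)-1)}[\mathrm{m}_m(\lambda)]_{\tau^2}\,f(\lambda-e_{q(m)})$; and the block $v=0$ additionally produces, via case (iv), $\tau_n^{2}\tau^{2(\mathrm{m}_0(\lambda)-1)}[\mathrm{m}_0(\lambda)]_{\tau^2}\,f(\lambda+e_{p(0)})$. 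Distinct values give distinct target points, and every $\lambda+\epsilon e_j\in\Lambda_{n,m}$ arises exactly once in this list --- as $\lambda+e_{p(\lambda_j)}$ when $\epsilon=+1$ (so $\lambda_j<m$, $j=p(\lambda_j)$) and as $\lambda-e_{q(\lambda_j)}$ when $\epsilon=-1$ (so $\lambda_j>0$, $j=q(\lambda_j)$) --- so collecting the coefficient of $f(\lambda+\epsilon e_j)$ produces exactly
\begin{equation*}
[\mathrm{m}_{\lambda_j}(\lambda)]_{\tau^2}\bigl(1+\delta_{\lambda_j}\tau_n^{2}\tau^{2(\mathrm{m}_0(\lambda)-1)}\bigr)\bigl(1+\delta_{m-\lambda_j}\tau_0^{2}\tau^{2(\mathrm{m}_m(\lambda)-1)}\bigr)=v_j(\lambda),
\end{equation*}
which is Eq.~\eqref{Hb}. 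The hard part is the combinatorial lemma of the second paragraph: pinning down the reduced words --- especially in the two wrap-around cases, where the affine generator $s_0$ (respectively $s_n$) is used exactly once --- and establishing their minimality, and then carrying out the regrouping so that the degenerate situations $\mathrm{m}_0(\lambda)=0$, $\mathrm{m}_m(\lambda)=0$, $\lambda=0^n$ and $\lambda=m^n$ are all subsumed under the single uniform expression $v_j(\lambda)$.
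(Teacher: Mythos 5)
Your proof is correct, and it reaches Eqs.~\eqref{Ha}--\eqref{Hc} by a route that differs from the paper's in how the hopping coefficients are evaluated. Both arguments begin identically: specialize Proposition~\ref{laplacian:prp} to dominant $\lambda$ (so $w_\lambda=1$), use $g=\Pi^{-1}f$ with $g(\mu)=f(\mu_+)$, and obtain $u(\lambda)$ by summing the two geometric progressions of the $d_{\lambda,\pm e_j}$ from Eq.~\eqref{Lb}. The divergence is in the off-diagonal part: the paper writes the coefficient of $f(\lambda+\nu)$ as $\sum_{\mu\in W_\lambda(\lambda+\nu)}\tau_{w_\mu}^2$, recognizes it as a quotient of generalized Poincar\'e series of the stabilizer subgroups $W_\lambda$ and $W_\lambda\cap W_{\lambda+\nu}$, and then invokes Macdonald's product formula \eqref{poincare-stab} from Appendix~\ref{appC}, which handles all boundary/bulk configurations uniformly with no case analysis. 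You instead compute each $\tau^2_{w_{\lambda\pm e_j}}$ and each $(\lambda\pm e_j)_+$ directly by exhibiting reduced words and counting their $s_0$-, $s_n$- and finite-type letters; your four case formulas are exactly the content of the paper's Lemma~\ref{l:wlambdanu} in Appendix~\ref{appB} (there proved by the same length-additivity relation you cite, but used for the intertwining relations rather than for this proposition), and your subsequent regrouping by the value $v=\lambda_j$ amounts to a hands-on re-derivation of the special instance of Macdonald's Poincar\'e-series formula that the paper quotes. What your approach buys is elementary self-containedness (no appeal to Appendix~\ref{appC}); what the paper's buys is brevity and a uniform treatment of the degenerate blocks $\mathrm{m}_0(\lambda)$, $\mathrm{m}_m(\lambda)$ via the direct-product decomposition \eqref{direct-product}. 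Your bookkeeping of the wrap-around contributions (the $s_0$- and $s_n$-cases summing to $\tau_0^2\tau^{2(\mathrm{m}_m(\lambda)-1)}[\mathrm{m}_m(\lambda)]_{\tau^2}$ and $\tau_n^2\tau^{2(\mathrm{m}_0(\lambda)-1)}[\mathrm{m}_0(\lambda)]_{\tau^2}$) and of which targets $\lambda\pm e_j$ are dominant is accurate, so the regrouped coefficients indeed coincide with $v_j(\lambda)$ of Eq.~\eqref{Hb} in all cases, including $\lambda=0^n$ and $\lambda=m^n$.
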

The  details of the computation leading to this explicit formula for the restriction of the action of $L$ (from Proposition \ref{laplacian:prp}) to $\mathcal{C}(\Lambda_{n,m})$ are provided in  Subsection \ref{H:prf} (below). It involves a special instance  of Macdonald's
product formula for the generalized Poincar\'e series with distinct parameters \cite{mac:poincare}---pertaining to a stabilizer subgroup of $W$---that is recalled in Appendix \ref{appC}.

If the inner product structure of the Hilbert space is ignored (so $l^2(\Lambda_{n,m},\Delta_{n,m})\cong \mathcal{C}(\Lambda_{n,m})$), 
then the actions of the Hamiltonians in Propositions \ref{Hnm:prp} and \ref{H:prp} formally coincide upon identifying the  $q$-boson parameters $q$, $a_\pm$ and $\hat{a}_\pm$ with  the parameters stemming from the double affine Hecke algebra as follows:
\begin{equation}\label{parameters}
\boxed{q=\tau^2,\quad a_+=\tau_0\hat{\tau}_0, \quad \hat{a}_+=-\tau_0\hat{\tau}_0^{-1},\quad 
a_-=\tau_n\hat{\tau}_n, \quad \hat{a}_-=-\tau_n\hat{\tau}_n^{-1} }
\end{equation}
(and $c_\pm=a_\pm\hat{a}_\pm$, $g_\pm=a_\pm+\hat{a}_\pm$). The subsequent overall conclusion concerning the integrability of the $q$-boson Hamiltonian is now immediate.

\begin{theorem}[Quantum Integrals]\label{qint:thm}
For parameters of the form in Eq. \eqref{parameters},
the operators $H_1,\ldots ,H_n$ \eqref{q-boson-integrals} provide $n$ commuting quantum integrals for the $n$-particle open $q$-boson Hamiltonian $H$ ($=H_1$).
\end{theorem}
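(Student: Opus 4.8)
The plan is to deduce Theorem~\ref{qint:thm} directly from the structural results already assembled, rather than verifying commutation by brute-force calculation. The starting point is Proposition~\ref{W-reduction:prp}, which tells us that the operators $L_1,\ldots,L_n$ of Eq.~\eqref{Lr} preserve the $W$-invariant subspace $\mathcal{C}(\mathbb{Z}^n)^W$. Since these operators are conjugates $L_r=\mathcal{J}E_r(t)\mathcal{J}^{-1}$ of the images of the generators $E_r(t)$ of the commutative algebra $\mathbb{C}[X]^{W_0}$ under the integral-reflection representation of Proposition~\ref{integral-reflection:prp}, and conjugation by the fixed invertible operator $\mathcal{J}$ (invertibility guaranteed by Proposition~\ref{inv:prp}) is an algebra homomorphism, the $L_r$ commute with one another on all of $\mathcal{C}(\mathbb{Z}^n)$:
\begin{equation*}
L_rL_s=\mathcal{J}E_r(t)\mathcal{J}^{-1}\mathcal{J}E_s(t)\mathcal{J}^{-1}=\mathcal{J}E_r(t)E_s(t)\mathcal{J}^{-1}=\mathcal{J}E_s(t)E_r(t)\mathcal{J}^{-1}=L_sL_r,
\end{equation*}
using that the $t_\mu$ (hence the $E_r(t)$) commute.

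Next I would transport this commutativity to $\mathcal{C}(\Lambda_{n,m})$ through the linear isomorphism $\Pi:\mathcal{C}(\mathbb{Z}^n)^W\to\mathcal{C}(\Lambda_{n,m})$. Because $H_r=\Pi L_r\Pi^{-1}$ by Eq.~\eqref{q-boson-integrals} and each $L_r$ restricts to a well-defined operator on $\mathcal{C}(\mathbb{Z}^n)^W$ by Proposition~\ref{W-reduction:prp}, the $H_r$ are genuinely defined on $\mathcal{C}(\Lambda_{n,m})$, and conjugation by the isomorphism $\Pi$ again preserves the commutator bracket, so $H_rH_s=H_sH_r$ for all $r,s$. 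It then remains only to identify $H_1$ with the $n$-particle $q$-boson Hamiltonian of Proposition~\ref{Hnm:prp}. Proposition~\ref{H:prp} gives the explicit action of $H_1$ in terms of the Hecke parameters $\tau,\tau_0,\hat{\tau}_0,\tau_n,\hat{\tau}_n$; substituting the dictionary of Eq.~\eqref{parameters}, namely $q=\tau^2$, $a_+=\tau_0\hat{\tau}_0$, $\hat{a}_+=-\tau_0\hat{\tau}_0^{-1}$, $a_-=\tau_n\hat{\tau}_n$, $\hat{a}_-=-\tau_n\hat{\tau}_n^{-1}$ (so that $c_\pm=a_\pm\hat{a}_\pm$ and $g_\pm=a_\pm+\hat{a}_\pm$), one checks that $v_j(\lambda)$ becomes $(1-c_\mp\delta_{\lambda_j}q^{\mathrm{m}_0(\lambda)-1})(1-c_\pm\delta_{m-\lambda_j}q^{\mathrm{m}_m(\lambda)-1})[\mathrm{m}_{\lambda_j}(\lambda)]_q$ (up to the bookkeeping of which boundary a given $\pm e_j$ hop touches) and $u(\lambda)$ becomes $g_-[\mathrm{m}_0(\lambda)]_q+g_+[\mathrm{m}_m(\lambda)]_q$, matching Eq.~\eqref{Hnm} termwise. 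Hence $H_1$ \emph{is} the operator $H$ of Proposition~\ref{Hnm:prp}, and $H_2,\ldots,H_n$ are its commuting quantum integrals.

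The only genuine obstacle in this argument is the identification step, i.e. confirming that the parameter substitution \eqref{parameters} indeed turns the coefficients of Proposition~\ref{H:prp} into those of Proposition~\ref{Hnm:prp}. One must verify, for instance, that $\tau_n(\hat{\tau}_n-\hat{\tau}_n^{-1})$ equals $a_-+\hat{a}_-=\tau_n\hat{\tau}_n-\tau_n\hat{\tau}_n^{-1}=g_-$ and that $1+\delta_{\lambda_j}\tau_n^2\tau^{2(\mathrm{m}_0(\lambda)-1)}$ equals $1-c_-\delta_{\lambda_j}q^{\mathrm{m}_0(\lambda)-1}$; the latter requires $-c_-=\tau_n^2$, i.e. $c_-=-\tau_n^2=-(a_-\hat{a}_-)\cdot(-\hat{a}_-^{-2})\cdot\ldots$, which is an elementary check that $a_-\hat{a}_-=\tau_n\hat{\tau}_n\cdot(-\tau_n\hat{\tau}_n^{-1})=-\tau_n^2$, consistent with $c_-=a_-\hat{a}_-$ only if signs are tracked correctly—this is exactly the subtle sign bookkeeping flagged in the text's boxed display, and it is where care is needed. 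Everything else—commutativity on $\mathcal{C}(\mathbb{Z}^n)$, stability under the $W$-reduction, transport via $\Pi$—is formal and follows immediately from the cited propositions, so the theorem's proof is essentially a one-line assembly of Propositions~\ref{inv:prp}, \ref{laplacian:prp}, \ref{W-reduction:prp}, and \ref{H:prp} together with the observation \eqref{parameters}.
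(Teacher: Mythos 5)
Your proposal is correct and follows essentially the same route as the paper: the commutativity of $H_1,\ldots ,H_n$ is formal from $H_r=\Pi\,\mathcal{J}E_r(t)\mathcal{J}^{-1}\Pi^{-1}$ together with Propositions \ref{inv:prp} and \ref{W-reduction:prp}, and the identification $H_1=H$ is exactly the termwise comparison of Propositions \ref{Hnm:prp} and \ref{H:prp} under the substitution \eqref{parameters}, including the sign bookkeeping $c_\pm=a_\pm\hat{a}_\pm=-\tau_0^2$ resp. $-\tau_n^2$ and $g_\pm=a_\pm+\hat{a}_\pm$ that you verify. This is the same one-line assembly the paper itself uses to conclude the theorem.
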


At this point it is not yet necessary to impose any additional reality conditions on the parameters:  our Hecke-algebraic construction holds for any $\tau_0,\hat{\tau}_0,\tau,\tau_n,\hat{\tau}_n\in\mathbb{C}^*$ not equal to a root of unity.

\subsection{Proof of Proposition \ref{H:prp}}\label{H:prf}
From the explicit action of $L$ in Proposition \ref{laplacian:prp}, it is readily seen that the action of
$H=\Pi\, L\, \Pi^{-1}$ on $f\in\mathcal{C}(\Lambda_{n,m})$
is of the form
\begin{equation*}
(H f)(\lambda)=
u(\lambda) f (\lambda)+ \sum_{\substack{\nu\in W_0 e_1\\ \lambda+\nu\in \Lambda_{n,m} }} v_{\nu} (\lambda ) f (\lambda+\nu)\qquad
(\lambda\in\Lambda_{n,m} ),
\end{equation*}
with
\begin{align*} 
u(\lambda) &\stackrel{\text{(i)}}{=} \sum_{\nu\in W_0e_1 } d_{\lambda,\nu} =
\sum_{\substack{1 \le j \le n \\ \lambda_j=0}} d_{\lambda,-e_j}
+  \sum_{\substack{1 \le j \le n \\ \lambda_j=m}} d_{\lambda,e_j}\\
&=   \tau_n(\hat{\tau}_n - \hat{\tau}_n^{-1}) \sum_{j=n-m_0(\lambda)+1}^n \tau^{2(n-j)}
+
 \tau_0(\hat{\tau}_0 - \hat{\tau}_0^{-1}) \sum_{j=1}^{\text{m}_m(\lambda)} \tau^{2(j-1)}\\
 &= \tau_n (\hat{\tau}_n -\hat{\tau}_n^{-1})[\emph{m}_0(\lambda)]_{\tau^2}+ \tau_0 (\hat{\tau}_0 -\hat{\tau}_0^{-1})[\emph{m}_m(\lambda)]_{\tau^2} 
 \end{align*}
and
\begin{align}
v_{\nu} (\lambda)
 =
\sum_{\substack{\eta\in W_0e_1\\ (\lambda +\eta)_+=\lambda+\nu}} \tau_{w_{\lambda+\eta}}^2
\stackrel{\text{(ii)}}{=}
\sum_{\mu\in W_{\lambda} (\lambda +\nu)} \tau_{w_\mu}^2 
=
\frac{W_{\lambda}(\tau^2,\tau_n^2,\tau_0^2)}{(W_{\lambda}\cap W_{\lambda+\nu}) (\tau^2,\tau_n^2,\tau_0^2)}, \label{vj}
\end{align}
where $ W_{\lambda}\subset W$ refers to the stabilizer subgroup $\{ w\in W\mid w\lambda =\lambda\}$ and
\begin{align*}
W_{\lambda}(\tau^2,\tau_n^2,\tau_0^2) &:=\sum_{\substack{w\in W\\ w\lambda=\lambda}} \tau_w^2, \\
(W_{\lambda} \cap W_{\lambda+\nu} )(\tau^2,\tau_n^2,\tau_0^2) &:=
\sum_{\substack{w\in W\\ w\lambda=\lambda \,\text{and}\\ w(\lambda+\nu)=\lambda+\nu}}  \tau_w^2 .
\end{align*}
Here we used that for any $\lambda\in\Lambda_{n,m}$: (i) $(\lambda \pm e_j)_+\neq\lambda$ and
(ii) 
$w_{\lambda\pm e_j}\in W_{\lambda}$.
Invoking  of Macdonald's explicit product formula \eqref{poincare-stab} for the generalized Poincar\'e series
$W_{\lambda}(\tau^2,\tau_n^2,\tau_0^2)$ ($=W_{\lambda}^{(n,m)}(\tau^2,\tau_n^2,\tau_0^2) $)
of the stabilizer subgroup $W_\lambda$ ($=W^{(n,m)}_\lambda$) with $\lambda\in\Lambda_{n,m}$,
entails that
\begin{equation*}
v_{e_j}(\lambda) =[\emph{m}_{\lambda_j}(\lambda)]_{\tau^2} 
   (1+\delta_{\lambda_j} \tau_n^2 \tau^{2(\emph{m}_0(\lambda)-1)})=v_j(\lambda)
 \end{equation*}
 if $\lambda+e_j\in\Lambda_{n,m}$ and
 \begin{equation*}
v_{-e_j}(\lambda)= [\emph{m}_{\lambda_j}(\lambda)]_{\tau^2} 
   (1+\delta_{m-\lambda_j} \tau_0^2 \tau^{2(\emph{m}_m(\lambda)-1)})=v_j(\lambda)
  \end{equation*}
    if $\lambda-e_j\in\Lambda_{n,m}$. 
More specifically, these compact expressions for $v_\nu(\lambda)$ with $\nu =e_j$ and $\nu = -e_j$ have their origin in the following observation:
if both $\lambda , \lambda +\nu\in\Lambda_{n,m}$ (for such a $\nu\in W_0e_1$), then
   the decomposition
    of $W_{\lambda} \cap W_{\lambda+\nu} $
    as a direct product of finite hyperoctahedral groups and permutation groups differs
from the corresponding
    decomposition of  $W_{\lambda}$ ($=W^{(n,m)}_\lambda$) in Eq. \eqref{direct-product} 
 by at most a single factor.  In view of Eq. \eqref{poincare-stab}, the expression in
 Eq. \eqref{vj} thus simplifies as the quotient of the Poincar\'e series corresponding to these distinct factors in the numerator and the denominator (as the Poincar\'e series stemming from all other factors in the decomposition appear common in the numerator and the denominator and therefore cancel):
 \begin{equation*}
 v_{e_j}(\lambda) =
 \begin{cases}
 \frac{S_{\text{m}_{\lambda_j}(\lambda)}(\tau^2)}{S_{\text{m}_{\lambda_j}(\lambda)-1} (\tau^2)} &\text{if}\  \lambda_j>0, \\
  \frac{W_0^{(\text{m}_0(\lambda))}(\tau^2,\tau_n^2)}{W_0^{(\text{m}_0(\lambda)-1)} (\tau^2,\tau_n^2)} &\text{if}\  \lambda_j=0 ,
 \end{cases} 
 \
 v_{-e_j}(\lambda) =
 \begin{cases}
 \frac{S_{\text{m}_{\lambda_j}(\lambda)}(\tau^2)}{S_{\text{m}_{\lambda_j}(\lambda)-1} (\tau^2)} &\text{if}\  \lambda_j<m, \\
  \frac{W_0^{(\text{m}_m(\lambda))}(\tau^2,\tau_0^2)}{W_0^{(\text{m}_m(\lambda)-1)} (\tau^2,\tau_0^2)} &\text{if}\  \lambda_j=m.
 \end{cases} 
 \end{equation*}
Upon evaluating the surviving Poincar\'e series  with the aid of
 Eqs . \eqref{poincare-W0}, \eqref{poincare-Sn}, this gives rise to the
 compact expressions for $v_{\pm e_j}(\lambda)$ displayed above.

\section{Completeness of the Bethe Ansatz associated with $\mathbb{H}$}\label{sec9}
In this section the commuting quantum integrals $H_1,\ldots ,H_n$ \eqref{q-boson-integrals} for  the $n$-particle open $q$-boson system are simultaneously diagonalized in $\mathcal{C}(\Lambda_{n,m})$ by means of a basis of Bethe Ansatz eigenfunctions that consists of hyperoctahedral Hall-Littlewood polynomials.
\emph{Throughout this section it will be assumed that  the  $q$-boson parameters $q$, $a_\pm$, $\hat{a}_\pm$ and the double affine Hecke algebra parameters
$\tau$, $\tau_\pm$, $\hat{\tau}_\pm$ are related through Eq. \eqref{parameters} and that none of these parameters lies on the unit circle.}

\subsection{Affine hyperoctahedral Hall-Littlewood functions}
For a wave vector $\boldsymbol{\xi}=(\xi_1,\ldots,\xi_n)\in \mathbb{R}^n$,  let $\mathbf{e}^{i\boldsymbol{\xi}}\in \mathcal{C}(\mathbb{Z}^n)$ denote the associated plane wave function of the form
\begin{equation*}
\mathbf{e}^{i\boldsymbol{\xi}}(\lambda ):=e^{i\lambda_1\xi_1+\cdots +i\lambda_n\xi_n}\qquad  (\lambda\in \mathbb{Z}^n).
\end{equation*}
The {\em affine  hyperoctahedral  Hall-Littlewood function} $\Phi_{\boldsymbol{\xi}}\in\mathcal{C}(\mathbb{Z}^n)$ with spectral parameter $\boldsymbol{\xi}$ is now defined as
\begin{equation}\label{HLF}
\Phi_{\boldsymbol{\xi}} := \mathcal{J}\phi_{\boldsymbol{\xi}}\quad\text{where}\quad \phi_{\boldsymbol{\xi}} := \Bigl( \sum_{w\in W_0} \tau_w I_w\Bigr) \mathbf{e}^{i{\boldsymbol{\xi}}} .
\end{equation}
Notice that $\phi_{\boldsymbol{\xi}}$ arises from the integral-reflection action
of the element $\mathbf{1}_0:=\sum_{w\in W_0} \tau_wT_w$  on the plane wave $\mathbf{e}^{i\boldsymbol{\xi}}$. This element is identified as the normalized idempotent
 corresponding to the trivial representation of the Hecke algebra of the hyperoctahedral group $W_0$:
\begin{equation}\label{idempotent}
T_j\mathbf{1}_0 =\tau_j \mathbf{1}_0 \qquad (j=1,\ldots ,n).
\end{equation}
By summing all contributions stemming from the idempotent, one arrives at an explicit formula for $\phi_{\boldsymbol{\xi}}$.

\begin{proposition}[Plane Waves Decomposition]\label{BWF:prp}
For
\begin{equation}\label{Rreg}
\boldsymbol{\xi}\in \mathbb{R}^n_{\text{reg}}:=\{ \boldsymbol{\xi} \in \mathbb{R}^n \mid \xi_j, \xi_j \pm \xi_k \not\in  \pi\mathbb{Z},\ \text{for}\  1\leq j\neq k\leq n \} ,
\end{equation}
the function $\phi_{\boldsymbol{\xi}}$ \eqref{HLF} decomposes into the following linear combination  plane waves:
\begin{subequations}
\begin{equation}\label{cf-decomposition}
  \phi_{\boldsymbol{\xi}} =\sum_{w\in W_0} C(w\boldsymbol{\xi })\mathbf{e}^{iw \boldsymbol{\xi}} ,
  \end{equation}
where
\begin{equation}\label{cfun}
\begin{split}
 C({\boldsymbol{\xi}}):=&
 \prod_{1\le j < k \le n} 
   \frac{ 1-q e^{-i(\xi_j-\xi_k)}}  { 1-e^{-i(\xi_j-\xi_k)}}   
      \frac{ 1-q e^{-i(\xi_j+\xi_k)}}  { 1-e^{-i(\xi_j+\xi_k)}}   \\
&\qquad  \times \prod_{1\le j \le n} 
   \frac{  (1-a_- e^{-i\xi_j}) (1-\hat{a}_- e^{-i\xi_j})   }{1- e^{-2i\xi_j}}  .
\end{split}
\end{equation}
\end{subequations}
\end{proposition}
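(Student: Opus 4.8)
The plan is to recognize $\phi_{\boldsymbol{\xi}}$ as (essentially the unique) joint eigenfunction of the integral-reflection operators $I_1,\dots,I_n$ inside the finite-dimensional space
\[
V_{\boldsymbol{\xi}}:=\mathrm{span}_{\mathbb{C}}\{\mathbf{e}^{iw\boldsymbol{\xi}}\mid w\in W_0\}\subset\mathcal{C}(\mathbb{Z}^n),
\]
which for $\boldsymbol{\xi}\in\mathbb{R}^n_{\text{reg}}$ has dimension $|W_0|=2^nn!$ (the hyperoctahedral group $W_0$ acts freely on $\mathbb{R}^n_{\text{reg}}$, and regularity rules out any coincidence $w\boldsymbol{\xi}\equiv w'\boldsymbol{\xi}\pmod{2\pi\mathbb{Z}^n}$ with $w\neq w'$, so the $|W_0|$ plane waves $\mathbf{e}^{iw\boldsymbol{\xi}}$ are linearly independent as lattice functions). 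The first task is to evaluate $I_j=\tau_js_j+J_j$ for $1\le j\le n$ (so that the value of $c$ is irrelevant) on a plane wave: one has $s_j\mathbf{e}^{i\boldsymbol{\eta}}=\mathbf{e}^{is_j\boldsymbol{\eta}}$, while $J_j\mathbf{e}^{i\boldsymbol{\eta}}$ is the finite sum \eqref{Ij:b}--\eqref{Ij:c}, which for $1\le j\le n-1$ is a single geometric series and for $j=n$ splits, according to the parity of the summation index, into two geometric series of common ratio $e^{-2i\eta_n}$. Summing these in closed form shows that $J_j\mathbf{e}^{i\boldsymbol{\eta}}$ is a $\boldsymbol{\eta}$-dependent (but $\lambda$-independent) scalar combination of the two endpoints $\mathbf{e}^{i\boldsymbol{\eta}}$ and $\mathbf{e}^{is_j\boldsymbol{\eta}}$; hence $V_{\boldsymbol{\xi}}$ is stable under $I_1,\dots,I_n$, which act on it as Demazure--Lusztig operators
\[
I_j\mathbf{e}^{i\boldsymbol{\eta}}=b_j(\boldsymbol{\eta})\,\mathbf{e}^{i\boldsymbol{\eta}}+\bigl(\tau_j-b_j(\boldsymbol{\eta})\bigr)\,\mathbf{e}^{is_j\boldsymbol{\eta}}\qquad(\boldsymbol{\eta}\in W_0\boldsymbol{\xi}),
\]
with $b_j(\boldsymbol{\eta})$ the rational function \eqref{Tj:b} under the substitution $X_k\mapsto e^{-i\eta_k}$. (Alternatively this is precisely what the duality of Section \ref{sec5} yields from the Demazure--Lusztig action of Proposition \ref{pol-rep:prp} restricted to the subalgebra generated by $T_1,\dots,T_n$, so no fresh computation is strictly needed.) One records for later use that $b_j(\boldsymbol{\eta})+b_j(s_j\boldsymbol{\eta})=\tau_j-\tau_j^{-1}$, and that for parameters off the unit circle and $\boldsymbol{\eta}\in\mathbb{R}^n_{\text{reg}}$ neither $\tau_j-b_j(\boldsymbol{\eta})$ nor $\tau_j-b_j(s_j\boldsymbol{\eta})=\tau_j^{-1}+b_j(\boldsymbol{\eta})$ vanishes.

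Next I would check that $\phi_{\boldsymbol{\xi}}$ and the candidate $\psi:=\sum_{w\in W_0}C(w\boldsymbol{\xi})\,\mathbf{e}^{iw\boldsymbol{\xi}}$ both lie in $V_{\boldsymbol{\xi}}$ and satisfy $I_jf=\tau_jf$ for $j=1,\dots,n$. For $\phi_{\boldsymbol{\xi}}=\bigl(\sum_{w\in W_0}\tau_wI_w\bigr)\mathbf{e}^{i\boldsymbol{\xi}}$ this is immediate: membership in $V_{\boldsymbol{\xi}}$ follows by iterating the previous step, and $I_j\phi_{\boldsymbol{\xi}}=\tau_j\phi_{\boldsymbol{\xi}}$ is the image, through the integral-reflection representation of Proposition \ref{integral-reflection:prp}, of the idempotent identity $T_j\mathbf{1}_0=\tau_j\mathbf{1}_0$ from \eqref{idempotent}. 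For $\psi$, substituting the Demazure--Lusztig formula and collecting the coefficient of each $\mathbf{e}^{iw\boldsymbol{\xi}}$, the equation $I_j\psi=\tau_j\psi$ reduces to the single intertwining identity
\[
C(s_j\boldsymbol{\eta})\,\bigl(\tau_j-b_j(s_j\boldsymbol{\eta})\bigr)=C(\boldsymbol{\eta})\,\bigl(\tau_j-b_j(\boldsymbol{\eta})\bigr)\qquad(j=1,\dots,n),
\]
regarded as an equality of rational functions of $\boldsymbol{\eta}$. This is checked factor by factor from the explicit product \eqref{cfun}: the reflection $s_j$ alters only the single factor of $C$ attached to the wall $\alpha_j$, and the resulting ratio $C(s_j\boldsymbol{\eta})/C(\boldsymbol{\eta})$ equals $\bigl(\tau_j-b_j(\boldsymbol{\eta})\bigr)/\bigl(\tau_j-b_j(s_j\boldsymbol{\eta})\bigr)$ after an elementary simplification, the numerators $1-qe^{\mp i(\eta_j-\eta_k)}$ (for $1\le j\le n-1$) and $(1-a_-e^{-i\eta_j})(1-\hat{a}_-e^{-i\eta_j})$ (for $j=n$) being exactly those produced by $\tau_j-b_j$ once the parameter dictionary \eqref{parameters} is invoked.

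It remains to identify $\phi_{\boldsymbol{\xi}}$ with $\psi$. Writing $f=\sum_{w\in W_0}c_w\mathbf{e}^{iw\boldsymbol{\xi}}\in V_{\boldsymbol{\xi}}$, the equations $I_jf=\tau_jf$ translate into the recursion $c_w=c_{s_jw}\,\bigl(\tau_j-b_j(s_jw\boldsymbol{\xi})\bigr)/\bigl(\tau_j^{-1}+b_j(s_jw\boldsymbol{\xi})\bigr)$; since the denominators do not vanish on $\mathbb{R}^n_{\text{reg}}$ and every $w\in W_0$ is reached from the longest element $w_0$ by repeatedly left-multiplying with length-increasing simple reflections, the assignment $f\mapsto c_{w_0}$ is injective on the space of joint eigenfunctions (equivalently: that space is one-dimensional, $V_{\boldsymbol{\xi}}$ being the regular representation of the finite Hecke algebra generated by $T_1,\dots,T_n$ when the parameters avoid roots of unity). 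Thus $\phi_{\boldsymbol{\xi}}=\psi$ as soon as their $\mathbf{e}^{iw_0\boldsymbol{\xi}}$-coefficients agree. That of $\psi$ is $C(w_0\boldsymbol{\xi})$ by construction. For $\phi_{\boldsymbol{\xi}}=\sum_{w\in W_0}\tau_wI_w\mathbf{e}^{i\boldsymbol{\xi}}$, the first step shows $I_w\mathbf{e}^{i\boldsymbol{\xi}}$ to be a combination of $\mathbf{e}^{iv\boldsymbol{\xi}}$ with $v$ running over the subwords of a reduced expression for $w$ (the subword property of the Bruhat order), among which only the full expression yields $w$; hence only the term $w=w_0$ contributes an $\mathbf{e}^{iw_0\boldsymbol{\xi}}$-component, equal to $\tau_{w_0}$ times the product of the off-diagonal coefficients $\tau_{j_a}-b_{j_a}(\cdot)$ along a reduced word $w_0=s_{j_1}\cdots s_{j_\ell}$. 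Writing each $\tau_{j_a}-b_{j_a}$ as $\tau_{j_a}^{-1}$ times a $c$-function factor, the powers $\prod_a\tau_{j_a}^{-1}=\tau_{w_0}^{-1}$ cancel the prefactor, and the surviving product of $n^2=\ell(w_0)$ factors---one per positive root of $C_n$---reassembles into $C(w_0\boldsymbol{\xi})$, which proves the proposition.

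The only genuinely laborious points are the closed-form evaluation of $J_n$ on plane waves, where the alternation between $\tau_n-\tau_n^{-1}$ and $\hat{\tau}_n-\hat{\tau}_n^{-1}$ in \eqref{Ij:c} forces the even/odd split, and the final bookkeeping that matches the telescoped product of Demazure coefficients with $C(w_0\boldsymbol{\xi})$ (tracking that each positive root occurs once and that the arguments come out as the coordinates of $w_0\boldsymbol{\xi}$); both are routine once set up, and all the conceptual content sits in the idempotent relation \eqref{idempotent} together with the $c$-function intertwining identity.
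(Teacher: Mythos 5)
Your proposal is correct and follows essentially the same route as the paper's own proof in Subsection \ref{BWF:prf}: the action of $I_j$ ($1\le j\le n$) on plane waves, the observation that only the full reduced word of $w_0$ contributes to the $\mathbf{e}^{iw_0\boldsymbol{\xi}}$-component so that the leading coefficient telescopes to $C(w_0\boldsymbol{\xi})$, and the $c$-function recurrence forced by the idempotent relation \eqref{idempotent} together with the nonvanishing of the off-diagonal coefficients on $\mathbb{R}^n_{\text{reg}}$, concluded by induction along the Bruhat order. The only cosmetic difference is that you phrase the last step as uniqueness of the joint $I_1,\dots,I_n$-eigenfunction in the span of the plane waves, whereas the paper phrases it as both coefficient families satisfying the same recursion with the same initial value at $w_0$; these are the same argument.
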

This explicit plane waves expansion for $\phi_{\boldsymbol{\xi}}$ \eqref{HLF} amounts to a well-known formula in the affine Hecke algebra $\mathcal{H}$ going back to Macdonald, see e.g. \cite[Sec.~4]{mac:spherical}, \cite[Thm.~2.9]{nel-ram:kostka} and \cite[Thm.~6.9]{par:buildings}. To keep our presentation self-contained, a short elementary verification based on the integral-reflection representation is provided below in Subsection \ref{BWF:prf}.

\subsection{$W$-invariance}
The affine hyperoctahedral Hall-Littlewood function is automatically invariant with respect to the finite group action of $W_0$ because
for $j=1,\ldots ,n$:
\begin{equation}\label{W0-invariance}
\hat{T}_j\Phi_{\boldsymbol{\xi}}=\hat{T}_j\mathcal{J}\phi_{\boldsymbol{\xi}}\stackrel{\text{Eq.~\eqref{IT:relations}}}{=}
\mathcal{J}I_j\phi_{\boldsymbol{\xi}}\stackrel{\text{Eq.~\eqref{idempotent}}}{=}\mathcal{J}\tau_j\phi_{\boldsymbol{\xi}}=\tau_j\Phi_{\boldsymbol{\xi}} ,
\end{equation}
so $s_j\Phi_{\boldsymbol{\xi}}=\Phi_{\boldsymbol{\xi}}$.  In order for it to be also periodic
with respect to the action of the translations provided by the affine part of $W$,
the spectral parameter $\boldsymbol{\xi}$ must additionally satisfy an  algebraic system of Bethe-type equations.

\begin{proposition}[Bethe Equations]\label{BAE:prp}
For $\boldsymbol{\xi}\in \mathbb{R}^n_{\text{reg}}$ \eqref{Rreg} the affine hyperoctahedral Hall-Littlewood function
 $\Phi_{\boldsymbol{\xi}}$ \eqref{HLF}  belongs to the $W$-invariant subspace $\mathcal{C}(\mathbb{Z}^n)^{W}$ \eqref{W-inv-subspace:a},
\eqref{W-inv-subspace:b}, provided the spectral parameter $\boldsymbol{\xi}$ satisfies
the following algebraic system of Bethe type equations
\begin{align}\label{BAE}
 e^{2 im\xi_j}
&= \frac{ (1-a_+ e^{i\xi_j}) (1-\hat{a}_+ e^{i\xi_j})  } { ( e^{i\xi_j}-a_+ )   ( e^{i\xi_j}-\hat{a}_+ ) } 
  \frac{ (1-a_- e^{i\xi_j}) (1-\hat{a}_- e^{i\xi_j})  } { ( e^{i\xi_j}-a_- )   ( e^{i\xi_j}-\hat{a}_- ) } \\
 &\qquad \times \prod_{\substack{1\le k \le n \\ k\neq j}}
       \frac{ (1-q e^{i(\xi_j - \xi_k)}) (1-q e^{i(\xi_j + \xi_k)})   }
             { ( e^{i(\xi_j - \xi_k)}-q ) ( e^{i(\xi_j+ \xi_k)}-q )   } \qquad\text{for}\quad j=1,\ldots ,n . \nonumber
\end{align}
\end{proposition}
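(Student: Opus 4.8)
The plan is to combine the intertwining relation of Proposition~\ref{IT:prp} with the plane waves decomposition of Proposition~\ref{BWF:prp}, so that the $W$-invariance of $\Phi_{\boldsymbol{\xi}}$ collapses to a single functional identity which is then recognized as the Bethe system~\eqref{BAE}. First I would note that, by the alternative description of $\mathcal{C}(\mathbb{Z}^n)^W$ in Eqs.~\eqref{W-inv-subspace:a},~\eqref{W-inv-subspace:b} and the $W_0$-invariance already recorded in Eq.~\eqref{W0-invariance} (which leaves only the relation at the affine wall to be imposed), the containment $\Phi_{\boldsymbol{\xi}}\in\mathcal{C}(\mathbb{Z}^n)^W$ is equivalent to the single relation $\hat{T}_0\Phi_{\boldsymbol{\xi}}=\tau_0\Phi_{\boldsymbol{\xi}}$. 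Feeding this through the intertwining identity $\hat{T}_0\mathcal{J}=\mathcal{J}I_0$ of Proposition~\ref{IT:prp} together with the bijectivity of $\mathcal{J}$ (Proposition~\ref{inv:prp}), it becomes equivalent to the identity $I_0\phi_{\boldsymbol{\xi}}=\tau_0\phi_{\boldsymbol{\xi}}$ in the integral-reflection picture on $\mathcal{C}(\mathbb{Z}^n)$.

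The key computational step is to evaluate $I_0=\tau_0 s_0+J_0$ on a plane wave $\mathbf{e}^{i\boldsymbol{\eta}}$. Here $s_0\mathbf{e}^{i\boldsymbol{\eta}}=e^{2im\eta_1}\mathbf{e}^{is_0^\prime\boldsymbol{\eta}}$ with $s_0^\prime\boldsymbol{\eta}=(-\eta_1,\eta_2,\dots,\eta_n)$, while $J_0\mathbf{e}^{i\boldsymbol{\eta}}$ is given by the terminating geometric sums of Eqs.~\eqref{Ij:b},~\eqref{Ij:c}; summing these sums separately on the two sides of the affine wall $\lambda_1=m$, collecting powers of $e^{i\eta_1}$, and using the factorization $\tau_0^{-1}X^2-(\hat{\tau}_0-\hat{\tau}_0^{-1})X-\tau_0=\tau_0^{-1}(X-\tau_0\hat{\tau}_0)(X+\tau_0\hat{\tau}_0^{-1})$, one checks that the two pieces agree and that, on $\{e^{2i\eta_1}\neq 1\}$,
\begin{align*}
I_0\mathbf{e}^{i\boldsymbol{\eta}}&=\frac{(\tau_0-\tau_0^{-1})e^{2i\eta_1}+(\hat{\tau}_0-\hat{\tau}_0^{-1})e^{i\eta_1}}{e^{2i\eta_1}-1}\,\mathbf{e}^{i\boldsymbol{\eta}}\\
&\quad+\frac{\tau_0^{-1}(e^{i\eta_1}-a_+)(e^{i\eta_1}-\hat{a}_+)}{e^{2i\eta_1}-1}\,e^{2im\eta_1}\,\mathbf{e}^{is_0^\prime\boldsymbol{\eta}},
\end{align*}
where $a_+=\tau_0\hat{\tau}_0$, $\hat{a}_+=-\tau_0\hat{\tau}_0^{-1}$ as in Eq.~\eqref{parameters}; the two coefficients sum to $\tau_0$, which is consistent with the boundary value at $\lambda_1=m$ (where $\mathbf{e}^{i\boldsymbol{\eta}}$ and $e^{2im\eta_1}\mathbf{e}^{is_0^\prime\boldsymbol{\eta}}$ coincide). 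I expect this geometric summation---matching the two pieces across the affine wall and extracting the $C^\vee C$ boundary factor $(e^{i\eta_1}-a_+)(e^{i\eta_1}-\hat{a}_+)$---to be the main technical hurdle, although it is brief.

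It then remains to substitute $\phi_{\boldsymbol{\xi}}=\sum_{w\in W_0}C(w\boldsymbol{\xi})\mathbf{e}^{iw\boldsymbol{\xi}}$ from Proposition~\ref{BWF:prp}. For $\boldsymbol{\xi}\in\mathbb{R}^n_{\text{reg}}$ the orbit $W_0\boldsymbol{\xi}$ consists of $|W_0|$ distinct points (trivial $W_0$-stabilizer) and $s_0^\prime w\boldsymbol{\xi}=v\boldsymbol{\xi}$ only for $w=s_0^\prime v$, so collecting the coefficient of each $\mathbf{e}^{iv\boldsymbol{\xi}}$ ($v\in W_0$) in $I_0\phi_{\boldsymbol{\xi}}-\tau_0\phi_{\boldsymbol{\xi}}$ and simplifying with the same quadratic factorization yields, writing $\zeta:=(v\boldsymbol{\xi})_1$, a nonzero multiple of
\[
C(v\boldsymbol{\xi})\,(e^{i\zeta}-a_+)(e^{i\zeta}-\hat{a}_+)+C(s_0^\prime v\boldsymbol{\xi})\,(1-a_+e^{i\zeta})(1-\hat{a}_+e^{i\zeta})\,e^{-2im\zeta}.
\]
Since none of the parameters lies on the unit circle, $C(v\boldsymbol{\xi})$ and each of $(e^{i\zeta}-a_+)$, $(e^{i\zeta}-\hat{a}_+)$ are nonzero on $\mathbb{R}^n_{\text{reg}}$, so this expression vanishes exactly when $C(v\boldsymbol{\xi})/C(s_0^\prime v\boldsymbol{\xi})$ takes a prescribed value; evaluating that ratio from the explicit product~\eqref{cfun}---the $\xi_1$-type factor contributing $-(e^{i\zeta}-a_-)(e^{i\zeta}-\hat{a}_-)\big/(1-a_-e^{i\zeta})(1-\hat{a}_-e^{i\zeta})$ and the pairwise factors contributing $\prod_{k}(e^{i(\zeta-\eta_k)}-q)(e^{i(\zeta+\eta_k)}-q)\big/(1-qe^{i(\zeta-\eta_k)})(1-qe^{i(\zeta+\eta_k)})$, with $\eta_k$ ranging over the components of $v\boldsymbol{\xi}$ other than $\zeta$---collapses the condition for each $v$ to
\begin{align*}
e^{2im\zeta}&=\frac{(1-a_+e^{i\zeta})(1-\hat{a}_+e^{i\zeta})}{(e^{i\zeta}-a_+)(e^{i\zeta}-\hat{a}_+)}\,\frac{(1-a_-e^{i\zeta})(1-\hat{a}_-e^{i\zeta})}{(e^{i\zeta}-a_-)(e^{i\zeta}-\hat{a}_-)}\\
&\quad\times\prod_{k}\frac{(1-qe^{i(\zeta-\eta_k)})(1-qe^{i(\zeta+\eta_k)})}{(e^{i(\zeta-\eta_k)}-q)(e^{i(\zeta+\eta_k)}-q)}.
\end{align*}
Because the right-hand side is invariant under permuting and negating the $\eta_k$, this family of identities over $v\in W_0$ is equivalent to the $n$ equations~\eqref{BAE} (the $\zeta=-\xi_j$ instances being the reciprocals of the $\zeta=\xi_j$ ones). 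Conversely, if~\eqref{BAE} holds then every coefficient of $\mathbf{e}^{iv\boldsymbol{\xi}}$ in $I_0\phi_{\boldsymbol{\xi}}-\tau_0\phi_{\boldsymbol{\xi}}$ vanishes, i.e. $I_0\phi_{\boldsymbol{\xi}}=\tau_0\phi_{\boldsymbol{\xi}}$, whence $\Phi_{\boldsymbol{\xi}}\in\mathcal{C}(\mathbb{Z}^n)^W$, as claimed.
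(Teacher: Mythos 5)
Your proposal is correct and follows essentially the same route as the paper's proof: reduce $W$-invariance to the single relation $I_0\phi_{\boldsymbol{\xi}}=\tau_0\phi_{\boldsymbol{\xi}}$ via Eqs.~\eqref{W0-invariance}, \eqref{IT:relations}, compute the action of $I_0$ on plane waves (your explicit coefficients agree with the paper's $\mathrm{b}_0$, $\mathrm{c}_0$), and then use the decomposition of Proposition~\ref{BWF:prp} together with linear independence of the plane waves and the product formula~\eqref{cfun} to identify the vanishing conditions with~\eqref{BAE}. The only cosmetic difference is that you carry out the geometric summation for $I_0\mathbf{e}^{i\boldsymbol{\eta}}$ from the definition and spell out the $W_0$-symmetry reduction of the family of conditions to the $n$ equations, both of which the paper states more tersely.
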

To prove this proposition it suffices to check that the Bethe equations are equivalent to the condition that
$\hat{T}_0\Phi_{\boldsymbol{\xi}}=\tau_0\Phi_{\boldsymbol{\xi}}$ (cf. Eqs. \eqref{W-inv-subspace:b} and \eqref{W0-invariance}). The details of this verification are relegated to Subsection \ref{BAE:prf} at the end.

As a direct consequence of the trivial action of the propagation operator $\mathcal{J}$ \eqref{propagator} on functions supported on the fundamental domain $\Lambda_{n,m}$ \eqref{dominant}---together with the plane waves decomposition of $\phi_{\boldsymbol{\xi}}$ in Proposition \ref{BWF:prp} and the conditioned $W$-invariance of $\Phi_{\boldsymbol{\xi}}$ in Proposition \ref{BAE:prp}---one deduces that
the affine hyperoctahedral Hall-Littlewood function \eqref{HLF} can be written explicitly in terms of hyperoctahedral Hall-Littlewood polynomials \eqref{HLp}, \eqref{Cp} when the spectral parameter satisfies the Bethe equations in Eq. \eqref{BAE}.
 
\begin{corollary}[Bethe Wave Function]\label{AHL:cor}
For $\boldsymbol{\xi}\in \mathbb{R}^n_{\text{reg}}$ \eqref{Rreg} satisfying the Bethe equations \eqref{BAE},
one has that
\begin{equation}\label{msf-decomposition}
  \Phi_\xi (\lambda)= P_{\lambda_+}(\boldsymbol{\xi};q,a_-,\hat{a}_-)\qquad (\lambda\in\mathbb{Z}^n),
 \end{equation}
where $P_{\lambda}(\boldsymbol{\xi};q,a,\hat{a})$ is given by Eqs. \eqref{HLp}, \eqref{Cp}.
\end{corollary}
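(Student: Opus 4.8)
The plan is to assemble the statement from the three ingredients flagged in the paragraph preceding it: the triviality of the propagation operator $\mathcal{J}$ on the fundamental alcove, the plane waves decomposition of Proposition \ref{BWF:prp}, and the $W$-invariance of $\Phi_{\boldsymbol{\xi}}$ supplied by Proposition \ref{BAE:prp}. First I would observe that for $\lambda\in\Lambda_{n,m}$ one has $w_\lambda=1$ and $\lambda_+=\lambda$, so by the defining formula \eqref{propagator} of $\mathcal{J}$ one gets $\Phi_{\boldsymbol{\xi}}(\lambda)=(\mathcal{J}\phi_{\boldsymbol{\xi}})(\lambda)=\phi_{\boldsymbol{\xi}}(\lambda)$. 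Thus the whole problem reduces to (i) identifying $\phi_{\boldsymbol{\xi}}$ on $\Lambda_{n,m}$ with the hyperoctahedral Hall-Littlewood polynomial and (ii) propagating that identity to all of $\mathbb{Z}^n$ by $W$-invariance.

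For step (i) I would feed the plane waves expansion $\phi_{\boldsymbol{\xi}}=\sum_{w\in W_0}C(w\boldsymbol{\xi})\,\mathbf{e}^{iw\boldsymbol{\xi}}$ of Proposition \ref{BWF:prp} (valid for $\boldsymbol{\xi}\in\mathbb{R}^n_{\text{reg}}$) into the definition \eqref{HLp}--\eqref{Cp} of $P_\lambda(\boldsymbol{\xi};q;a_-,\hat{a}_-)$. Since $W_0$ is the group of signed permutations, writing an element $w\in W_0$ so that $w\boldsymbol{\xi}=(\epsilon_1\xi_{\sigma_1},\dots,\epsilon_n\xi_{\sigma_n})$ sets up a bijection between $W_0$ and the index set $S_n\times\{1,-1\}^n$ over which \eqref{HLp} is summed; moreover the coefficient $C(\boldsymbol{\xi})$ of \eqref{cfun} is literally the expansion coefficient $C(\,\cdot\,;q;a_-,\hat{a}_-)$ of \eqref{Cp}. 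Term by term this gives $\phi_{\boldsymbol{\xi}}(\lambda)=P_\lambda(\boldsymbol{\xi};q;a_-,\hat{a}_-)$ for every $\lambda\in\mathbb{Z}^n$, in particular on $\Lambda_{n,m}$.

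Finally, for step (ii), when $\boldsymbol{\xi}$ solves the Bethe equations \eqref{BAE} Proposition \ref{BAE:prp} guarantees $\Phi_{\boldsymbol{\xi}}\in\mathcal{C}(\mathbb{Z}^n)^W$, hence $\Phi_{\boldsymbol{\xi}}(\lambda)=\Phi_{\boldsymbol{\xi}}(\lambda_+)$ for all $\lambda\in\mathbb{Z}^n$; combining this with the previous two steps yields $\Phi_{\boldsymbol{\xi}}(\lambda)=\Phi_{\boldsymbol{\xi}}(\lambda_+)=\phi_{\boldsymbol{\xi}}(\lambda_+)=P_{\lambda_+}(\boldsymbol{\xi};q;a_-,\hat{a}_-)$, which is the asserted identity. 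The argument is essentially bookkeeping once Propositions \ref{BWF:prp} and \ref{BAE:prp} are in hand; the one point I would treat with care, and where the modest amount of real work lies, is the matching in step (i) between the $W_0$-sum produced by the integral-reflection idempotent $\mathbf{1}_0$ and the $(\sigma,\epsilon)$-sum of Macdonald's definition, together with the verification that the two displayed formulas for $C$ agree verbatim. I do not anticipate any genuine obstacle beyond this identification.
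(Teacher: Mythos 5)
Your proposal is correct and follows exactly the route the paper indicates for this corollary (which it states as a direct consequence without a separate written proof): triviality of $\mathcal{J}$ on $\Lambda_{n,m}$, the identification $\phi_{\boldsymbol{\xi}}(\lambda)=P_\lambda(\boldsymbol{\xi};q;a_-,\hat{a}_-)$ via the plane-wave decomposition and the matching of \eqref{cfun} with \eqref{Cp}, and the $W$-invariance from Proposition \ref{BAE:prp} giving $\Phi_{\boldsymbol{\xi}}(\lambda)=\Phi_{\boldsymbol{\xi}}(\lambda_+)$. The bookkeeping you flag in step (i) — the bijection between $W_0$ and $S_n\times\{1,-1\}^n$ and the verbatim agreement of the two $C$-formulas — is indeed the only content, and it checks out.
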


\subsection{Diagonalization}
In this subsection we will  restrict our parameters further to the domain
\begin{subequations}
\begin{equation}\label{reala}
q, a_\pm, \hat{a}_\pm \in (-1,1)\setminus \{ 0\} ,
\end{equation}
or equivalently (cf. Eq. \eqref{parameters})
\begin{equation}\label{realb}
0<|\tau_0|<|\hat{\tau}_0|<1,\quad 0<|\tau_n|<|\hat{\tau}_n|<1 \quad \text{and}\quad 0<|\tau |<1,
\end{equation}
\end{subequations}
where $\tau$ and the pairs $\tau_0,\hat{\tau}_0$ and $\tau_n,\hat{\tau}_n$  are each either \emph{real} or \emph{purely imaginary}. 

Let us recall that $\boldsymbol{\xi}_\mu$, $\mu\in\Lambda_{n,m}$ denotes the unique global minimum of the semibounded strictly convex Morse function $V_\mu (\boldsymbol{\xi})$ \eqref{q-boson-morse-a}, \eqref{q-boson-morse-b}.  We will first check that this minimum provides a solution to the Bethe equations in Proposition \ref{BAE:prp}. The crux is that the critical equation $\nabla_{\boldsymbol{\xi}} V_\mu (\boldsymbol{\xi})=0$ in Eq. \eqref{q-boson-critical-eq} coincides with the Bethe equations \eqref{BAE} up to exponentiation.

\begin{proposition}[Bethe Roots]\label{bv:prp}
The minima $\boldsymbol{\xi}_\mu$, $\mu\in\Lambda_{n,m}$ provide $\frac{(m+n)!}{m!\, n!}$ (distinct) solutions for the Bethe equations \eqref{BAE} belonging to the open fundamental alcove $\mathbb{A}$ \eqref{alcove} (and thus in particular to $\mathbb{R}^n_{\text{reg}}$ \eqref{Rreg}).
\end{proposition}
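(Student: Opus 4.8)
The plan is to recognize the Bethe equations \eqref{BAE} as the exponentiated form of the critical-point equation \eqref{q-boson-critical-eq} of the Morse function $V_\mu$, to invoke Remark~\ref{critical-point:rem} for the localization of the minimizer, and to use \eqref{q-boson-critical-eq} once more to establish injectivity of the assignment $\mu\mapsto\boldsymbol{\xi}_\mu$. For the first point, observe that since $V_\mu$ is smooth (Remark~\ref{critical-point:rem}) its global minimum $\boldsymbol{\xi}_\mu$ is a critical point, so Eq.~\eqref{q-boson-critical-eq} holds at $\boldsymbol{\xi}=\boldsymbol{\xi}_\mu$; moreover $v_a(\xi)$ is real for real $\xi$ and $-1<a<1$, so applying $x\mapsto e^{ix}$ to \eqref{q-boson-critical-eq} term by term is legitimate. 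From the closed form of $v_a$ in \eqref{q-boson-morse-b} one has the elementary identity
\begin{equation*}
e^{iv_a(\xi)}=\frac{e^{i\xi}-a}{1-ae^{i\xi}}\qquad(-1<a<1,\ \xi\in\mathbb{R}),
\end{equation*}
since $v_a(\xi)=i\log\bigl((1-ae^{i\xi})/(e^{i\xi}-a)\bigr)$ with the argument of the logarithm on the unit circle. Multiplying out the exponentiated equation, the term $2m\xi_j$ contributes $e^{2im\xi_j}$, the four boundary terms $v_{a_-}(\xi_j),v_{\hat{a}_-}(\xi_j),v_{a_+}(\xi_j),v_{\hat{a}_+}(\xi_j)$ jointly contribute the reciprocal of the product of the two boundary factors of \eqref{BAE}, and each summand $v_q(\xi_k+\xi_j)-v_q(\xi_k-\xi_j)$ contributes — via the oddness of $v_q$ — the reciprocal of the bulk factor $\tfrac{(1-qe^{i(\xi_j-\xi_k)})(1-qe^{i(\xi_j+\xi_k)})}{(e^{i(\xi_j-\xi_k)}-q)(e^{i(\xi_j+\xi_k)}-q)}$, while the right-hand side collapses to $e^{2\pi i(\rho_j+\mu_j)}=1$. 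Rearranging gives precisely \eqref{BAE}, so every $\boldsymbol{\xi}_\mu$ with $\mu\in\Lambda_{n,m}$ is a solution.

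For the location and the count, Remark~\ref{critical-point:rem} places $\boldsymbol{\xi}_\mu$ in the open fundamental alcove $\mathbb{A}$ \eqref{alcove}, and the quantitative bounds \eqref{e:momentgaps1}, \eqref{e:momentgaps2} recorded there confine it further to the regular locus $\mathbb{R}^n_{\text{reg}}$ \eqref{Rreg}, so that Propositions~\ref{BWF:prp} and \ref{BAE:prp} apply at $\boldsymbol{\xi}=\boldsymbol{\xi}_\mu$. Solving \eqref{q-boson-critical-eq} for $\mu_j$ now exhibits $\mu$ as a function of $\boldsymbol{\xi}_\mu$:
\begin{equation*}
\mu_j=\frac{1}{2\pi}\Bigl(2m\xi_j+v_{a_-}(\xi_j)+v_{\hat{a}_-}(\xi_j)+v_{a_+}(\xi_j)+v_{\hat{a}_+}(\xi_j)+\sum_{k\neq j}\bigl(v_q(\xi_k+\xi_j)-v_q(\xi_k-\xi_j)\bigr)\Bigr)-\rho_j,
\end{equation*}
so the map $\mu\mapsto\boldsymbol{\xi}_\mu$ is injective on $\Lambda_{n,m}$; since $|\Lambda_{n,m}|=\frac{(m+n)!}{m!\,n!}$, this produces exactly $\frac{(m+n)!}{m!\,n!}$ distinct solutions of \eqref{BAE} inside $\mathbb{A}$.

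I expect no substantial difficulty here: the analytic core — existence, uniqueness, and the alcove localization of $\boldsymbol{\xi}_\mu$ — has already been carried out in Remark~\ref{critical-point:rem} through the strict convexity and coercivity of $V_\mu$. The one place requiring care is the bookkeeping in the exponentiation step, namely verifying that the boundary and bulk factors of \eqref{BAE} come out exactly as reciprocals (which uses both the closed form of $v_a$ and the oddness of $v_q$), together with the minor check that the minimizer indeed avoids the singular hyperplanes defining $\mathbb{R}^n_{\text{reg}}$.
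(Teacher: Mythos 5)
Your proposal is correct and follows essentially the same route as the paper's own (very brief) proof: multiply the critical equation \eqref{q-boson-critical-eq} by $i$ and exponentiate via the closed form \eqref{q-boson-morse-b} to recover the Bethe equations \eqref{BAE}, appeal to Remark~\ref{critical-point:rem} for the localization of $\boldsymbol{\xi}_\mu$ in $\mathbb{A}$, and obtain distinctness of the $\boldsymbol{\xi}_\mu$ from the critical equation itself (you make this last point explicit by solving for $\mu_j$). The only difference is that you spell out the bookkeeping the paper leaves implicit, which is fine.
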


\begin{proof}
If we multiply Eq. \eqref{q-boson-critical-eq}  by the imaginary unit and exponentiate  both sides (using Eq. \eqref{q-boson-morse-b}),  then it becomes clear that for any $\mu\in\Lambda_{m,n}$ the critical point $\boldsymbol{\xi}_\mu$ provides a solution to the Bethe equations \eqref{BAE}. 
As detailed at the end of Remark \ref{critical-point:rem},  the critical points in question belong to the open fundamental alcove $\mathbb{A}$ ($\subset \mathbb{R}^n_{\text{reg}}$).  Moreover, it is also manifest from Eq. \eqref{q-boson-critical-eq} that
$\boldsymbol{\xi}_\mu\neq \boldsymbol{\xi}_{\hat{\mu}}$  if $\mu\neq \hat{\mu}$.
\end{proof}

By combining Corollary \ref{AHL:cor} and Proposition \ref{bv:prp}, it follows that  for any $\mu\in\Lambda_{n,m}$ 
the value of
$\psi_\mu:=\Pi \Phi_{\boldsymbol{\xi}_\mu} $ at $\lambda\in\Lambda_{n,m}$ is given explicitly by $P_{\lambda }(\boldsymbol{\xi}_\mu ;q,a_-,\hat{a}_-)$ \eqref{HLp}, \eqref{Cp}:
\begin{equation}\label{HLF-decomposition}
\psi_\mu (\lambda)=\Phi_{\boldsymbol{\xi}_\mu} (\lambda) = P_{\lambda }(\boldsymbol{\xi}_\mu ;q,a_-,\hat{a}_-)\qquad (\forall \mu,\lambda\in\Lambda_{n,m}) .
\end{equation}

We are now in the position to formulate the main result of this paper.

\begin{theorem}[Completeness of the Bethe Ansatz Eigenfunctions]\label{completeness:thm}
For parameters in the domain \eqref{reala}, \eqref{realb}, the hyperoctahedral Hall-Littlewood functions $\psi_\mu=\Pi \Phi_{\boldsymbol{\xi}_\mu} $, $\mu\in\Lambda_{n,m}$
constitute a basis for $\mathcal{C}(\Lambda_{n,m})$ that diagonalizes the commuting $q$-boson quantum integrals $H_1,\ldots ,H_n$ \eqref{q-boson-integrals} simultaneously:
\begin{equation}\label{eveq}
H_r \psi_\mu   =  E_r(\boldsymbol{\xi}_\mu)  \psi_\mu \qquad  (\mu\in\Lambda_{n,m},\, r=1,\ldots ,n).
\end{equation}
Here $E_r(\boldsymbol{\xi}):=E_r(e^{i\xi_1},\ldots ,e^{i\xi_n})$ refers to the elementary symmetric function in
Eq. \eqref{elementary-sf} with $X_j$ replaced by $e^{i\xi_j}$ ($j=1,\ldots ,n$).
\end{theorem}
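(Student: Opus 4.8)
The plan is to assemble the theorem from the structural results already in place. First I would verify the eigenvalue equation \eqref{eveq}. By construction $\psi_\mu = \Pi\Phi_{\boldsymbol{\xi}_\mu}$ with $\Phi_{\boldsymbol{\xi}_\mu} = \mathcal{J}\phi_{\boldsymbol{\xi}_\mu}$, and $\phi_{\boldsymbol{\xi}} = \bigl(\sum_{w\in W_0}\tau_w I_w\bigr)\mathbf{e}^{i\boldsymbol{\xi}}$ is an image (via the integral-reflection representation of Proposition \ref{integral-reflection:prp}) of $\mathbf{1}_0\,\mathbf{e}^{i\boldsymbol{\xi}}$. Since $\mathbf{e}^{i\boldsymbol{\xi}}$ is an eigenfunction of each translation operator $t_{e_j}$ with eigenvalue $e^{-i\xi_j}$, it is an eigenfunction of $E_r(t)$ \eqref{Er} with eigenvalue $E_r(e^{i\xi_1},\dots,e^{i\xi_n}) = E_r(\boldsymbol{\xi})$ (the elementary symmetric functions are invariant under sign changes, so the sign ambiguity $e^{\pm i\xi_j}$ is harmless). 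Because $E_r(t)$ lies in the center of $\mathbb{H}$ by \eqref{center}, it commutes with every $I_w$, hence $E_r(t)\phi_{\boldsymbol{\xi}} = E_r(\boldsymbol{\xi})\phi_{\boldsymbol{\xi}}$. Conjugating by $\mathcal{J}$ and applying $\Pi$ then gives $H_r\psi_\mu = E_r(\boldsymbol{\xi}_\mu)\psi_\mu$ directly from the definition $H_r = \Pi L_r\Pi^{-1} = \Pi\mathcal{J}E_r(t)\mathcal{J}^{-1}\Pi^{-1}$; one only needs that $\psi_\mu$ genuinely lies in $\mathcal{C}(\Lambda_{n,m})$, i.e. that $\Phi_{\boldsymbol{\xi}_\mu}\in\mathcal{C}(\mathbb{Z}^n)^W$, which is exactly the content of Proposition \ref{BAE:prp} together with Proposition \ref{bv:prp} (the Bethe roots $\boldsymbol{\xi}_\mu$ satisfy \eqref{BAE} and lie in $\mathbb{R}^n_{\text{reg}}$).

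Next I would prove the completeness, i.e. that $\{\psi_\mu\}_{\mu\in\Lambda_{n,m}}$ is a basis of $\mathcal{C}(\Lambda_{n,m})$. Since $\dim\mathcal{C}(\Lambda_{n,m}) = \binom{m+n}{n} = \tfrac{(m+n)!}{m!\,n!}$ and Proposition \ref{bv:prp} produces exactly that many distinct functions $\psi_\mu$, it suffices to show \emph{linear independence}. Here I would invoke the spectral separation argument sketched in the introduction: the joint eigenvalue tuple $\bigl(E_1(\boldsymbol{\xi}_\mu),\dots,E_n(\boldsymbol{\xi}_\mu)\bigr)$ determines $\mu$. Indeed, $\bigl(E_1(\boldsymbol{\xi}),\dots,E_n(\boldsymbol{\xi})\bigr)$ are, up to a change of variables, the power-sum-type symmetric functions of $\{e^{i\xi_j}+e^{-i\xi_j}\}_{j=1}^n = \{2\cos\xi_j\}$, so knowing this tuple is equivalent to knowing the multiset $\{\cos\xi_{\mu,1},\dots,\cos\xi_{\mu,n}\}$. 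By Remark \ref{critical-point:rem}, $\boldsymbol{\xi}_\mu$ lies in the open alcove $\mathbb{A}$, where $0<\xi_n<\cdots<\xi_1<\pi$ and the map $\xi\mapsto\cos\xi$ is strictly decreasing and injective; hence the multiset recovers the ordered vector $\boldsymbol{\xi}_\mu$, and the sharp bounds \eqref{e:momentgaps1}–\eqref{e:momentgaps2} show $\boldsymbol{\xi}_\mu\neq\boldsymbol{\xi}_{\hat\mu}$ for $\mu\neq\hat\mu$ (this last point is already established inside the proof of Proposition \ref{bv:prp}). Consequently the $\psi_\mu$ are joint eigenfunctions of the commuting operators $H_1,\dots,H_n$ with pairwise distinct joint eigenvalues, which forces linear independence; and since there are $\binom{m+n}{n}$ of them in a space of that dimension, they form a basis.

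Finally, the whole argument must connect to Theorem \ref{diagonal:thm}: via \eqref{HLF-decomposition} one has $\psi_\mu(\lambda) = P_\lambda(\boldsymbol{\xi}_\mu;q,a_-,\hat{a}_-)$ for $\lambda\in\Lambda_{n,m}$, and Proposition \ref{H:prp} together with \eqref{parameters} identifies $H = H_1$ with the $n$-particle $q$-boson Hamiltonian of Proposition \ref{Hnm:prp} (under the parameter dictionary $c_\pm = a_\pm\hat{a}_\pm$, $g_\pm = a_\pm+\hat{a}_\pm$), so \eqref{eveq} for $r=1$ yields $H\psi_\mu = E(\boldsymbol{\xi}_\mu)\psi_\mu$ with $E(\boldsymbol{\xi}) = 2\sum_j\cos\xi_j$. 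The hypothesis \eqref{realb} — that the parameters lie strictly inside the unit circle and that the relevant pairs are real or purely imaginary — is needed precisely so that the parameter dictionary \eqref{parameters} is realizable, so that the Morse function $V_\mu$ is genuinely convex with a minimum in $\mathbb{A}$ (Remark \ref{critical-point:rem}), and so that the normalization constant $P_{0^n} = (a\hat{a};q)_n[n]_q!$ in Remark \ref{normalization:rem} is nonzero (ensuring $\psi_{0^n}\neq 0$, and more generally that the leading coefficients do not degenerate). I expect the \textbf{main obstacle} to be the spectral-separation step: one must argue carefully that the \emph{entire} tuple of $n$ quantum integrals — not merely $H_1$ — is needed and suffices to distinguish all $\mu$, handling the sign ambiguity $e^{\pm i\xi_j}$ and the reduction to the open alcove where injectivity of $\cos$ holds, since a single Hamiltonian $H_1$ generically has eigenvalue collisions. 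Everything else is bookkeeping with results already proven above.
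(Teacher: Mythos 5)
Your proposal is correct and follows essentially the same route as the paper's proof: the eigenvalue equation is extracted from $H_r=\Pi\,\mathcal{J}E_r(t)\mathcal{J}^{-1}\Pi^{-1}$ combined with Propositions \ref{BAE:prp} and \ref{bv:prp}, and completeness follows from separation of the joint spectrum by $E_1(\boldsymbol{\xi}),\ldots ,E_n(\boldsymbol{\xi})$ on the open alcove $\mathbb{A}$ together with the count $\frac{(m+n)!}{m!\,n!}$. Your variant of the eigenvalue step---using that $E_r(t)$ commutes with all $I_w$ (centrality, Eq. \eqref{center}) and acts on $\mathbf{e}^{i\boldsymbol{\xi}}$ by the scalar $E_r(\boldsymbol{\xi})$---is a perfectly good substitute for the paper's appeal to the plane-wave decomposition of Proposition \ref{BWF:prp}. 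One point to tighten: the ``distinct joint eigenvalues imply linear independence'' argument requires $\psi_\mu\neq 0$ for \emph{every} $\mu\in\Lambda_{n,m}$ (not just $\mu=0^n$), and your appeal to nondegenerate leading coefficients does not by itself exclude that the polynomial vanishes at the particular Bethe root $\boldsymbol{\xi}_\mu$; the paper's one-line fix is to evaluate at the lattice point $\lambda=0^n$, where $\psi_\mu(0^n)=P_{0^n}(\boldsymbol{\xi}_\mu;q;a_-,\hat{a}_-)=(a_-\hat{a}_-;q)_n\,[n]_q!\neq 0$ independently of $\boldsymbol{\xi}_\mu$ (Remark \ref{normalization:rem} via Eq. \eqref{HLF-decomposition}).
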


\begin{proof}
Let us first emphasize that Eq. \eqref{HLF-decomposition} implies that for any $\mu\in\Lambda_{n,m}$ the function
 $\psi_\mu\neq 0$ as an element of $\mathcal{C}(\Lambda_{n,m})$ (cf. Remark \ref{normalization:rem}).  Moreover, by acting with
$H_r$  \eqref{q-boson-integrals} on $\psi_\mu=\Pi \Phi_{\boldsymbol{\xi}_\mu} $ it is readily confirmed that we are dealing with an eigenfunction:
\begin{equation*}
H_r \psi_\mu = \Pi L_r \Phi_{\boldsymbol{\xi}_\mu} \stackrel{\text{(i)}}{=} \Pi \mathcal{J} E_r(t)\phi_{\boldsymbol{\xi}_\mu}\stackrel{\text{(ii)}}{=}
\Pi \mathcal{J}  E_r(\boldsymbol{\xi}_\mu) \phi_{\boldsymbol{\xi}_\mu}
\stackrel{\text{(iii)}}{=}  E_r(\boldsymbol{\xi}_\mu) \psi_\mu ,
\end{equation*}
where we have used (i) Eqs.~\eqref{Lr}, \eqref{HLF}, (ii) Eq.~\eqref{Er} and Proposition \ref{BWF:prp},  and (iii) 
the fact that $E_r(\boldsymbol{\xi}_\mu) $ is a scalar and can therefore be pulled out from the left.
Since the elementary symmetric functions $E_1(\boldsymbol{\xi}), \ldots ,E_n(\boldsymbol{\xi})$  separate the points of $\mathbb{A}$ \eqref{alcove}, we see that for any $\mu,\hat{\mu}\in\Lambda_{n,m}$:
\begin{equation*}
\mu\neq\hat{\mu}\stackrel{\text{Prp. \ref{bv:prp}}}{\Longrightarrow} \boldsymbol{\xi}_\mu\neq \boldsymbol{\xi}_{\hat{\mu}}\Longrightarrow E_r(\boldsymbol{\xi}_\mu)\neq E_r(\boldsymbol{\xi}_{\hat{\mu}})
\end{equation*}
for some  $r\in \{ 1,\ldots ,n\}$. This nondegeneracy ensures that the $\frac{(m+n)!}{m!\, n!}$ eigenfunctions $\psi_\mu$, $\mu\in\Lambda_{n,m}$ must be linearly independent, so they indeed provide a basis for $\mathcal{C}(\Lambda_{n,m})$.
\end{proof}

By specializing to the simplest quantum integral for $r=1$, Theorem \ref{completeness:thm} gives rise to Theorem \ref{diagonal:thm}. From the explicit formula for the operator at issue in this particular case  (cf. Propositions \ref{Hnm:prp} and \ref{H:prp}), it is manifest that we may analytically continue
the corresponding eigenvalue equation in Eqs. \eqref{EV-eq:a}, \eqref{EV-eq:b} to the slightly larger parameter domain in Eq. \eqref{domain} (cf.   Remark \ref{analyticity:rem}).

\subsection{Proof of Proposition \ref{BWF:prp}}\label{BWF:prf}
It is seen from the explicit action in Eqs.  \eqref{Ij:a}--\eqref{Ij:c} that
for any $\boldsymbol{\xi}\in \mathbb{R}^n_{reg}$ \eqref{Rreg}  and $j=1,\dots,n$:
\begin{equation}\label{I-plane}
\begin{split}\tau_j I_j \mathbf{e}^{i\boldsymbol{\xi }} &= \text{b}_j(s_j\boldsymbol{\xi}) \mathbf{e}^{i\boldsymbol{\xi}}  +\text{c}_j(s_j\boldsymbol{\xi}) \mathbf{e}^{i s_j \boldsymbol{\xi}} ,\\
&= \text{b}_j(-\boldsymbol{\xi}) \mathbf{e}^{i\boldsymbol{\xi}}  +\text{c}_j(-\boldsymbol{\xi}) \mathbf{e}^{i s_j \boldsymbol{\xi}} ,
\end{split}
\end{equation}
where
\begin{align*}
\text{b}_j(\boldsymbol{\xi}) &= \tau_j^2-\text{c}_j(\boldsymbol{\xi})= \text{c}_j(-\boldsymbol{\xi})-1,    \\
\text{c}_j(\boldsymbol{\xi}) &= \begin{cases}     \frac{1-q e^{-i(\xi_j-\xi_{j+1})}}{1-e^{-i(\xi_j-\xi_{j+1})}}      &\text{if} \  j=1,\ldots ,n-1, \\
 \frac{(1-a_- e^{-i\xi_n})(1-\hat{a}_- e^{-i\xi_n})}{(1-e^{-2i\xi_n})}   &\text{if} \  j=n.
 \end{cases}
\end{align*}
Hence  $\phi_{\boldsymbol{\xi}}$ \eqref{HLF} decomposes in this situation as a linear combination of plane waves of the form
\begin{equation}\label{pw:exp}
  \phi_{\boldsymbol{\xi}}=\sum_{w\in W_0} C_w(\boldsymbol{\xi})\mathbf{e}^{iw\boldsymbol{\xi }}
\end{equation}
for certain \emph{unique} coefficients $C_w(\boldsymbol{\xi})\in\mathbb{C}$. Here we use that for $\boldsymbol{\xi}\in\mathbb{R}^n_{\text{reg}}$ the plane waves $\mathbf{e}^{iw\boldsymbol{\xi}}$, $w\in W_0$ are linearly independent in $\mathcal{C}(\mathbb{Z}^n)$, because the corresponding wave vectors $w\boldsymbol{\xi}$, $w\in W_0$ are \emph{distinct} modulo $2\pi\mathbb{Z}^n$ (as the stabilizer of $\boldsymbol{\xi}\in\mathbb{R}^n_{\text{reg}}$ for the action of $W$ in Eq. \eqref{W-action} with $c=\pi$ is trivial).

Let us first compute the coefficient $C_{w}(\boldsymbol{\xi})$ for $w=w_0$, where $w_0$ refers to
the longest element of $W_0$ (so $w_0\boldsymbol{\xi}=-\boldsymbol{\xi}$). To this end we define
\begin{subequations}
\begin{equation}\label{rj}
r_j:=  s_js_{j+1}\cdots s_{n-1}  s_n s_{n-1}\cdots s_{j+1}s_j\quad \text{for}\quad  j=1,\ldots ,n
\end{equation} 
(so $r_j(\xi_1,\ldots ,\xi_n)=(\xi_1,\ldots,\xi_{j-1},-\xi_j,\xi_{j+1},\ldots ,\xi_n)$), which permits to decompose $w_0$ in terms of the following reduced
expression:
\begin{equation}
w_0= r_1 r_2   \cdots  r_n
\end{equation}
(of length $n^2$). From Eq. \eqref{I-plane} it is immediate that for any reduced expression $w=s_{j_\ell} \cdots s_{j_1}\in W_0$ the action of
$\tau_wI_w$ on $ \mathbf{e}^{i\boldsymbol{\xi }}$ is of the form
\begin{equation}\label{lo}
\tau_w I_w \mathbf{e}^{i\boldsymbol{\xi }} =\Bigl( \prod_{1\leq k\leq \ell} c_{j_k}(s_{j_k}\cdots s_{j_2}s_{j_1}\boldsymbol{\xi}) \Bigr) \mathbf{e}^{i w\boldsymbol{\xi }} \quad +\text{l.o.},
\end{equation}
\end{subequations}
where l.o. stands for a linear combinations of plane waves $\mathbf{e}^{i v\boldsymbol{\xi }}$ with $v<w$ in the Bruhat partial order on $W_0$ (cf. Subsection \ref{inv:prf}). With the aid of Eqs. \eqref{rj}, \eqref{lo}, one computes that
\begin{eqnarray*}
\lefteqn{\tau_{r_j} I_{r_j}  \mathbf{e}^{i\boldsymbol{\xi }}=} && \\ && \mathbf{e}^{i r_j\boldsymbol{\xi }}
\left(  \frac{  (1-a_- e^{i\xi_j}) (1-\hat{a}_- e^{i\xi_j})   }{1- e^{2i\xi_j}} 
\prod_{j < k \le n} 
   \frac{ 1-q e^{i(\xi_j-\xi_k)}}  { 1-e^{i(\xi_j-\xi_k)}}   
      \frac{ 1-q e^{i(\xi_j+\xi_k)}}  { 1-e^{i(\xi_j+\xi_k)}}  \right)  \ +\text{l.o.}
\end{eqnarray*}
Hence, we deduce that the leading coefficient (with respect to the Bruhat order) in the plane waves expansion \eqref{pw:exp} is given explicitly by
$C_{w_0}(\boldsymbol{\xi})=C(w_0\boldsymbol{\xi})= C(-\boldsymbol{\xi})$ with $C(\cdot)$ as in Eq. \eqref{cfun}.

To compute the remaining coefficients $C_w(\boldsymbol{\xi})$ for $w\neq w_0$, we note that $I_j \phi_{\boldsymbol{\xi}}=\tau_j \phi_{\boldsymbol{\xi}}$ for $j=1,\ldots,n$ (because of Eq. \eqref{idempotent}). It thus follows---by Eq. \eqref{I-plane} and the linear independence of the plane waves---that for $\boldsymbol{\xi}\in\mathbb{R}^n_{\text{reg}}$:
\begin{equation*}
C_{s_jw}(\boldsymbol{\xi}) \text{c}_j(w\boldsymbol{\xi})= C_{w}(\boldsymbol{\xi}) \text{c}_j(-w\boldsymbol{\xi})
 \quad\text{for\ all}\ 
 w\in W_0,\  j\in\{1,\dots, n\} .
\end{equation*}
Moreover, it is readily seen that $C(w\boldsymbol{\xi})$ also satisfies  this same recurrence relation, since we deduce from
the explicit product formula in Eq \eqref{cfun}  that for  $\boldsymbol{\xi}\in\mathbb{R}^n_{\text{reg}}$:
\begin{equation*}
C (s_j\boldsymbol{\xi})\text{c}_j(\boldsymbol{\xi})=
C(\boldsymbol{\xi}) \text{c}_j(-\boldsymbol{\xi})
 \quad\text{for\ all}\ 
  j\in\{1,\dots, n\} .
\end{equation*}
The upshot is that $C_{w}(\boldsymbol{\xi}) =C(w\boldsymbol{\xi})$ for all $w\in W_0$ and any $\boldsymbol{\xi}\in\mathbb{R}^n_{\text{reg}}$  (by downward  induction with respect to the Bruhat order starting from the initial condition
$C_{w_0}(\boldsymbol{\xi}) =C(w_0\boldsymbol{\xi})$ while using that  $c_j(\boldsymbol{\xi})\neq 0$).

\subsection{Proof of Proposition \ref{BAE:prp}}\label{BAE:prf}
The affine Hall-Littlewood function belongs to $\mathcal{C}(\mathbb{Z}^n)^W$
provided $\hat{T}_0\Phi_{\boldsymbol{\xi}}=\tau_0\Phi_{\boldsymbol{\xi}}$ (cf. Eqs. \eqref{W-inv-subspace:b} and \eqref{W0-invariance}), or equivalently
 $I_0\phi_{\boldsymbol{\xi}} =\tau_0\phi_{\boldsymbol{\xi}}$ (cf. Proposition \ref{IT:prp}).
Since for any $\boldsymbol{\xi}\in \mathbb{R}^n_{\text{reg}}$ \eqref{Rreg}, the action of $I_0$ \eqref{Ij:a}--\eqref{Ij:c} on $\mathbf{e}^{i\boldsymbol{\xi}}$ is determined by
\begin{equation*}
\tau_0 I_0 \mathbf{e}^{i\boldsymbol{\xi }} =  \text{b}_0(-\boldsymbol{\xi}) \mathbf{e}^{i\boldsymbol{\xi}}  +\text{c}_0(-\boldsymbol{\xi}) s_0 \mathbf{e}^{i  \boldsymbol{\xi}} ,
\end{equation*}
where $s_0 \mathbf{e}^{i  \boldsymbol{\xi}}= e^{2im\xi_1}\mathbf{e}^{i s_0^\prime\boldsymbol{\xi}}$ and
\begin{align*}
\text{b}_0(\boldsymbol{\xi}) &=  \tau_0^2-\text{c}_0(\boldsymbol{\xi})=\text{c}_0(-\boldsymbol{\xi})-1  ,    \\
\text{c}_0(\boldsymbol{\xi}) &=  \frac{(1-a_+  e^{i\xi_1})(1-\hat{a}_+ e^{i\xi_1})}{1-e^{2i\xi_1}} ,
\end{align*}
we see from the plane waves decomposition in Eqs. \eqref{cf-decomposition}, \eqref{cfun} that in this situation
\begin{align*}
\tau_0 I_0 \phi_{\boldsymbol{\xi }} =
&\sum_{w\in W_0}   \text{b}_0(-w\boldsymbol{\xi} )  C(w\boldsymbol{\xi}) \mathbf{e}^{iw\boldsymbol{\xi}}  \\
  & + \sum_{w\in W_0} \text{c}_0(w\boldsymbol{\xi} )  C(s_0' w\boldsymbol{\xi}) e^{ -2i m (w\boldsymbol{\xi})_1}  \mathbf{e}^{iw\boldsymbol{\xi}} .
\end{align*}
If we compare this expression with the corresponding plane waves decomposition of $\tau_0^2 \phi_{\boldsymbol{\xi}}$---while recalling the linear independence
of the plane waves $\mathbf{e}^{iw\boldsymbol{\xi}}$, $w\in W_0$ for $\boldsymbol{\xi}\in \mathbb{R}^n_{\text{reg}}$ (cf. Subsection \ref{BWF:prf})---then
it follows that the $W$-invariance of the affine Hall-Littlewood polynomial is guaranteed for $\boldsymbol{\xi}\in \mathbb{R}^n_{\text{reg}}$ provided
\begin{equation}\label{Bethe}
\begin{split}
e^{ 2im (w\boldsymbol{\xi})_1} &=
\frac{C(s'_0 w\boldsymbol{\xi})}{C(w\boldsymbol{\xi})} \frac{\text{c}_0 (w\boldsymbol{\xi}) } {\tau_0^2 -  \text{b}_0 (-w\boldsymbol{\xi}) } 
=\frac{C(s'_0 w\boldsymbol{\xi})}{C(w\boldsymbol{\xi})} \frac{\text{c}_0 (w\boldsymbol{\xi}) } {\text{c}_0 (-w\boldsymbol{\xi}) } \\
&=-\frac{C(s'_0 w\boldsymbol{\xi})}{C(w\boldsymbol{\xi})}
 \frac{ (1-a_+ e^{i(w\boldsymbol{\xi})_1}) (1-\hat{a}_+ e^{i(w\boldsymbol{\xi})_1})  } { (e^{i(w\boldsymbol{\xi})_1}-a_+)   (e^{i(w\boldsymbol{\xi})_1}-\hat{a}_+) } 
\qquad \forall w\in W_0.
\end{split}
\end{equation}
By substituting the explicit product formula for $C (\cdot )$ from Eq. \eqref{cfun} and canceling the
common factors in the numerator and denominator, we see that
\begin{align*}
\frac{C(s'_0\boldsymbol{\xi})}{C(\boldsymbol{\xi})}=
-
\frac{ (1-a_- e^{i\xi_1}) (1-\hat{a}_- e^{i\xi_1})  } { (e^{i\xi_1}-a_-)   ( e^{i\xi_1}-\hat{a}_-) } 
 \prod_{1 < j  \leq n}
       \frac{ (1- q e^{i(\xi_1 - \xi_j)}) (1-q e^{i(\xi_1 + \xi_j)})   }
             { (e^{i(\xi_1 - \xi_j)}-q) ( e^{i(\xi_1 + \xi_j)}-q) } ,
\end{align*}
which confirms that Eq. \eqref{Bethe} amounts to the Bethe equations stated in Eq. \eqref{BAE}.

\appendix

\section{The basic representation and Poincar\'e-Birkhoff-Witt property at critical level}\label{appA}
In this appendix Propositions \ref{pbw:prp} and \ref{pol-rep:prp} are proven. To this end we adapt corresponding results of Noumi \cite{nou:macdonald} and Sahi 
\cite{sah:nonsymmetric} to the critical level q=1, with the aid of techniques from \cite[\text{Sec. 3}]{obl:double},
\cite[\text{Sec. 4.3}]{ems-opd-sto:trigonometric}, 
\cite[\text{Ch. 2.1}]{geh:properties} and \cite[\text{Ch. 4.3}]{mac:affine}.

Let $\mathcal{A}$ denote the quotient field of the algebra
$\mathbb{C}[X,Y]$ of Laurent polynomials in the independent  indeterminates
$X_1,\dots,X_n$ and $Y_1,\dots, Y_n$. Our action of $W_0$ on $\mathbb{C}[X]$ is lifted to $\mathcal{A}$ via the field homomorphism determined by the assignment
\begin{equation*}
w(X^\lambda Y^\mu) := X^{w\lambda} Y^{w\mu}\qquad (\text{for}\ w\in W_0 \ \text{and}\ \lambda,\mu\in \mathbb{Z}^n),
\end{equation*}
where (recall) $X^\lambda = X_1^{\lambda_1} \cdots X_n^{\lambda_n}$
and $Y^\mu= Y_1^{\mu_1} \cdots Y_n^{\mu_n}$. We write $f^w$ for the result of the action of  $w\in W_0$  on $f\in \mathcal A$. 

\begin{definition} The skew-group algebra (or smash product) $\mathcal A * W_0$ is the associative unital complex algebra  characterized by the following three properties:
\begin{itemize}
\item[(i)] $\mathcal A* W_0$ contains $\mathcal A$ and the group algebra $\mathbb{C}[W_0]$ as subalgebras,
\item[(ii)] the multiplication map defines an isomorphism of $\mathbb{C}$-vector spaces
\begin{equation*}
\mathcal A \otimes_\mathbb{C} \mathbb{C}[W_0] \to \mathcal A *W_0,
\end{equation*}
\item[(iii)] and we have the cross relations
\begin{equation*} 
(fv)(gw) = f g^v vw, \quad (\forall f,g\in \mathcal A\ \text{and}\  \forall v,w\in W_0 ).
 \end{equation*}
\end{itemize}
\end{definition}
We write $\mathcal A_X*W_0$ for the subalgebra of $\mathcal A*W_0$ determined by the smash product of $\mathbb{C}[W_0]$ and
the subfield $\mathcal A_X\subset \mathcal{A}$ generated by $\mathbb{C}[X]\subset \mathbb{C}[X,Y]$.

Since $W\cong W_0\ltimes (2c\mathbb{Z})^n\subset W_0\ltimes \mathbb{Z}^n$ (cf. Subsection \ref{sec7.1}), it is clear from the definitions that
we can identity the group algebra $\mathbb{C}[W]$ as a subalgebra of $\mathcal A*W_0$ via an injective homomorphism determined by the assignment
$t_\lambda v\to Y^\lambda v$ ($\lambda\in\mathbb{Z}^n$, $v\in W_0$). In particular: $s_0 =t_{2c e_1}s_0^\prime \to Y_1^{2c}s_0^\prime$,
cf. Remark \ref{H-relations:rem}.
With this identification, the Demazure-Lusztig operators $\check{T}_j$  \eqref{Tj:a}--\eqref{Tj:b} 
 will now be interpreted as elements of $\mathcal A* W_0 $.

Let $\check{T}_j^\prime $ be given by $\check{T}_j $ \eqref{Tj:a}--\eqref{Tj:b} with $s_j$  replaced by $s_j^\prime$. (So $\check{T}_j^\prime=\check{T}_j$ unless $j=0$.) The following lemma provides a representation of $\mathbb{H}$ that can be regarded as a
$\text{q}\to 1$ degeneration of the well-known polynomial representation of the $C^\vee C$ double affine Hecke algebra due to Noumi \cite{nou:macdonald} (at the level of the affine Hecke algebra) and Sahi 
 \cite{sah:nonsymmetric} (at the level of the double affine Hecke algebra), cf. also \cite{sto:difference}.

\begin{lemma}[Noumi-Sahi Representation at q=1]\label{noumi-sahi-rep:lem}
The assignment
$T_j \mapsto  \check{T}_j^\prime$ ($j=0,\dots, n$), $X_j \mapsto X_j$ ($j=1,\dots, n$)
extends (uniquely) to an algebra homomorphism  $\mathbb{H}\to \mathcal{A}_X*\mathbb{C}[W_0]$. 
\end{lemma}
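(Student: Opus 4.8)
The plan is to verify directly that the assignment in the statement respects the defining relations of $\mathbb H$; this suffices, since $T_0,\dots,T_n,X_1,\dots,X_n$ generate $\mathbb H$ and so pin down any homomorphism out of it. Concretely, I will check that the operators $\check T_0^\prime,\dots,\check T_n^\prime$ (the operators of \eqref{Tj:a}, \eqref{Tj:b} with each $s_j$ replaced by $s_j^\prime$) together with the multiplication operators $X_1,\dots,X_n$, all viewed inside $\mathcal A_X*\mathbb C[W_0]$, obey the quadratic relations \eqref{quadratic-relations}, the braid relations \eqref{braid-relations} and the cross relations \eqref{cross-relations}. The conceptual point guiding everything is that neither these relations nor the operators $\check T_j^\prime$ (which involve only the \emph{linear} reflections $s_j^\prime$) refer to the central parameter $\mathrm q$; hence the computation is the $\mathrm q$-free shadow of the one behind the classical Noumi--Sahi polynomial representation \cite{nou:macdonald,sah:nonsymmetric,sto:difference}, the only difference with that reference being that here the coordinates $X_j$ and the parameters are not rescaled by $\mathrm q$. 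The adaptation to the critical level follows the pattern of \cite[\S3]{obl:double}, \cite[\S4.3]{ems-opd-sto:trigonometric}, \cite[Ch.~2.1]{geh:properties}, \cite[Ch.~4.3]{mac:affine}.

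For the quadratic relations I will write $\check T_j^\prime=(\tau_j-b_j)s_j^\prime+b_j$ with $b_j=b_j(X)$, use the smash-product rule $s_j^\prime g=g^{s_j^\prime}s_j^\prime$, and expand to obtain
\[
(\check T_j^\prime-\tau_j)(\check T_j^\prime+\tau_j^{-1})=(\tau_j-b_j)\bigl(\tau_j-\tau_j^{-1}-b_j-b_j^{s_j^\prime}\bigr)(1-s_j^\prime),
\]
which vanishes precisely because of the identity $b_j+b_j^{s_j^\prime}=\tau_j-\tau_j^{-1}$ in $\mathcal A_X$: for $1\le j\le n-1$ this is the elementary fraction identity $\tfrac{1}{1-u}+\tfrac{1}{1-u^{-1}}=1$ with $u=X_j^{-1}X_{j+1}$, while for $j=0$ and $j=n$ it follows by clearing the denominator $1-X_1^{2}$ (resp.\ $1-X_n^{-2}$) and cancelling the contributions carrying $\hat\tau_0$ (resp.\ $\hat\tau_n$). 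As a byproduct this yields $(\check T_j^\prime)^{-1}=\check T_j^\prime-(\tau_j-\tau_j^{-1})$, so invertibility comes for free. For the cross relations, commutativity of $\mathcal A_X$ gives at once the operator identity
\[
\check T_j^\prime X^\lambda-X^{s_j^\prime\lambda}\check T_j^\prime=b_j(X)\bigl(X^\lambda-X^{s_j^\prime\lambda}\bigr)\qquad(\lambda\in\mathbb Z^n),
\]
which is exactly \eqref{cross-relations-2}; evaluating the right-hand side for $\lambda=e_1$ (for $j=0$), $\lambda=e_{j+1}$ (for $1\le j\le n-1$), $\lambda=-e_n$ (for $j=n$) and $\lambda=e_k$ (for $|j-k|>1$ or $j=n=k+1$) by means of the terminating geometric series \eqref{geom-series}, and eliminating $(\check T_j^\prime)^{-1}$ through the quadratic relation, one recovers the four families of \eqref{cross-relations} one at a time.

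The braid relations split into three groups. The commutations $\check T_j^\prime\check T_k^\prime=\check T_k^\prime\check T_j^\prime$ for $|j-k|>1$ are immediate, since $s_j^\prime$ and $b_j$ only move $X_j$ and $X_{j+1}$ (with the obvious conventions at the two ends), so the operators involve disjoint variables; and the length-$3$ relations $\check T_j^\prime\check T_{j+1}^\prime\check T_j^\prime=\check T_{j+1}^\prime\check T_j^\prime\check T_{j+1}^\prime$ for $1\le j\le n-2$ are the classical $A_2$ identity for finite Demazure--Lusztig operators. What remains are the two length-$4$ relations at the ends of the diagram, $\check T_0^\prime\check T_1^\prime\check T_0^\prime\check T_1^\prime=\check T_1^\prime\check T_0^\prime\check T_1^\prime\check T_0^\prime$ and $\check T_{n-1}^\prime\check T_n^\prime\check T_{n-1}^\prime\check T_n^\prime=\check T_n^\prime\check T_{n-1}^\prime\check T_n^\prime\check T_{n-1}^\prime$, and I expect these to be the main obstacle. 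Each is an identity of operators on the two-variable field $\mathbb C(X_1,X_2)$ (resp.\ $\mathbb C(X_{n-1},X_n)$), so I would verify it by expanding both sides over the order-$8$ dihedral parabolic subgroup of $W_0$ and matching the eight $W_0$-components, using the quadratic relation already established to simplify the intermediate ones. This rank-$2$ check is the $C^\vee C$ counterpart of the one carried out in \cite[\S3]{sah:nonsymmetric} (see also \cite[\S4.3]{ems-opd-sto:trigonometric}); being $\mathrm q$-free it transfers unchanged to $\mathrm q=1$. Once all relations are in place, the homomorphism $\mathbb H\to\mathcal A_X*\mathbb C[W_0]$ exists and is unique, and restricting it to $\mathbb C[X]\subset\mathcal A_X$ — on which $\check T_j^\prime$ is well defined thanks to \eqref{geom-series} — will yield Proposition \ref{pol-rep:prp}.
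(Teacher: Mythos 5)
Your proposal is correct, but it takes a genuinely different route from the paper: the paper disposes of this lemma in one line, by citing Sahi's theorem on the polynomial (Noumi--Sahi) representation of the $C^\vee C$ double affine Hecke algebra at generic $\mathrm q$ and observing that the relations, being Laurent-polynomial identities in $\mathrm q$ after clearing denominators, specialize to the critical level $\mathrm q=1$; you instead verify the defining relations of $\mathbb H$ directly inside $\mathcal A_X*\mathbb C[W_0]$. Your computations for the quadratic relations (via $b_j+b_j^{s_j^\prime}=\tau_j-\tau_j^{-1}$, which indeed holds in all three cases $j=0$, $1\le j\le n-1$, $j=n$) and for the cross relations (via the operator identity $\check T_j^\prime X^\lambda-X^{s_j^\prime\lambda}\check T_j^\prime=b_j(X)(X^\lambda-X^{s_j^\prime\lambda})$ evaluated at the appropriate $\lambda$ and the formula $(\check T_j^\prime)^{-1}=\check T_j^\prime-(\tau_j-\tau_j^{-1})$) are sound, and your reduction of the commuting and $A_2$ braid relations is standard; note only that \eqref{geom-series} as stated involves the affine reflection $s_0$ rather than $s_0^\prime$, so for $j=0$ you should evaluate $b_0(X)(X_1-X_1^{-1})$ directly (a one-line computation) rather than quote that display. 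The one place where your argument is an outline rather than a proof is the pair of length-$4$ braid relations at the ends of the diagram; you correctly identify these as the essential content, reduce them to identities in the rank-$2$ smash product $\mathbb C(X_1,X_2)*\langle s_0^\prime,s_1\rangle$ (respectively $\mathbb C(X_{n-1},X_n)*\langle s_{n-1},s_n^\prime\rangle$) over the dihedral group of order $8$, and propose a finite component-by-component check, which is a valid way to finish but is exactly the computation that the paper outsources to Sahi. The trade-off: your route is self-contained and avoids having to justify the $\mathrm q\to 1$ specialization of a result proved for generic $\mathrm q$, at the cost of an explicit (if routine) rank-$2$ verification; the paper's route is shorter but leans on the cited theorem and on the (unstated) observation that the relations survive the specialization.
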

\begin{proof}
This follows from \cite[\text{Thm.~3.1}]{sah:nonsymmetric} in the limit $\text{q}\to 1$.
\end{proof}

We now modify the representation of Lemma \ref{noumi-sahi-rep:lem} by replacing $ \check{T}_0^\prime$ with $ \check{T}_0$ (cf. \cite[\text{Thm. 4.11}]{ems-opd-sto:trigonometric}).

\begin{proposition}[Basic Representation at q=1]\label{basic-rep:prp}
The assignment
$T_j \mapsto  \check{T}_j$ ($j=0,\dots, n$), $X_j \mapsto X_j$ ($j=1,\dots, n$)
uniquely extends to an algebra homomorphism $\mathbb{H} \to \mathcal A * W_0$.
\end{proposition}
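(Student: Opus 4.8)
The plan is to deduce the statement from the Noumi--Sahi representation of Lemma~\ref{noumi-sahi-rep:lem} by a conjugation trick, in the spirit of \cite[Thm.~4.11]{ems-opd-sto:trigonometric}. The point of departure is the observation that, upon identifying $s_0=t_{2ce_1}s_0'$ with $Y_1^{2c}s_0'\in\mathcal{A}*W_0$ (cf.\ Remark~\ref{H-relations:rem}), the modified operator $\check{T}_0$ is an inner conjugate of $\check{T}_0'$:
\begin{equation*}
\check{T}_0=Y_1^c\,\check{T}_0'\,Y_1^{-c}\qquad\text{in }\mathcal{A}*W_0 .
\end{equation*}
Indeed, $\tau_0$, $b_0(X)$ and $1$ commute with $Y_1^{\pm c}$, while $s_0'$ sends $Y_1\mapsto Y_1^{-1}$ so that $Y_1^c s_0' Y_1^{-c}=Y_1^{2c}s_0'$; substituting into $\check{T}_0'=\tau_0 s_0'+b_0(X)(1-s_0')$ returns $\tau_0 Y_1^{2c}s_0'+b_0(X)(1-Y_1^{2c}s_0')=\check{T}_0$.

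Since the operators $\check{T}_1,\dots,\check{T}_n$ and $X_1,\dots,X_n$ are left unchanged, all defining relations of $\mathbb{H}$ in \eqref{quadratic-relations}--\eqref{cross-relations} not involving $T_0$ continue to hold by Lemma~\ref{noumi-sahi-rep:lem}. Among the relations that do involve $T_0$, every one except the braid relation $T_0T_1T_0T_1=T_1T_0T_1T_0$ is obtained for free by conjugating the corresponding identity for $\check{T}_0'$ by $Y_1^c$: the quadratic relation $\check{T}_0-\check{T}_0^{-1}=\tau_0-\tau_0^{-1}$, the commutations $\check{T}_0\check{T}_k=\check{T}_k\check{T}_0$ and $\check{T}_0X_k=X_k\check{T}_0$ for $k\ge 2$, and the cross relation $\check{T}_0X_1-X_1^{-1}\check{T}_0^{-1}=\hat{\tau}_0^{-1}-\hat{\tau}_0$ all follow because the operators $\check{T}_k$ ($k\ge 2$), $X_1,\dots,X_n$ and the scalars each commute with $Y_1^{\pm c}$ (the reflections $s_k$ with $k\ge 2$ fixing the first coordinate, and $X_j$ being an $X$-indeterminate), so conjugation by $Y_1^c$ leaves every factor in these relations untouched except $\check{T}_0'$, which it replaces by $\check{T}_0$.

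The single relation not covered by this argument is the quartic braid relation $\check{T}_0\check{T}_1\check{T}_0\check{T}_1=\check{T}_1\check{T}_0\check{T}_1\check{T}_0$, and verifying it is the computational heart of the proof: here the conjugation trick breaks down because $\check{T}_1$ does not commute with $Y_1^c$ (the transposition $s_1$ interchanges $Y_1$ and $Y_2$). I would check this identity directly inside $\mathcal{A}*W_0$: using $s_0=Y_1^{2c}s_0'$, both sides expand as sums over $W_0$ with coefficients rational in $X_1,X_2,Y_1,Y_2$, and one compares them term by term -- a finite rank-two verification, involving only the generators $s_0'$, $s_1$ and the variables with indices $1,2$, that mirrors the corresponding step in \cite{sah:nonsymmetric} (or one may simply import the identity as the $\mathrm{q}\to 1$ limit of the braid relation already established there, or of \cite[Thm.~4.11]{ems-opd-sto:trigonometric}). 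Once all relations of \eqref{quadratic-relations}--\eqref{cross-relations} have been verified for $\check{T}_0,\dots,\check{T}_n$ and $X_1,\dots,X_n$, the assignment $T_j\mapsto\check{T}_j$, $X_j\mapsto X_j$ extends to an algebra homomorphism $\mathbb{H}\to\mathcal{A}*W_0$, and uniqueness is immediate since these images generate the target of any such extension and $T_0,\dots,T_n,X_1,\dots,X_n$ generate $\mathbb{H}$.
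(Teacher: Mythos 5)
Your conjugation identity $\check{T}_0=Y_1^{c}\,\check{T}_0'\,Y_1^{-c}$ is correct, and it does dispose of every relation in \eqref{relations-T0} except the quartic braid relation, exactly as you say. But that braid relation is precisely where your argument stops being a proof: you leave it as a planned ``finite rank-two verification'' (feasible, but not carried out, and it is exactly the tedious expansion over $W_0$ with rational coefficients that one wants to avoid), and the suggested shortcut of importing it as the $\text{q}\to 1$ limit of Sahi's braid relation does not work as stated. The $\text{q}\to 1$ limit of Sahi's relations is precisely Lemma \ref{noumi-sahi-rep:lem}, i.e.\ the braid relation for $\check{T}_0'$ with \emph{no} $Y_1^{2c}$ factor: at generic $\text{q}$ the affine reflection acts on the $X$-variables through a $\text{q}$-shift, whereas here $Y_1^{2c}$ is an independent element of $\mathcal{A}*W_0$ on which $s_1$ acts nontrivially, so the identity for $\check{T}_0$ is not a formal limit of anything already established and needs its own argument.

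The paper closes this gap with a one-line refinement of your own trick: instead of always conjugating by $Y_1^{c}$, conjugate by $Y^{\mu}$ for a $\mu\in\mathbb{Z}^n$ chosen per relation so that $a_0(\mu)=0$ \emph{and} $a_k(\mu)=0$, where $k$ is the index of the other generator occurring in that relation. For the braid relation take $\mu_1=\mu_2=c$, e.g.\ $Y^{\mu}=(Y_1Y_2)^{c}$: then $s_1\mu=\mu$, so $Y^{\mu}$ commutes with $\check{T}_1$, while still $Y^{-\mu}\check{T}_0Y^{\mu}=\check{T}_0'$ (since $\mu_1=c$), so conjugation turns $\check{T}_0\check{T}_1\check{T}_0\check{T}_1=\check{T}_1\check{T}_0\check{T}_1\check{T}_0$ into the corresponding identity for $\check{T}_0'$, which holds by Lemma \ref{noumi-sahi-rep:lem}. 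With this choice every relation in \eqref{relations-T0}, the braid relation included, follows uniformly from the Noumi--Sahi lemma and no direct computation is needed; your handling of the remaining relations and the uniqueness remark are fine as they stand.
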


\begin{proof}
To verify the statement, we have to check that the Demazure-Lusztig operators $\check{T}_0,\ldots ,\check{T}_n$ \eqref{Tj:a}--\eqref{Tj:b} 
satisfy the corresponding relations of the form in Eqs. \eqref{quadratic-relations}, \eqref{braid-relations} and \eqref{cross-relations}.  For all relations that do not involve
$\check{T}_0$, the asserted equalities are immediate from  Lemma \ref{noumi-sahi-rep:lem}.
So it suffices to consider only the remaining relations that \emph{do} involve $\check{T}_0$:
\begin{equation}\label{relations-T0}
\begin{split}
\check{T}_0 - \check{T}_0^{-1} &=\tau_0-\tau_0^{-1} ,\\
\check{T}_0 \check{T}_1   \check{T}_0 \check{T}_1 
     &=  \check{T}_1   \check{T}_0 \check{T}_1  \check{T}_0,\\
\check{T}_0 \check{T}_j    &=   \check{T}_j \check{T}_0 , \quad 1< j \le n,\\
\check{T}_0 X_1 -  \check{T}_0^{-1}X_1^{-1} &= \hat{\tau}_0^{-1}-\hat{\tau}_0 ,\\
\check{T}_0  X_j &= X_j \check{T}_0,\quad 1< j \le n.
\end{split}
\end{equation}
Given a fixed $k\in\{1,\dots, n\}$, let us pick a $\mu\in\mathbb{Z}^n$ such that $a_0(\mu)=a_k(\mu)=0$.
Then
\begin{equation*}
\check{T}_0^\prime = Y^{-\mu} \check{T}_0 Y^{\mu}
\quad \text{ and }\quad
 \check{T}_k = Y^{-\mu} \check{T}_k Y^{\mu} .
\end{equation*}
By conjugating a relation in Eq. \eqref{relations-T0} with $Y^{-\mu}$, the element $\check{T}_0$ gets replaced by $\check{T}_0^\prime$ and the element $\check{T}_k$ is left unchanged. The upshot is that the relations in Eq. \eqref{relations-T0} follow from those with $\check{T}_0$ replaced by $\check{T}_0^\prime$, which hold in turn by Lemma \ref{noumi-sahi-rep:lem}.
\end{proof}

Let us denote the image of $T_w$, $w\in W$ in the basic representation by $\check{T}_w$.
The following proposition reveals that the basic representation in Proposition \ref{basic-rep:prp} is faithful.

\begin{proposition}[Poincar\'e-Birkhoff-Witt Property]\label{pbw-basic:prp} The elements $X^\mu  \check{T}_w$ (or alternatively $ \check{T}_w X^\mu$),
with $\mu\in\mathbb{Z}^n$ and $w\in W$, are linearly independent in $\mathcal A*W_0$ over $\mathbb{C}$. 
\end{proposition}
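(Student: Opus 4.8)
The plan is to exhibit a triangularity of the operators $\check{T}_w$ in the basic representation with respect to the Bruhat order on $W$ and to deduce the linear independence by a descent argument. View $\mathcal{A}$ as a left module over $\mathcal{A}*W_0$, with $\mathcal{A}$ acting by multiplication and each $u\in W_0$ acting by the field automorphism $f\mapsto f^{u}$; since distinct automorphisms of the field $\mathcal{A}$ are $\mathcal{A}$-linearly independent (Dedekind), the homomorphism $\mathcal{A}*W_0\to\mathrm{End}_{\mathbb{C}}(\mathcal{A})$ is faithful, so it suffices to prove that the $X^\mu\check{T}_w$ act as $\mathbb{C}$-linearly independent operators on $\mathcal{A}$.

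First I would record the action on $Y$-monomials. Put $c_j(X):=\tau_j-b_j(X)$ with $b_j(X)$ as in \eqref{Tj:b}, so that $\check{T}_j=c_j(X)\,\sigma_j+b_j(X)$ in $\mathcal{A}*W_0$, where $\sigma_j$ is the image of $s_j$ (equal to $s_j\in W_0$ for $1\le j\le n$ and to $Y_1^{2c}s_0^\prime$ for $j=0$, recalling $s_0=t_{2ce_1}s_0^\prime$). A short computation then gives, for every $j$ and every $\nu\in\mathbb{Z}^n$,
\[
\check{T}_j\cdot Y^\nu=c_j(X)\,Y^{s_j\nu}+b_j(X)\,Y^\nu ,
\]
with $s_j\nu$ the \emph{affine} action on $\mathbb{Z}^n$ (the factor $Y_1^{2c}$ coming from $\sigma_0$ being absorbed into the $Y$-exponent). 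Inspecting the three cases of \eqref{Tj:b} shows that each $c_j(X)$ is a \emph{nonzero} element of $\mathcal{A}_X$. Iterating along a reduced word $w=s_{i_1}\cdots s_{i_\ell}$, any choice of the term $b_{i_k}(X)$ in place of $c_{i_k}(X)\sigma_{i_k}$ yields a $Y$-exponent $v\nu$ with $v$ a subexpression product of $s_{i_1}\cdots s_{i_\ell}$, hence $v\le w$ in the Bruhat order; and a \emph{proper} subexpression has length $<\ell(w)$, hence is $\neq w$. Therefore
\[
\check{T}_w\cdot Y^\nu=g_w(X)\,Y^{w\nu}+\sum_{v\in W,\;v<w}h_{w,v}(X)\,Y^{v\nu},
\]
where $g_w(X)$ is a product of the nonzero $c_j$'s twisted by elements of $W_0$ acting as automorphisms of $\mathcal{A}_X$, hence $g_w(X)$ is a nonzero element of $\mathcal{A}_X$.

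Now suppose $\sum_{\mu,w}c_{\mu,w}\,X^\mu\check{T}_w=0$ is a nontrivial finite relation, and let $w_\ast$ be maximal in the Bruhat order among the $w$ for which $c_{\mu,w}\neq0$ for some $\mu$. The set $\mathcal{V}$ of all $v\in W$ with $v\le w$ for some occurring $w$ is finite; choose $\nu\in\mathbb{Z}^n$ lying off the finitely many affine hyperplanes $\{x:vx=v'x\}$ with $v\neq v'$ in $\mathcal{V}$, which is possible because $\mathbb{Z}^n$ is Zariski dense in $\mathbb{R}^n$. Then the vectors $\{v\nu:v\in\mathcal{V}\}$ are pairwise distinct, so the monomials $Y^{v\nu}$ are $\mathcal{A}_X$-linearly independent in $\mathcal{A}$. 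Applying the relation to $Y^\nu$ and collecting the coefficient of $Y^{w_\ast\nu}$: among the occurring $w$ only those with $w_\ast\le w$ contribute, i.e.\ only $w=w_\ast$ by maximality, and inside $\check{T}_{w_\ast}\cdot Y^\nu$ only the full subexpression produces $Y^{w_\ast\nu}$; hence $\bigl(\sum_\mu c_{\mu,w_\ast}X^\mu\bigr)\,g_{w_\ast}(X)=0$ in $\mathcal{A}_X$. Since $g_{w_\ast}(X)\neq0$ and distinct monomials $X^\mu$ are $\mathbb{C}$-linearly independent, $c_{\mu,w_\ast}=0$ for all $\mu$, a contradiction. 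This settles the assertion for $X^\mu\check{T}_w$; for $\check{T}_wX^\mu$ one repeats the computation with $\check{T}_w\cdot(X^\mu Y^\nu)$, additionally tracking the $X$-exponent, which the leading term carries to $X^{w^\prime\mu}$ (pairwise distinct for distinct $\mu$).

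The main obstacle is the second step: isolating from $\check{T}_w\cdot Y^\nu$ a single dominant term $g_w(X)\,Y^{w\nu}$ indexed by $w$ itself. This hinges on combining two points carefully — the Coxeter-theoretic fact that proper subexpressions of a reduced word are strictly shorter (so lower subexpressions contribute only $Y^{v\nu}$ with $v<w$), together with a genericity choice of $\nu$ that separates the points of the relevant $W$-orbits. Once this triangular structure is in place, the linear independence follows by Bruhat descent, and the only other input — that $c_j(X)=\tau_j-b_j(X)$ is a nonzero element of $\mathcal{A}_X$ — is a one-line check in each of the three cases of \eqref{Tj:b}.
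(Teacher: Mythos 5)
Your proof is correct and follows essentially the same route as the paper's: expand $\check{T}_w$ along a reduced word into Bruhat-lower terms with nonvanishing leading coefficient a (twisted) product of the $c_j$'s, and then kill a putative relation by extracting the term attached to a Bruhat-maximal $w_\ast$. The only cosmetic difference is that you extract this coefficient by letting the relation act on a generically chosen $Y^\nu$ (hence your Dedekind and Zariski-density detours, the former not actually needed), whereas the paper reads it off directly from the $\mathcal{A}_X$-linear independence of the elements $Y^\lambda u$, $\lambda\in\mathbb{Z}^n$, $u\in W_0$, inside $\mathcal{A}*W_0$.
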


\begin{proof}
We mimic Macdonald's proof  in \cite[(4.3.11)]{mac:affine}  of a corresponding statement for $\text{q}\neq 1$. Only the case of the elements  $X^\mu  \check{T}_w$  will be considered here, as the proof for the elements $  \check{T}_w X^\mu$ is similar.
Since $\check{T}_j= b_j(X)+c_j(X) s_j$ with
\begin{equation*}
c_j(X) 
 =   \tau_j - b_j(X)  
=    \tau_j^{-1} \frac{  (1-\tau_j \hat{\tau}_j X^{-\alpha_j}) (1+\tau_j\hat{\tau}_j^{-1} X^{-\alpha_j})   }{1-X^{-2\alpha_j}}  
\end{equation*}
($j=0,\ldots ,n$), we have that for any reduced expression  $w=s_{j_1}\cdots s_{j_\ell}\in W$:
\begin{equation}\label{triangular-basic}
\check{T}_w 
=
\sum_{v\le w} f_{vw}(X) v, 
\end{equation}
where  $\le$ refers to the Bruhat partial order on $W$ (cf. Subsection \ref{inv:prf}), and the coefficients $f_{vw}(X)$ belong to $ \mathcal{A}_X$ with
\begin{equation*}
f_{ww}(X) = c_{j_1}(X)\cdots c_{j_\ell}(X)\neq 0. 
\end{equation*}

The linear dependence of the elements $X^\mu  \check{T}_w$  would imply the existence of a nonempty finite subset $V\subset W$ such that
\begin{equation*}
\sum_{w\in V} g_w(X) \check{T}_w =0 
\end{equation*}
for certain coefficients $g_w(X)\in \mathbb{C}[X]\setminus \{ 0\}$.  Combined with Eq. \eqref{triangular-basic} this implies that 
\begin{equation}\label{dependent-basic}
\sum_{\substack{w\in V, v\in W\\ v\leq w}} g_w(X) f_{vw}(X) v=0.
\end{equation}
Since the elements $Y^\lambda w$ with $\lambda\in \mathbb{Z}^n$  and  $w\in W_0$ are (by definition) linearly independent in the algebra $\mathcal A *W_0$ when considered as a vector space over the 
field $\mathcal A_X$, it follows from Eq. \eqref{dependent-basic} that for any $v\in W$ that is dominated in the Bruhat order by  some element(s) of $V$:
\begin{equation*}
h_v(X):=\sum_{\substack{w\in V\\ w\ge v}} g_w(X) f_{vw}(X)=0.
\end{equation*}
Upon picking $v$ to be a maximal element of $V$ it is seen that in this situation $h_v(X)=g_v(X)f_{vv}(X)=0$, which contradicts the assumption that $g_v(X)\neq 0$ (because $f_{vv}(X)\neq 0$).
\end{proof}

Proposition \ref{pbw:prp} is immediate from Propositions \ref{basic-rep:prp} and \ref{pbw-basic:prp}.
Proposition \ref{pol-rep:prp} follows from Proposition \ref{basic-rep:prp} upon identifying the $X$ and $Y$ indeterminates:
$Y_j=X_j$, $j=1,\ldots ,n$. With this identification the basic representation passes over into the polynomial representation (which is no longer faithful).

\section{Affine intertwining relations}\label{appB}
In this appendix it is shown that for
any 
\begin{equation*}
f\in\mathcal{C}(\mathbb Z^n),\
\lambda\in \mathbb Z^n\ \text{and}\ \nu\in\{ \pm e_j\mid j=1,\ldots ,n\} 
\end{equation*}
one has that
\begin{equation}\label{intertwining-property}
\tau^{-1}_{w_\lambda}  (I_{w_\lambda} g)(w_\lambda(\lambda+\nu)) =
\tau_{w_{w_\lambda (\lambda+\nu)}}^{2}  f(\lambda+\nu) + d_{\lambda_+,w'_\lambda\nu} f(\lambda),
\end{equation}
where $g:=\mathcal{J}^{-1}f$ and $d_{\lambda,\nu}$ is given by Eq. \eqref{Lb}.
This affine intertwining relation lies at the basis of the proof for Proposition \ref{laplacian:prp}.

To infer Eq. \eqref{intertwining-property}, let us first provide a reduced expression for the group element $w_{\lambda +\nu}$ with  $\lambda\in \Lambda_{n,m}$ and $\nu\in\{ \pm e_j\mid j=1,\ldots ,n\}$ such that
$\lambda+\nu\not\in\Lambda_{n,m}$. 

\begin{lemma}[Reduced Expressions for $w_{\lambda\pm e_j}$]\label{l:wlambdanu} For any $\lambda\in \Lambda_{n,m}$ \eqref{dominant} and $j\in\{ 1,\ldots ,n\}$, one has that
\begin{subequations}
\begin{equation}
w_{\lambda +e_j} =
\begin{cases}
s_{\text{m}_m(\lambda)-1} \cdots s_1 s_0 s_1 \cdots s_{j-1} &\text{if}\ \lambda_j =m, \\
s_{\text{m}_m(\lambda)+\dots +\text{m}_{\lambda_j+1}(\lambda)+1} \cdots  s_ {j-2} s_{j-1}  &\text{if}\ \lambda_{j}=\lambda_{j-1} < m,
\end{cases}
\end{equation}
and
\begin{equation}
w_{\lambda -e_j} =
\begin{cases}
 s_{n-\text{m}_0(\lambda)+1} \cdots s_{n-1} s_n s_{n-1} \cdots s_j&\text{if}\ \lambda_j =0 ,\\
s_{ \text{m}_m(\lambda)+\cdots + \text{m}_{\lambda_j}(\lambda)} \cdots  s_{j+1} s_{j}  &\text{if}\ \lambda_{j}=\lambda_{j+1} > 0,
\end{cases}
\end{equation}
\end{subequations}
with all stated expressions being \emph{reduced}.
\end{lemma}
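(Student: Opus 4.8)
\textit{Plan.} I would prove all four identities at once, by induction on the number of letters in the claimed expressions, peeling off the last (rightmost) simple reflection at each step. The mechanism that makes this work is that the peeling keeps us inside the family $\{\lambda\pm e_j:\lambda\in\Lambda_{n,m}\}$ appearing in the lemma, so the recursion closes up once one records the base cases, namely the vectors $\lambda\pm e_j$ that already lie in $\Lambda_{n,m}$, for which $w_{\lambda\pm e_j}=1$.

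\textit{Upper bound and the recursive step.} First I would check, straight from the explicit action \eqref{W-action}, that each claimed word $w$ does map $\lambda\pm e_j$ into $\Lambda_{n,m}$: it is the obvious ``bubble-sort'' gallery. For instance if $\lambda_j=m$ the reflections $s_{j-1},\dots,s_1$ slide the value $\lambda_j+1=m+1$ (currently in slot $j$) up to slot $1$, then $s_0$ turns it into $m-1$, and finally $s_1,\dots,s_{\text{m}_m(\lambda)-1}$ push this $m-1$ back to the right past the remaining copies of $m$; the case $\lambda_j=0$ is the mirror image with $s_n$ and the wall $x_n=0$ playing the role of $s_0$ and $x_1=m$, while the two ``interior'' cases $\lambda_j=\lambda_{j\mp 1}$ are pure type-$A$ bubble moves. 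This shows $w$ belongs to $S:=\{v\in W : v(\lambda\pm e_j)\in\Lambda_{n,m}\}$, hence $\ell(w_{\lambda\pm e_j})\le$ (number of letters of $w$). Next I would read off the rightmost letter $s_i$ of $w$ and observe that $s_i(\lambda\pm e_j)$ is again one of the vectors covered by the lemma, with claimed word equal to $w$ with its last letter erased: e.g.\ $s_{j-1}(\lambda+e_j)=\lambda+e_{j-1}$ when $\lambda_j=m$ and $j\ge 2$, whereas $s_0(\lambda+e_1)=\lambda-e_1$ when $\lambda_j=m$ and $j=1$, and the $\lambda_j=0$ chains behave symmetrically, each chain terminating at a dominant vector.

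\textit{Crux.} The heart of the matter is the recursion $w_{\lambda\pm e_j}=w_{s_i(\lambda\pm e_j)}\,s_i$ together with $\ell(w_{\lambda\pm e_j})=\ell(w_{s_i(\lambda\pm e_j)})+1$. I would deduce this from the standard description \cite{hum:reflection} of the length of an element of an affine Coxeter group as the number of hyperplanes of the Coxeter arrangement of $W$ that separate the interior $\Lambda_{n,m}^{\circ}$ from its image: concretely, $\ell(w s_i)<\ell(w)$ precisely when the wall $H_i$ of $\Lambda_{n,m}$ fixed by $s_i$ separates $w^{-1}\Lambda_{n,m}^{\circ}$ from $\Lambda_{n,m}^{\circ}$. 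Now $\lambda\pm e_j$ lies strictly on the side of $H_i$ opposite to $\Lambda_{n,m}^{\circ}$ (e.g.\ on $\{x_{j-1}<x_j\}$ when $H_i=\{x_{j-1}=x_j\}$, or on $\{x_1>m\}$ when $H_i=\{x_1=m\}$), and $\lambda\pm e_j$ lies in the closed alcove $w_{\lambda\pm e_j}^{-1}\Lambda_{n,m}$; since a closed alcove lies in a single closed half-space of $H_i$, that alcove sits on the non-$\Lambda_{n,m}^{\circ}$ side, so $H_i$ separates $w_{\lambda\pm e_j}^{-1}\Lambda_{n,m}^{\circ}$ from $\Lambda_{n,m}^{\circ}$ and hence $\ell(w_{\lambda\pm e_j}s_i)<\ell(w_{\lambda\pm e_j})$. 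Combined with $w_{s_i(\lambda\pm e_j)}s_i\in S$, which gives $\ell(w_{\lambda\pm e_j})\le\ell(w_{s_i(\lambda\pm e_j)})+1$, this pins down the length identity; and since $S$ is a right coset $W_J v_0$ of the stabilizer of the dominant representative (a standard parabolic $W_J$), it has a unique element of minimal length, whence $w_{\lambda\pm e_j}=w_{s_i(\lambda\pm e_j)}\,s_i$. Feeding in the inductive reduced expression for $w_{s_i(\lambda\pm e_j)}$ and appending $s_i$ yields the claim; in particular the stated expressions are reduced.

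\textit{Where the work is.} I expect the main obstacle to be bookkeeping rather than anything conceptual: one must verify that every peeling step genuinely returns one of the four listed cases, with the multiplicities $\text{m}_l(\lambda)$ and the indices of the claimed words shifting correctly, and—at the two steps touching the walls $x_1=m$ and $x_n=0$—confirm that $s_0$ (resp.\ $s_n$) removes exactly one separating hyperplane, the affine wall itself, rather than recrossing a linear one. A more computational alternative that avoids the induction is to count directly the hyperplanes of the arrangement strictly separating $\lambda\pm e_j$ from $\Lambda_{n,m}^{\circ}$; this count is a lower bound for $\ell(v)$ for every $v\in S$, so checking that it equals the number of letters of $w$ again shows $w$ is reduced and equal to $w_{\lambda\pm e_j}$.
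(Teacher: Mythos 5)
Your proposal is correct and follows essentially the same route as the paper's proof: verify that each stated word maps $\lambda\pm e_j$ into $\Lambda_{n,m}$, and then induct by peeling off the rightmost letter using the recursion $w_\mu=w_{s_i\mu}s_i$ with $\ell(w_\mu)=\ell(w_{s_i\mu})+1$ whenever $a_i(\mu)<0$. The only difference is that the paper simply invokes this recursion as a standard fact about the fundamental alcove, whereas you additionally supply its (correct) proof via the separating-hyperplane description of length and the unique minimal-length representative of the coset $W_{\mu_+}w_\mu$.
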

\begin{proof}
It is immediate that the stated expressions map the corresponding vector $\lambda+\nu$ into $\Lambda_{n,m}$. Moreover, since
$w_\mu = w_{s_j \mu} s_j$ with $\ell(w_\mu) = \ell(w_{s_j \mu})+1$ for any $\mu\in\mathbb{Z}^n$  and $j\in\{ 0,\ldots, n\}$ such that
$a_j(\mu)<0$,  it follows by induction on the length of $w_{\lambda+\nu}$ that the expressions in question are indeed reduced.
\end{proof}

The above reduced expressions reveal in particular that for any $\lambda\in \Lambda_{n,m}$ and $\nu\in\{ \pm e_j\mid j=1,\ldots ,n\}$ one has that
$w_{\lambda+\nu}\in W_{\lambda}$, and therefore $(\lambda+\nu)_+\neq \lambda $.

\begin{lemma}[Affine Intertwining Relations]\label{Iqm-action:lem}

For any $f\in\mathcal{C}(\mathbb{Z}^n)$, $\lambda\in \Lambda_{n,m}$ and
$\nu\in\{ \pm e_j\mid j=1,\ldots ,n\}$:
 \begin{equation*}
\tau_{w_{\lambda + \nu }} (I_{w_{\lambda + \nu}} f)((\lambda +\nu )_+)=
f(\lambda +\nu ) - d_{\lambda,\nu }  f(\lambda) .
\end{equation*}
\end{lemma}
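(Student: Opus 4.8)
The plan is to substitute the reduced expressions from Lemma~\ref{l:wlambdanu} into $I_{w_{\lambda+\nu}}=I_{j_1}\cdots I_{j_\ell}$ and evaluate the resulting composition at the dominant point $(\lambda+\nu)_+=w_{\lambda+\nu}(\lambda+\nu)$, working from the outermost factor inward while keeping track of the few lattice points at which $f$ gets sampled. One first disposes of the case $\lambda+\nu\in\Lambda_{n,m}$: here $w_{\lambda+\nu}=1$, $(\lambda+\nu)_+=\lambda+\nu$, and $d_{\lambda,\nu}=0$ (since $\lambda+e_j\in\Lambda_{n,m}$ forces $\lambda_j<m$ while $\lambda-e_j\in\Lambda_{n,m}$ forces $\lambda_j>0$), so both sides reduce to $f(\lambda+\nu)$. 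For $\lambda+\nu\notin\Lambda_{n,m}$ one invokes the four reduced words of Lemma~\ref{l:wlambdanu}, recalling that $\tau_{w_{\lambda+\nu}}=\tau_{j_1}\cdots\tau_{j_\ell}$.

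The computational engine is the following collapse of $I_j$ \eqref{Ij:a}--\eqref{Ij:c} when the argument lies one unit off the wall on its positive side: if $a_j(\mu)=1$ then $s_j\mu=\mu-\alpha_j$ and $(I_jg)(\mu)=\tau_jg(s_j\mu)-u_j(1)g(\mu-\alpha_j)=(\tau_j-\hat\tau_j+\hat\tau_j^{-1})g(s_j\mu)$, which equals $\tau^{-1}g(s_j\mu)$ for $1\le j\le n-1$. In the two \emph{bulk} cases of Lemma~\ref{l:wlambdanu} ($\lambda_{j-1}=\lambda_j<m$ with $\nu=e_j$, and $0<\lambda_j=\lambda_{j+1}$ with $\nu=-e_j$) every letter of the reduced word lies in $\{s_1,\dots,s_{n-1}\}$; applying the operators one at a time one checks that at each stage $a_{j_k}=1$ (the shifted coordinate sits one unit off the relevant permutation wall), so each factor contributes $\tau^{-1}$ while carrying the shifted coordinate one slot toward its dominant position. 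The composition telescopes to $(I_{w_{\lambda+\nu}}f)((\lambda+\nu)_+)=\tau^{-\ell}f(\lambda+\nu)=\tau_{w_{\lambda+\nu}}^{-1}f(\lambda+\nu)$; as $d_{\lambda,\nu}=0$ here, the identity follows.

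In the \emph{boundary} case $\nu=e_j$ with $\lambda_j=m$ (the case $\nu=-e_j$, $\lambda_j=0$ being the mirror image) Lemma~\ref{l:wlambdanu} gives $w_{\lambda+e_j}=s_{p-1}\cdots s_1\,s_0\,s_1\cdots s_{j-1}$ with $p=\text{m}_m(\lambda)$ and $(\lambda+e_j)_+=\lambda-e_p$. Peeling off the outer block $I_{p-1}\cdots I_1$ by the collapse above (each letter contributing $\tau^{-1}$ and carrying the evaluation point from $\lambda-e_p$ through $\lambda-e_{p-1},\dots$ to $\lambda-e_1$) reduces the value to $\tau^{-(p-1)}(I_0I_1\cdots I_{j-1}f)(\lambda-e_1)$. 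At $\mu=\lambda-e_1$ one has $a_0(\mu)=2$, $s_0\mu=\mu-2\alpha_0=\lambda+e_1$, $\mu-\alpha_0=\lambda$, $u_0(1)=\hat\tau_0-\hat\tau_0^{-1}$, $u_0(2)=\tau_0-\tau_0^{-1}$, so $(I_0g)(\mu)=\tau_0g(\lambda+e_1)-u_0(1)g(\lambda)-u_0(2)g(\lambda+e_1)=\tau_0^{-1}g(\lambda+e_1)-(\hat\tau_0-\hat\tau_0^{-1})g(\lambda)$. Finally the inner block $I_1\cdots I_{j-1}$ acts on $f$: at the point $\lambda+e_1$ it telescopes by the collapse to $\tau^{-(j-1)}f(\lambda+e_j)$, whereas at the dominant vertex $\lambda$ (fixed by $s_1,\dots,s_{j-1}$ because $\lambda_1=\cdots=\lambda_j=m$, so $a_{j_k}(\lambda)=0$) each letter acts as multiplication by $\tau$, giving $\tau^{j-1}f(\lambda)$. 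Assembling the scalars and multiplying through by $\tau_{w_{\lambda+e_j}}=\tau_0\tau^{p+j-2}$ yields $\tau_{w_{\lambda+e_j}}(I_{w_{\lambda+e_j}}f)((\lambda+e_j)_+)=f(\lambda+e_j)-\tau_0\tau^{2(j-1)}(\hat\tau_0-\hat\tau_0^{-1})f(\lambda)$, which is exactly $f(\lambda+\nu)-d_{\lambda,\nu}f(\lambda)$ by \eqref{Lb}; the mirror computation, with $s_n$, $\alpha_n=e_n$, $u_n(1)=\hat\tau_n-\hat\tau_n^{-1}$, $u_n(2)=\tau_n-\tau_n^{-1}$ and $(\lambda-e_j)_+=\lambda+e_r$ ($r$ the least index with $\lambda_r=0$), produces $d_{\lambda,-e_j}=\tau_n\tau^{2(n-j)}(\hat\tau_n-\hat\tau_n^{-1})$.

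The main obstacle — indeed the only part requiring care — is the combinatorial bookkeeping of the successive evaluation points: one must confirm that along the reduced word the argument is always exactly one unit off the pertinent wall (so the bulk factors collapse and the single $s_0$ or $s_n$ factor yields a two-term integral), and that the extra summand produced by $s_0$ (resp. $s_n$) lands precisely at the dominant vertex $\lambda$, where the remaining reflections act as scalars. All of this is encoded in the reduced expressions of Lemma~\ref{l:wlambdanu}; once the geometry of the path is fixed, the remaining steps are routine algebra in $\tau_0,\hat\tau_0,\tau_n,\hat\tau_n,\tau$.
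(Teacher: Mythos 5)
Your computation is correct, and it verifies the lemma by a route that is organized differently from the paper's: you expand $I_{w_{\lambda+\nu}}$ along the full reduced word of Lemma~\ref{l:wlambdanu} from the outermost factor inward, tracking the path of evaluation points from $(\lambda+\nu)_+$ back to $\lambda+\nu$, with the single affine/boundary letter ($s_0$ or $s_n$, where $a_j=2$) creating the two-term split whose extra summand lands at the $W_0$-fixed-direction point $\lambda$ and is then only rescaled by the remaining letters. The paper instead runs an induction on $\ell(w_{\lambda+\nu})$, peeling the \emph{rightmost} letter: writing $w_{\lambda+\nu}=w_{\lambda+s_j'\nu}s_j$, applying the induction hypothesis to $I_jf$ with $\nu$ replaced by $s_j'\nu$, and then doing a two-case local analysis ($a_j(\lambda+\nu)=-1$ for $1\le j\le n-1$, $a_j(\lambda+\nu)=-2$ for $j\in\{0,n\}$) together with the rules $\tau_j^2 d_{\lambda,s_j'\nu}=d_{\lambda,\nu}$ resp. $\tau_j(\hat\tau_j-\hat\tau_j^{-1})=d_{\lambda,\nu}$, $d_{\lambda,s_j'\nu}=0$. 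Both arguments rest on exactly the same two ingredients (the reduced expressions of Lemma~\ref{l:wlambdanu} and the explicit action of $I_j$ at $a_j\in\{0,1,2\}$); the paper's induction spares one the explicit bookkeeping of intermediate evaluation points, while your direct telescoping makes the mechanism (where the $d_{\lambda,\nu}$-term is created and why it is weighted by $\tau^{2(j-1)}$ resp.\ $\tau^{2(n-j)}$) completely transparent, and I have checked that the path bookkeeping you invoke ($a=1$ at every bulk letter, $a=2$ at the single boundary letter, $a=0$ along the branch through $\lambda$) is indeed correct. One small caveat: in the bulk case $\nu=-e_j$, $\lambda_j=\lambda_{j+1}>0$ your telescoping uses the word $s_{Q-1}\cdots s_{j+1}s_j$ with $Q=\mathrm{m}_m(\lambda)+\cdots+\mathrm{m}_{\lambda_j}(\lambda)$, whereas a literal reading of the displayed formula in Lemma~\ref{l:wlambdanu} starts the word at $s_{Q}$; the former is the correct shortest element (the displayed index appears to be off by one), and it is the one for which your claims ``all letters lie in $\{s_1,\dots,s_{n-1}\}$'' and ``$a=1$ at each stage'' hold, so this affects the citation, not the validity of your argument.
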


\begin{proof}
 The proof  of the lemma is by induction on $\ell (w_{\lambda+\nu})$, starting from the trivial situation that
$\lambda+\nu\in \Lambda_{n,m}$.
When $\ell (w_{\lambda+\nu})>0$, we employ the reduced expression of Lemma  \ref{l:wlambdanu} to write
$w_{\lambda+\nu}=w_{\lambda+s_j^\prime\nu}s_j$ with $\ell (w_{\lambda+\nu})= \ell (w_{\lambda+s_j^\prime\nu})+1$.
Invoking of the induction hypothesis then yields that
\begin{align*}
\tau_{w_{\lambda+\nu}}(I_{w_{\lambda+\nu}} f)((\lambda+\nu)_+)&=
\tau_j \tau_{w_{\lambda+s'_j\nu}} (I_{w_{\lambda+s'_j\nu}} I_jf)((\lambda+s'_j\nu)_+)  \\  &=
\tau_j(I_j f) (\lambda+s'_j\nu) - \tau_j d_{\lambda,s'_j\nu} (I_jf)(\lambda)
\end{align*}
(where it was used that $s_j(\lambda+\nu)=\lambda+s_j^\prime\nu$ so
$(\lambda+s'_j\nu)_+=(\lambda+\nu)_+$). We are now in either of the following two situations:
\begin{itemize}
\item[(i)]  $1\le j \le n-1$ and $a_j(\lambda+\nu)=-1$,
\item[(ii)]  $j =0$ or $j=n$ and $a_j(\lambda+\nu)=-2$.
\end{itemize}
Since for any $f\in \mathcal{C}(\mathbb{Z}^n)$, $\mu\in \mathbb{Z}^n$ and $j\in\{ 0,\ldots, n\}$ such that $0\leq a_j(\mu)\leq 2$:
\begin{equation*}\label{Ijact}
\tau_j (I_jf)(\mu )
 =
 \begin{cases}
 \tau_j^2 f(\mu )&\text{if}\ a_j(\mu)=0\\
f(\mu-\alpha_j) &\text{if}\ a_j(\mu) =1 \\
f(\mu-2\alpha_j) -\tau_j(\hat{\tau}_j - \hat{\tau}_j^{-1})f(\mu-\alpha_j) &\text{if}\ a_j(\mu) =2
\end{cases}
\end{equation*}
(by Eqs. \eqref{Ij:a}--\eqref{Ij:c}), it is seen that
$\tau_j(I_j f) (\lambda+s'_j\nu) =f(\lambda +\nu)$ and $(I_jf)(\lambda )=\tau_j f(\lambda )$ in Case (i), while
 $\tau_j(I_j f) (\lambda+s'_j\nu) =f(\lambda+\nu)-\tau_j(\hat{\tau}_j - \hat{\tau}_j^{-1})f(\lambda )$ in Case (ii).
The lemma now follows because in Case (i) $\tau_j^2 d_{\lambda ,s_j^\prime\nu}=d_{\lambda ,\nu}$,  while
in Case (ii) $\tau_j(\hat{\tau}_j - \hat{\tau}_j^{-1})=d_{\lambda ,\nu}$ and
$d_{\lambda,s'_j\nu}=0$.
\end{proof}

We are now in the position to verify Eq. \eqref{intertwining-property}.
Since $(\lambda+\nu)_+=w_{\lambda+\nu}(\lambda+\nu)=w_{w_\lambda (\lambda+\nu)}w_\lambda (\lambda+\nu)$, one has that
\begin{align*}
 f(\lambda+\nu)=(\mathcal{J}g) (\lambda+\nu) &= \tau^{-1}_{w_{\lambda+\nu}} (I_{w_{\lambda+\nu}} g) ((\lambda+\nu)_+)\\
&=
\tau_{w_{w_\lambda (\lambda+\nu)}w_\lambda}^{-1}(I_{w_{w_\lambda (\lambda+\nu)}w_\lambda}g)((\lambda+\nu)_+)  .
\end{align*}
Moreover, because $w_\lambda (\lambda+\nu)=\lambda_+ + w_\lambda^\prime\nu$ it follows that
$w_{w_\lambda (\lambda+\nu)}\in W_{\lambda_+}$ (in view of the observation just after Lemma \ref{l:wlambdanu}), whence $\ell (w_{w_\lambda (\lambda+\nu)}w_\lambda)=\ell (w_{w_\lambda (\lambda+\nu)})+\ell (w_\lambda)$.
We may thus rewrite the above equality in a form equivalent to Eq. \eqref{intertwining-property}:
\begin{align*}
f(\lambda+\nu)&= \tau_{w_{w_\lambda (\lambda+\nu)}}^{-1} \tau_{w_\lambda} ^{-1} (I_{w_{w_\lambda (\lambda+\nu)}} I_{w_\lambda}g )((\lambda+\nu)_+)
  \\
 & =\tau_{w_{w_\lambda (\lambda+\nu)}}^{-2}  \tau_{w_\lambda} ^{-1}  \left(
(I_{w_\lambda}g)(w_\lambda (\lambda +\nu) )
 -  d_{\lambda_+,w'_\lambda \nu} 
(I_{w_\lambda} g)(\lambda_+)
\right)  \\
&= \tau_{w_{w_\lambda (\lambda+\nu)}}^{-2}\left( 
\tau_{w_\lambda} ^{-1}(I_{w_\lambda}g)(w_\lambda (\lambda +\nu) )-   d_{\lambda_+,w'_\lambda \nu}f (\lambda) \right) ,
\end{align*}
where we used Lemma \ref{Iqm-action:lem} with $f$, $\lambda$ and $\nu$ being replaced by
$I_{w_\lambda}g$, $\lambda_+$ and $w_\lambda^\prime\nu$, respectively.

\section{Hyperoctahedral Poincar\'e series with distinct parameters}\label{appC}
In this appendix Macdonald's product formula for the generalized Poincar\'e series with distinct parameters from \cite{mac:poincare} is recalled for the special instance of a stabilizer subgroup of the affine hyperoctahedral group.
For this purpose the rank $n$ of our (affine) hyperoctahedral group is made explicit here by attaching it as a superscript $(n)$.

The two-parameter generalized Poincar\'e series  $W_0^{(n)}(\tau^2,\tau_n^2)$ of the (finite) hyperoctahedral group $W_0^{(n)}\subset W^{(n)}$ 
is defined as \cite{mac:poincare}:
\begin{equation}
W_0^{(n)}(\tau^2,\tau_n^2) := \sum_{w\in W_0^{(n)}}\tau_w^2.
\end{equation}
Macdonald's celebrated product formula states that
\begin{subequations}
\begin{align}\label{poincare-W0}
W_{0}^{(n)}(\tau^2,\tau_n^2) = S_n(\tau^2)(-\tau_n^2;\tau^2)_n ,
\end{align}
where  $S_n(\tau^2)$ denotes the Poincar\'e series of the symmetric group $S_n\subset W_0^{(n)}$  generated by $s_1,\ldots ,s_{n-1}$ (which acts in the representation of Eq.  \eqref{W-action} by permuting the coordinates of $x=(x_1,\ldots ,x_n)$):
\begin{equation}\label{poincare-Sn}
S_n(\tau^2):=\sum_{w\in S_n}\tau_w^2=\sum_{w\in S_n} \tau^{2\ell (w)}
= [n]_{\tau^2} [n-1]_{\tau^2}\cdots [1]_{\tau^2} =[n]_{\tau^2}!.
\end{equation}
\end{subequations}

For any $\lambda\in\Lambda_{n,m}$, the stabilizer subgroup  $W_{\lambda}^{(n,m)}:=\{ w\in W^{(n)} \mid w\lambda =\lambda \}$ decomposes as a direct product of finite hyperoctahedral groups and symmetric groups:
\begin{equation}\label{direct-product}
W_{\lambda}^{(n,m)}
\cong
W_{0}^{(\text{m}_0(\lambda))} \times S_{\text{m}_1(\lambda)} \times \cdots \times S_{\text{m}_{m-1}(\lambda)}\times W_{0}^{(\text{m}_m(\lambda))}  .
\end{equation}
More specifically, these factors arise as the subgroups of $W_{\lambda}^{(n,m)}$ generated by the following simple reflections
\begin{align*}
W_{0}^{(\text{m}_0(\lambda))} &\cong \langle s_{\text{m}_1(\lambda)+\cdots +m_m(\lambda) +1},\ldots ,s_n   \rangle ,\\
S_{\text{m}_l (\lambda)} &\cong \langle s_{\text{m}_{l+1}(\lambda)+\cdots +\text{m}_{m}(\lambda)+1}  , \ldots , s_{\text{m}_l (\lambda)+\cdots +\text{m}_{m}(\lambda)-1} \rangle \quad (l=1,\ldots, m-1), \\
W_{0}^{(\text{m}_m(\lambda))}& \cong \langle s_0,\ldots, s_{\text{m}_{m}(\lambda)-1} \rangle .
\end{align*}
It thus follows that
\begin{align}\label{poincare-stab}
W_{\lambda}^{(n,m)}(\tau^2,\tau_0^2,\tau_n^2) &:=
\sum_{w\in W_\lambda^{(n,m)}} \tau_w^2  \\
&= W_{0}^{(\text{m}_0(\lambda))} (\tau^2,\tau_n^2) S_{\text{m}_1(\lambda)}(\tau^2) \cdots  S_{\text{m}_{m-1}(\lambda)}(\tau^2) W_{0}^{(\text{m}_m(\lambda))} (\tau^2,\tau_0^2) \nonumber \\
&=  
(-\tau_n^2;\tau^2)_{\text{m}_0(\lambda)} (-\tau_0^2;\tau^2)_{\text{m}_m(\lambda)}
\prod_{0\leq l\leq m}  [\text{m}_l(\lambda)]_{\tau^2}! . \nonumber
\end{align}

\bibliographystyle{amsplain}

\end{document}